\def\l@subsubsection#1#2{}
\renewcommand\onecolumngrid{
\do@columngrid{one}{\@ne}%
\def\set@footnotewidth{\onecolumngrid}
\def\footnoterule{\kern-6pt\hrule width 1.5in\kern6pt}%
}
\renewcommand\twocolumngrid{
        \def\footnoterule{
        \dimen@\skip\footins\divide\dimen@\thr@@
        \kern-\dimen@\hrule width.5in\kern\dimen@}
        \do@columngrid{mlt}{\tw@}
}%
\begin{document}

\title{Non-Universality from Conserved Superoperators in Unitary Circuits}
\date{\today}
\author{Marco Lastres}
\affiliation{Technical University of Munich, TUM School of Natural Sciences, Physics Department, 85748 Garching, Germany}
\affiliation{Munich Center for Quantum Science and Technology (MCQST), Schellingstr. 4, 80799 M\"unchen, Germany}
\author{Frank Pollmann}
\affiliation{Technical University of Munich, TUM School of Natural Sciences, Physics Department, 85748 Garching, Germany}
\affiliation{Munich Center for Quantum Science and Technology (MCQST), Schellingstr. 4, 80799 M\"unchen, Germany}
\author{Sanjay Moudgalya}
\affiliation{Technical University of Munich, TUM School of Natural Sciences, Physics Department, 85748 Garching, Germany}
\affiliation{Munich Center for Quantum Science and Technology (MCQST), Schellingstr. 4, 80799 M\"unchen, Germany}
\email{sanjay.moudgalya@gmail.com}
\begin{abstract}
An important result in the theory of quantum control is the ``universality'' of $2$-local unitary gates, i.e. the fact that any global unitary evolution of a system of $L$ qudits can be implemented by composition of $2$-local unitary gates.
Surprisingly, recent results have shown that universality can break down in the presence of symmetries: in general, not all globally symmetric unitaries can be constructed using $k$-local symmetric unitary gates.
This also restricts the dynamics that can be implemented by symmetric local Hamiltonians.
In this paper, we show that obstructions to universality in such settings can in general be understood in terms of superoperator symmetries associated with unitary evolution by restricted sets of gates.
These superoperator symmetries lead to block decompositions of the operator Hilbert space, which dictate the connectivity of operator space, and hence the structure of the dynamical Lie algebra.
We demonstrate this explicitly in several examples by systematically deriving the superoperator symmetries from the gate structure using the framework of commutant algebras, which has been used to systematically derive symmetries in other quantum many-body systems. 
We clearly delineate two different types of non-universality, which stem from different structures of the superoperator symmetries, and discuss its signatures in physical observables. 
In all, our work establishes a comprehensive framework to explore the universality of unitary circuits and derive physical consequences of its absence.
\end{abstract}
\maketitle
\tableofcontents

\section{Introduction} 
Understanding the landscape of unitary operations that can be implemented with a set of ``elementary" gates has been one of the primary goals of many areas of quantum information theory, and has led to the entire field of quantum control.
This question is also closely related to the notion of \textit{complexity} of a given unitary operation, which has been under intense study with applications to many disparate areas of physics, from quantum computation to high energy physics~\cite{nielsen2005geometric, brown2018second, haferkamp2022linear,  piroli2022random, bulchandani2021smooth}.
In the context of quantum many-body physics, determining the minimum depth of circuits needed to perform certain operations lies at the heart of classifying various phases of matter, such as Symmetry Protected Topological (SPT) phases~\cite{chen2010local, huang2015quantum}, and also leads to interesting notions of Quantum Cellular Automata (QCAs), which have several applications of their own~\cite{lent1993quantum, arrighi2019overview, farrelly2020review}.
The dynamics of unitary evolution by a set of elementary gates is also a question of great interest, e.g., the ensemble of such evolutions is said to form a \textit{$k$-design} at time $t_k$ if it reproduces the $k$-th moments of the ensemble of Haar random matrices~\cite{harrow2009random, brandao2016local, hunterjones2019unitary}. 
These concepts are now also being extended to cases with symmetries~\cite{hearth2023unitary, li2024designslocalrandomquantum}, and various physical consequences of symmetries for unitary dynamics are being explored~\cite{rakovszky2018diffusive, khemani2018operator, rakovszky2019subballistic, friedman2019spectral, huang2020dynamics, zhou2020diffusive, moudgalya2021spectral,  ogunnaike2023unifying, moudgalya2023symmetries}, leading to a rich landscape of possibilities.
Understanding the full class of possible operations given a set of unitary gates in general is a notoriously hard problem.
However, dramatic simplifications occur by imposing two conditions on this problem. 
First, we can understand the set of unitaries generated from a set of $k$-local unitary gates, i.e., those with support on exactly $k$ consecutive sites, without imposing any conditions on the depth of the circuit.
Second, we only specify broad restrictions on the set of elementary gates, e.g., we assume continuous control so that the allowed elementary unitaries are one-parameter families of the form $e^{i \theta h_\alpha}$ for any $\theta$.
Under these conditions, if no constraints are imposed on $\{h_\alpha\}$, it is clear that any unitary operation can always be generated, and this is a standard result in the theory of quantum control~\cite{d2007introduction}.
Restricting instead the set $\{h_\alpha\}$ to only contain symmetric operators (under the action of some symmetry group $G$) one might again expect that \textit{all} symmetric unitaries can be generated starting from a set of strictly local symmetric gates, assuming continuous control and no depth restrictions on the unitary circuit.
Rather surprisingly,  Marvian and collaborators~\cite{marvian2020locality, hulse2021qudit, marvian2022rotationally, marvian2024abelian} have recently found the converse even for simple symmetries such as $U(1)$ and $SU(2)$.
This means that in general there are obstructions to constructing globally symmetric unitaries from strictly $k$-local symmetric sets of gates.
However, these interesting results rely heavily on the group theory of the particular on-site symmetry groups involved, and the general conditions for the appearance of such non-universality given a set of gates is still lacking. 
For example, particular sets of gates can have symmetries that do not have simple on-site group structures, and it is not clear how to describe universality in those cases.
Moreover, questions on the universality under a \textit{subset} of symmetric gates, in particular or the \textit{amount} of non-universality under such conditions remain unanswered.
For example, symmetric gates that have a Gaussian or matchgate structure result in a well-known type of non-universality, since the product of two Gaussian unitary gates is also a Gaussian unitary gate, hence precluding the generation of general symmetric unitary gates. 
In this work, we unify all these kinds of non-universality and present a systematic framework to understand this problem, and illustrate the precise algebraic conditions that guarantee non-universality. 
The core idea we use is that of \textit{commutant algebras}, which has been applied in a variety of settings to understand block decompositions of the Hilbert space under a given set of operations, e.g., in the context of decoherence-free subspaces~\cite{lidar1998decoherence, lidar2003decoherencereview}, virtual subsystems and reference frames~\cite{zanardi2001virtual, bartlett2007reference}, and quantum error correction~\cite{poulin2005stabilizer}.
More recently, this framework has been used to understand various kinds of symmetries in quantum many-body systems~\cite{moudgalya2021hilbert, moudgalya2022from, moudgalya2022exhaustive, moudgalya2023numerical}, and the corresponding block-diagonalization of symmetric operators into quantum number sectors. 
Due to the block diagonalization, the existence of a symmetry can be equivalently interpreted as an obstruction in connectivity of states in the Hilbert space under time evolution by symmetric operators, i.e., in more standard language, states within different quantum number sectors cannot be connected to each other under symmetric evolution.
The novelty in the commutant framework is that the symmetries that lead to block-diagonalization can either be of many different types beyond the conventional on-site symmetry groups usually considered in the literature, e.g., they could be generalized symmetries that have a categorical structure~\cite{lootens2021MPO}, or even more unconventional symmetries generated by non-local operators with no obvious simple underlying structure~\cite{rakovszky2020statistical, moudgalya2021hilbert, moudgalya2022exhaustive}.
These kinds of unconventional symmetries lead to the better understanding of phenomenon of weak ergodicity breaking~\cite{serbyn2020review, moudgalya2021review,  papic2021review, chandran2022review}, where the apparent block-diagonalization of the time-evolution operators cannot be explained by more conventional symmetries~\cite{moudgalya2021hilbert, moudgalya2022from}. 
In this work, we study this problem of non-universality of a set of gates with continuous control in terms of the so-called \textit{Dynamical Lie Algebra} (DLA) of the generators of the unitaries. 
Given a set of unitary gates of the form $\{e^{i \theta h_\alpha}\}$, the DLA is the Lie algebra generated by $\{h_\alpha\}$, obtained by taking nested commutators of the these operators and their linear combinations.
While the study of the DLA is a standard tool in the literature on quantum control~\cite{d2007introduction}, here we view it from the point of view of superoperators that act on the space of operators.
In particular, the DLA is obtained by repeated adjoint actions of the commutators of $\{h_\alpha\}$, which are superoperators, on the set of the generators $\{h_\alpha\}$.
The structure of the DLA can then be completely understood using the connectivity of operator Hilbert space under these adjoint actions, which in turn can then be obtained by studying the superoperator symmetries of the adjoint action.
This is analogous to the fact that the connectivity of the physical Hilbert space under a set of operators can be obtained by studying the symmetries of the operators. 
In this work we show that the origin of non-universality in symmetric systems, demonstrated in the previous works~\cite{marvian2020locality, hulse2021qudit, marvian2022rotationally,  marvian2024abelian} can easily be traced to \textit{superoperator symmetries} of superoperators responsible for time-evolution of operators. 
This allows us to provide a clear criterion for non-universality for \textit{arbitrary} gate sets with continuous control, which also includes cases where the representation theory for the symmetries is not yet well-developed, e.g., for the unconventional symmetries in the context of Hilbert space fragmentation~\cite{moudgalya2021hilbert}.
This also allows us to use previously developed methods~\cite{moudgalya2023numerical} to compute superoperator symmetries, and hence test non-universality, numerically on finite-size systems.
This also leads to a clear two-fold classification for the non-universality of a set of gates, as opposed to a four-fold classification proposed in earlier literature \cite{marvian2024abelian}.
First, the gates can exhibit {weak non-universality}, where the non-universality is mild, and all the superoperator symmetries are derived from the physical symmetries.
This appears to be the generic case, {and it implies \textit{semi-universality}, where the non-universality is only due to the fact that relative phases between symmetry sectors cannot be controlled,} and has been the focus of earlier studies of non-universality~\cite{marvian2020locality, marvian2024abelian}.
Second, the gates can exhibit \textit{strong non-universality}, where superoperator symmetries not derived from the physical symmetries can exist.  
In such cases, non-universality can be observed in observables such as the entanglement entropies and Out-of-Time-Ordered Correlation (OTOC) or higher point correlations functions, and we demonstrate examples of this. %
These two conditions also connect to earlier results obtained in the literature on the quantum simulation of a particular set of gates from another set of gates~\cite{zimboras2015symmetry}, where the problem of dynamical Lie algebras was approached from a different perspective, and we elucidate the precise connections.
This paper is organized as follows.
In Sec.~\ref{sec:bondcommutant}, we review the general framework of commutant algebras and discuss its connections to Hilbert space decomposition. 
Then in Sec.~\ref{sec:superoperatoralgebra}, we discuss the core results of this work, which is the application of the commutant framework to study the connectivity of operators in Hilbert space under the action of certain superoperators, and we illustrate the connection to dynamical Lie algebras and the generation of unitaries.
We also provide numerical methods to study this problem. 
In Sec.~\ref{sec:nonuniversality}, we discuss the implications of these results on the connectivity of operators space to the non-universality of unitary circuits.
There we discuss the two classes of non-universality that have different origins.
Finally, in Sec.~\ref{sec:physicalimplications} we discuss the connection between non-universality and other physical phenomena, with particular reference to Out-of-Time-Ordered-Correlators (OTOCs), Rényi entropies, and the appearence of $k$-designs.
We close in Sec.~\ref{sec:conclusions} with a summary of open questions. 
The appendices provide technical details on various parts of the main text.

\begin{figure}[t]
\includegraphics[width=0.91\columnwidth]{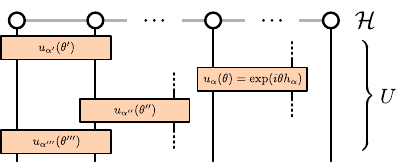}
\caption{Schematic representation of the setup considered for studying non-universality in Sec~\ref{sec:superoperatoralgebra}. Given a set of generators $\{h_\alpha\}$ we study the set of unitaries that can be obtained as arbitrary products of $u_\alpha(\theta)=\exp(i\theta h_\alpha)$. The hermitian generators $\{h_\alpha\}$ need not be local as in the picture presented here.
The main question we study in this work is if this setup can generate \textit{all} global unitaries with the same symmetries as $\{u_\alpha(\theta)\}$.
\label{fig:circuit}}
\end{figure}

\section{Bond and Commutant Algebras}\label{sec:bondcommutant}
We now review the concepts of bond algebras and their associated commutant algebras.
These kinds of objects have been studied in various parts of the quantum information literature~\cite{lidar1998decoherence, zanardi2001virtual, poulin2005stabilizer, bartlett2007reference}, for example, in decoherence-free subspaces~\cite{lidar2003decoherencereview, lidar2014dfs}.
More recently, they were shown to naturally arise in quantum many-body physics when analyzing the symmetries of \textit{families} of Hamiltonians or unitaries that are built starting from a given set of local interactions \cite{moudgalya2021hilbert, moudgalya2022from}.
We will take this quantum many-body physics point of view when introducing the main concepts related to this framework; in the next section we will adapt the language to the question of universality.
\subsection{Definitions}
For concreteness let us consider a finite-dimensional many-body Hilbert space $\mc H=\mc H_\mrm{loc}^{\ot L}$ and Hamiltonians $H(\{J_\alpha\})$ or unitaries $U(\{J_\alpha\})$ of the form
\begin{equation}\label{eq:bondhamiltonian}
H(\{J_\alpha\})=\sum_\alpha J_\alpha H_\alpha,\quad
U(\{J_\alpha\}) = \prod_\alpha e^{i J_\alpha H_\alpha}
\end{equation}
where $\{H_\alpha\}$ is a set of hermitian interaction terms, which we usually take to be strictly local on a lattice, and $\{J_\alpha\}$ is a set of arbitrary real coefficients.
We define the ``commutant algebra'' $\mc C$ associated to the operators defined in \eqref{eq:bondhamiltonian}, as the set of \textit{all} operators on the Hilbert space $\mc H$ that commute with each interaction term
\begin{equation}
    Q\in\mc C \iff [Q,H_\alpha]=0\quad\forall\alpha.
\end{equation}
For any finite system size, the set $\mc C$ is a finite-dimensional complex vector space that contains the identity $\1$, and it is also closed under matrix multiplication and hermitian adjoint: therefore $\mc C$ is a {finite-dimensional} \textit{von Neumann algebra}~\cite{landsman1998lecture, harlow2017}.
This definition provides a generalized notion of symmetry: it is not restricted to the usual on-site symmetry groups generated by local charges $Q=\sum_{j=1}^L Q_j$, but instead includes any symmetry operator compatible with the local structure of the Hamiltonian or the unitary gates.
If we wish to emphasize the initial set of gates, in the following we will use one of the following alternative notation:
\begin{equation}
    \mc C\defeq\mrm{comm}(\{H_\alpha\}).
\end{equation}
Together with the commutant algebra we can define the ``bond algebra'' $\mc A$ to be the algebra generated by the interaction terms themselves:
\begin{equation}
    \mc A\defeq\llangle\{H_\alpha\}\rrangle
\end{equation}
where the notation $\llangle\cdot\rrangle$ indicates the von Neumann algebra generated by the inner set (by including the identity $\1$ and performing linear combinations, multiplications, and hermitian adjoints).
It can be easily seen that all operators in $\mc A$ still commute with all operators in $\mc C$, but a stronger statement is actually true: $\mc A$ is the set of \textit{all} operators that commute with $\mc C$.
This is due to the Double Commutant Theorem for von Neumann algebras \cite{landsman1998lecture, harlow2017, moudgalya2022from}, which states that for hermitian $\{H_\alpha\}$
\begin{equation}\label{eq:doublecomm}
    \mrm{comm}(\mrm{comm}(\{H_\alpha\}))=\llangle\{H_\alpha\}\rrangle.
\end{equation}
This crucial fact illustrates the symmetry between $\mc A$ and $\mc C$ in the statements below.
For example, notice that the algebra $\mc Z\defeq\mc A\cap\mc C$ is the center\footnote{The center of an algebra is the set of all operators in the algebra that commute with all other operators in the algebra.} of both $\mc A$ and $\mc C$ (we will therefore refer to it as ``the center'').

We can briefly illustrate this framework by showing how it represents conventional symmetries \cite{moudgalya2022from}. 
Focusing on a Hilbert space of spin-1/2 d.o.f.'s with $\mH_{\loc} = \mathbb{C}^2$, the symmetry algebra generated by a single $U(1)$ global charge of the form $N_\mrm{tot}=\frac{1}{2}\sum_j (\1+Z_j)$ can be understood as the commutant of the algebra $\bondx{U(1)}$ of operators generated by a set of 2-local interaction terms:
\begin{equation}
    \bondx{U(1)}=\llangle\{X_jX_{j+1}+Y_jY_{j+1},\, Z_jZ_{j+1},\, Z_j\}\rrangle.
\end{equation}
The commutant algebra has the form 
\begin{equation}
    \commx{U(1)} = \llangle\{N_\mrm{tot}\}\rrangle=\mrm{span}(\{\1,N_\mrm{tot},N_\mrm{tot}^2,...\}).
\label{eq:CU1}
\end{equation}
An orthogonal basis of $\commx{U(1)}$ is given by the projectors onto the charge sectors $N_\mrm{tot}=n$
\begin{equation}
    \commx{U(1)}=\mrm{span}(\{\Pi_n\}_{n=0}^{L}),\label{eq:U1basis}
\end{equation}
and therefore $\mrm{dim}(\commx{U(1)})=L+1$.

Since the $U(1)$ symmetry is Abelian, the commutant algebra $\mC_{U(1)}$ coincides with its center.
This is no longer the case for non-Abelian groups such as $SU(2)$; here the commutant is the algebra generated by the symmetry operators $\commx{SU(2)} = \lgen \{S^\alpha_{\tot} \defn \sum_j{S^\alpha_j}\}\rgen$, which is the universal enveloping algebra of the $\mf s\mf u(2)$ Lie algebra.
This algebra can be shown to be the commutant of the bond algebra $\bondx{SU(2)}=\llangle\{\vec S_j\cdot\vec S_{j+1}\}\rrangle$ \cite{moudgalya2022from},  which is related to the group algebra of the permutation group $S_L$; this is related to the Schur-Weyl duality~\cite{fulton2013representation}.
The center of the two algebras is generated by the Casimir element $\centx{SU(2)}=\lgen\{ S_\mrm{tot}^2 \}\rgen$.

We refer readers to Ref.~\cite{moudgalya2022from} for several additional examples of conventional symmetries studied in the language of commutant algebras.

\subsection{Hilbert Space Decomposition}
A fundamental property of finite-dimensional von Neumann algebras is that, given a pair of algebras $(\mc A,\mc C)$ that are each other's commutant, then the Hilbert space on which they act can be decomposed as~\cite{harlow2017, moudgalya2021hilbert}\footnote{This is also a consequence of the Wedderburn–Artin theorem~\cite{basicalgebra}.}
\begin{equation}
\mc H=\bigoplus_{\lambda}\left(\mc H_\lambda^{\mc A}\ot\mc H_\lambda^{\mc C}\right)\label{eq:fund-th}
\end{equation}
where the abstract spaces $\mc H_\lambda^{\mc A}$ (resp. $\mc H_\lambda^{\mc C}$) correspond to inequivalent irreducible representations of $\mc A$ (resp. $\mc C$). We define $\mc H_\lambda\defeq\mc H_\lambda^{\mc A}\ot\mc H_\lambda^{\mc C}$.
This decomposition means that for each $\lambda$ in the direct sum there is a tensored basis:
\begin{equation}
    \{\ket{\alpha}_\lambda\ot\ket{\gamma}_\lambda\}_{\substack{\alpha=1,...,D_\lambda\\\gamma=1,...,d_\lambda\,}}\label{eq:matbasis}
\end{equation}
such that operators in $\mc A$ (resp. $\mc C$) only act on the first (resp. second) factor in the product; in other words, elements $ K\in \mc A$ and $ Q\in \mc C$ of the algebras have the following matrix form:
\begin{equation}\label{eq:matrixrep}
   \begin{split}
   K=\bigoplus_\lambda(M_{\lambda}(K)\ot \1_{d_\lambda})\\ Q=\bigoplus_\lambda(\1_{D_\lambda}\ot N_{\lambda}(Q))
   \end{split}
\end{equation}
where $M_\lambda(\cdot)$ and $N_\lambda(\cdot)$ are $D_\lambda$- and $d_\lambda$-dimensional irreducible representations of $\mc A$ and $\mc C$ respectively. As a consequence elements $Z\in\mc Z$ of the center are
\begin{equation}\label{eq:matrixrepcenter}
    Z=\bigoplus_{\lambda}c_\lambda(Z)(\1_{D_\lambda}\ot \1_{d_\lambda}),\quad c_\lambda(Z)\in\mb C
\end{equation}
and in particular a linear basis for the center is given by the projectors $\{\Pi_\lambda\}$ onto the $\mc H_\lambda^{\mc A}\ot\mc H^{\mc C}_\lambda$ subspaces in Eq.~\eqref{eq:fund-th}.
The fact that an $r$-dimensional representation $R(\cdot)$ of such an algebra is irreducible means that any complex $r\times r$ matrix can be represented as $R(O)$ for a given element $O$ of the algebra.
By exploiting Eq. \eqref{eq:matrixrep}, it is then evident that the dimensions of the algebras, i.e., the number of linearly independent elements, are given by~\cite{moudgalya2021hilbert}
\begin{equation}
    \mrm{dim}(\mc A)=\sum_\lambda D_\lambda^2,
    \qquad
    \mrm{dim}(\mc C)=\sum_\lambda d_\lambda^2.
\end{equation}
Since the Hamiltonians $H(\{J_\alpha\})$ and the unitaries $U(\{J_\alpha\})$ of Eq.~(\ref{eq:bondhamiltonian}) belong to the bond algebra $\mc A$, this theorem tells us that all the Hamiltonians and unitaries in the family can simultaneously be block-diagonalized according to Eq.~\eqref{eq:matrixrep}.
Hence the dynamics described by these Hamiltonians or unitaries preserve a shared set of invariant subspaces (also sometimes referred to as \textit{Krylov subspaces}) of the form $\mc H_\lambda^{\mc A}\ot\mrm{span}\{\ket\gamma_\lambda\}$ for any $\ket\gamma_\lambda\in\mc H_\lambda^{\mc C}$.
If the commutant is generated by a conventional symmetry \textit{group}, the Krylov space decomposition corresponds to the decomposition into irreps of the group. 
For the $U(1)$ case of Eq.~\eqref{eq:CU1}, the index $\lambda$ is simply the eigenvalue of the global charge $Z_{\tot}$, and $\forall\lambda:d_\lambda=1$.\footnote{The irreps $\mc H_\lambda^{\mc C}$ of any Abelian commutant are always one-dimensional.}
In the more general case of a non-Abelian commutant the representations $\mc H_\lambda^{\mc C}$ can have $d_\lambda>1$, and some Krylov subspaces can therefore be degenerate\footnote{When $d_\lambda\defeq\mrm{dim}{\mc H_\lambda^{\mc C}}>1$, there is no unique way to decompose $\mc H_\lambda^{\mc A}\ot\mc H_\lambda^{\mc C}$ into $d_\lambda$ separate Krylov subspaces; these are said to be \textit{degenerate}.}.
For example, if we consider a conventional $SU(2)$ symmetry, the index $\lambda$ is the total spin of states the irrep, so that
\begin{equation}\label{eq:spincomm}
    {S_\mrm{tot}^2}_{|\lambda} = \lambda(\lambda+1)\cdot\1_\lambda \qquad d_\lambda=2\lambda+1.
\end{equation}
In general, $\lambda$ will always be related to the eigenvalues of the operators in the center $\mc Z$. If the Hilbert space contains multiple group irreps with the same value for the quantum numbers $\lambda$, this will correspond to having $D_\lambda>1$.
\subsection{Connectivity of the Hilbert space}\label{subsec:connectivity}
In addition to symmetry groups and their associated quantum number sectors, this framework is also able to capture a more diverse set of symmetries, such as the ones responsible for Hilbert space fragmentation (where $\mrm{dim}(\mc C)$ grows exponentially with the system size)~\cite{moudgalya2021hilbert} and exact quantum many-body scars (which correspond to one-dimensional ($D_\lambda=1$) Krylov subspaces)~\cite{moudgalya2022exhaustive}.
Indeed the structure of commutant algebras provides a general approach to studying the \textit{orbit} of states within a Hilbert space under the repeated action of the terms $\{H_\alpha\}$ (or any operator in the bond algebra).
For example, the states within a given Krylov subspace cannot evolve to states belonging to different subspaces, but they can evolve to \textit{any} state belonging to the same subspace\footnote{The action of the algebra is said to be \textit{transitive} within each Krylov subspace.}.
In other words, Krylov subspaces identify separate sets of states that can never be connected to each other through symmetric time evolution: the dynamics of a given initial state $\ket\psi$ can be determined by only considering the subspaces that are not orthogonal to $\ket\psi$, since its evolution will be restricted to the direct sum of these subspaces (the state $\ket\psi$ is said to ``overlap'' or to have ``non-zero weight'' on such subspaces).
This can be understood easily in the matrix notation of Eqs.~\eqref{eq:matbasis} and \eqref{eq:matrixrep}.
Suppose that for some $\lambda$ and $\ket\gamma_\lambda\in\mc H_\lambda^{\mc C}$, a state $\ket\psi$ is such that
\begin{equation}
    \big(\prescript{}{\lambda}{\bra{\alpha}}\ot\prescript{}{\lambda}{\bra{\gamma}}\big)\ket{\psi}=0, \quad\forall\ket\alpha_\lambda\in \mc H_\lambda^{\mc A}.
\end{equation}
Then due to the structure of the matrices in Eq.~\eqref{eq:matrixrep}, the state $\ket\psi$ can never be evolved to any state in the associated Krylov subspace $\mc H_\lambda^{\mc A}\ot \ket\gamma_\lambda$ through the action of the bond algebra $\mc A$. This property fully characterizes which Krylov subspaces can be accessed by the initial state $\ket\psi$.\footnote{Indeed by performing a Schmidt decomposition along the tensor product $\mc H_\lambda^{\mc A}\ot\mc H_\lambda^{\mc C}$ of the state $\Pi_\lambda\ket\psi=\sum_l b_l\ket{\alpha_l}_\lambda\ot\ket{\gamma_l}_\lambda$, we see that we can evolve $\ket\psi$ to $\ket\alpha_\lambda\ot\ket{\gamma_l}_\lambda$ by choosing $K\in\bond$ such that $M_\lambda(K)=\ket{\alpha}_\lambda\!\bra{\alpha_l}$.}

This is the main feature that renders commutant algebras suitable for the study of universality of operators, which can simply be formulated as a question of connectivity in the Hilbert space of operators.

\section{The Superoperator Algebra Approach to (Non-)Universality}\label{sec:superoperatoralgebra}
%
We now discuss the application of the commutant algebra framework to understand the non-universality of any given set of gates -- or equivalently, to the calculation of the so-called dynamical Lie algebras~\cite{d2007introduction}.
The main difference between the frameworks described here and those in the previous section is that instead of algebras of operators that act linearly on Hilbert space of states, here we will be interested in the algebras generated by superoperators that act linearly on the Hilbert space of operators.
For easy reference, in Tab.~\ref{tab:names} we have summarized the list of symbols that we use in this generalization.
\subsection{Dynamical Lie Algebras (DLAs)}\label{subsec:DLA}
In its most general form, the problem we wish to study can be stated as follows: given a finite set of hermitian operators $\gen =\{h_\alpha\}_{\alpha=1}^N$ acting on a finite-dimensional Hilbert space $\mc H$, we must find the set $\unit$ of all unitary operators generated from these operators,  i.e.,  any $U \in \unit$ of the form (see Fig.~\ref{fig:circuit})
\begin{equation}
    U=\prod_{k} u_{\alpha_k}(\theta_k),\qquad u_\alpha(\theta)\defeq\exp(i\theta h_\alpha),\label{eq:localgates}
\end{equation}
where $\theta_k\in\mb R$,  providing continuous control over the space of unitaries that can be generated from any operator in $\gen$. 
Due to the Baker-Campbell-Haussdorff formula, all unitary operators of this form can be written as
\begin{equation}
    U=\exp(iH),\quad H\in\dlie,\label{eq:DLAvsGroup}
\end{equation}
where $\dlie$ is the Lie algebra\footnote{Although $\dlie$ can be defined as a real Lie algebra (by replacing the generators $\{h_\alpha\}$ with their anti-hermitian counterpart $\{ih_\alpha\}$) it is useful to extend the scalar field to the complex numbers. This operation does not affect our considerations on universality, as long as one takes $H$ to be hermitian in Eq.~\eqref{eq:DLAvsGroup}.} generated by the set $\gen$, which is the vector space spanned by $\gen$ and nested commutators of operators in $\gen$ (i.e. $[h_{\alpha_1},h_{\alpha_2}]$, $[h_{\alpha_1},[h_{\alpha_2},h_{\alpha_3}]]$, etc.).
An important theorem in the theory of quantum control~\cite{d2007introduction} states that the converse holds: all unitaries of the form Eq.~\eqref{eq:DLAvsGroup} belong to $\unit$; in this context, the Lie algebra $\dlie$ is called the ``dynamical Lie algebra'' (DLA) of the set of generators $\gen$. 
By abusing terminology we will sometimes refer to $\gen$ as the gate set itself.
To study the universality of a set of quantum gates, one usually considers a many-body quantum system $\mc H=\mc H_\mrm{loc}^{\ot L}$, which possesses some notion of locality.
For example we may consider the local qudit degrees of freedom to be arranged as a chain; then to produce the quantum gates $u_\alpha(\theta)$ we will consider a spatially homogeneous set of $k$-local generators, i.e. operators that act non-trivially at most on $k$ consecutive qudits (and act as the identity everywhere else).
Circuits of this kind can for example produce unitary time evolutions generated by Hamiltonians composed of local interaction terms (e.g. through the Trotter decomposition).
Although this fundamental problem has been studied from many different points of view \cite{d2007introduction}, in this work we will focus on the recent results that show a link between symmetry and the non-universality of gates~\cite{marvian2020locality, hulse2021qudit, marvian2022rotationally, marvian2024abelian}.
These results imply that when $\{h_\alpha\}$ are chosen to be $k$-local for $k < L$ and symmetric under some types of on-site symmetric unitary Lie groups $G$, then the unitaries of the form of Eq.~(\ref{eq:DLAvsGroup}) do not even exhaust the complete set of \textit{symmetric} unitaries, and are therefore a non-universal gate set.
In order to provide a more general perspective on the origin of this non-universality for arbitrary gate sets, we will focus on the generating set $\gen$ directly, instead of considering all possible symmetric gates for a given group $G$.
In this general setting, non-universality is the situation where the space of generateable operators $\dlie$ is not equal to the space $\bond=\llangle\gen\rrangle$ of operators that are $\comm$-symmetric, i.e. that commute with all operators in $\comm$.
In other words
\begin{equation}
    \gen\mrm{\ is\ not\ universal} \iff \dlie \subsetneq \bond.
    \label{eq:def-non-univ}
\end{equation}
For the associated unitaries, non-universality implies that $\unit$ is a strict subgroup of the group $\unitt$ of all the symmetric unitaries.
In some cases, whenever the set of generated unitaries $\unit$ is \textit{compact}, it is not even possible to approximately obtain the missing unitaries;
this condition is always satisfied when the set $\gen$ consists only of generators $h_\alpha$ that have rational spectra, so that $u_\alpha(\theta)$ will constitute a compact $U(1)$ subgroup of the set of all unitaries.\footnote{See App.~\ref{app:compact} for more details on the question of compactness.}

Past works \cite{marvian2020locality,hulse2021qudit,marvian2022rotationally,marvian2024abelian,kazi2024permutationinvariant} always consider $\gen$ to be the set of \textit{all} $k$-local hermitian generators that are symmetric under some internal or spacetime symmetry group $G$ of the many-body system; in such systems $\comm$ is the associative algebra generated by the generators of the group $G$.
Here we work with general sets of operators $\gen$, which need not have any particular form or range, and while in our case $\comm$ could simply be generated by a group $G$, our statements will hold true also for more unconventional types of symmetries described by a commutant algebra $\comm$ which need not correspond to any group structure~\cite{moudgalya2021hilbert, moudgalya2022exhaustive}.
\begin{figure}[t]
\includegraphics[width=0.91\columnwidth]{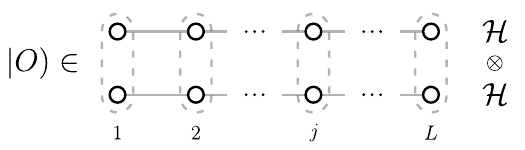}
\caption{The operator Hilbert space $\hend \defeq \mrm{End}(\mc H)$ can be interpreted as a ladder Hilbert space through the Liouvillian isomorphosm with $\mc H\ot\mc H$. \label{fig:ladder}}
\end{figure}

\subsection{Liouvillian Superoperators}
To express the nested commutators that are used to generate the DLA $\dlie$ more compactly, we introduce the adjoint map $\ad K$, that acts on any operator $O$ as
\begin{equation}
    \ad K \cdot O \defeq [ K , O ].
\label{eq:adjoint}
\end{equation}
In this notation, $\dlie$ is the vector space spanned by operators of the form $\left(\prod_k \ad{h_{\alpha_k}}\right)\cdot h_{\beta}$.
Objects of the form $\ad K$ can be represented as \textit{superoperators}, since they are linear operators that act on operators of a physical Hilbert space.
Such superoperators are often indicated in the Liouvillian notation, whereby operators are represented as states on a doubled Hilbert space through an  isomorphism:
\begin{equation}
    \begin{gathered}
        O=\sum_{\mu_1,\mu_2} O_{\mu_1\mu_2}\ketbra{\mu_1}{\mu_2} \\ \oket{O}\defeq\sum_{\mu_1,\mu_2} O_{\mu_1\mu_2} \ket{\mu_1}\!\ket{\mu_2},
    \end{gathered}\label{eq:liouvillian}
\end{equation}
where $\{\ket\mu\}$ is a given orthonormal basis for $\mc H$.
The vector space $\hend\defeq\mrm{End}(\mc H)$ of all operators $\oket{O}$ is a Hilbert space, and the inner product of two operators is defined as $\obraket{O_1}{O_2}=\mrm{tr}(O_1\+O_2)$.
In this notation the adjoint map of Eq.~(\ref{eq:adjoint}) can be written as:
\begin{equation}\label{eq:liouvillnot}
\begin{gathered}   
    \ad{K} = \ K\ot \1-\1\ot K^T,\\
    \ad{K}\oket O=\ \oket{[K,O]},
\end{gathered}
\end{equation}
since $K\ot\1\oket O=\oket{KO}$ and $\1\ot K^T\oket O=\oket{OK}$.
When taking locality into account, it is convenient to interpret the operator Hilbert space $\hend$ as a ladder~\cite{moudgalya2023symmetries}, where at each site $j$ one finds two copies of the local Hilbert space $\mc H_\mrm{loc}$ (see Fig.~\ref{fig:ladder}).
The adjoint operator $\ad{K}$ is the sum of an operator acting on the top leg (i.e. $K\ot\1$) and of an operator acting on the bottom leg (i.e. $-\1\ot K^T$).

\begin{table}[t!]
    \centering
    \renewcommand{\arraystretch}{1.5}
    \begin{tabular}{c c c}
         & \textbf{Operator} & \textbf{Superoperator} \\ \hline
         \textit{Hilbert space} & $\hend\defeq\mrm{End}(\mc H)$ & $-$ \\
         \textit{Generating set} & $\gen \defeq\{h_\alpha\}$ & $\{\ad{h_\alpha}\defeq[h_\alpha,\bullet]\}$ \\
         \textit{Bond algebra} & $\bond \defeq \llangle\{h_\alpha\}\rrangle$ & $\sbond\defeq\llangle\{\ad{h_\alpha}\}\rrangle$ \\
         \textit{Lie algebra} & $\dlie\subseteq\bond$ & $-$ \\
         \textit{Commutant} & $\comm\defeq\mrm{comm}(\bond)$ & $\scomm\defeq\mrm{comm}(\sbond)$ \\ \hline
    \end{tabular}
    \caption{Summary of the symbols introduced in this work. We essentially generalize the concepts of bond and commutant algebras for operators (consisting of symmetries) to super-bond and super-commutant algebras of superoperators (the latter consisting of superoperator symmetries).}
    \label{tab:names}
\end{table}

Working in this superoperator language is the crucial step which will allow us to use the strength of the commutant algebra formalism of Sec.~\ref{sec:bondcommutant} to study dynamical Lie algebras.
We call $\sbond$ the associative algebra generated by the adjoint action of the generators\footnote{We note that $\sbond$ can also be defined as the universal enveloping algebra of $\dlie$ in the representation $\ad{(\bullet)}$. Therefore if two sets of generators are such that $\gen\neq\gen'$ but $\dlie=\dliex{\gen'}$, all our considerations regarding these algebras will be identical (see Lemma \ref{lem1} in App.~\ref{sec:app-math})}
\begin{equation}
	\sbond \defeq \llangle\{\ad{h_\alpha}\}_{h_\alpha\in\gen}\rrangle,\label{eq:sbond-def}
\end{equation}
and we call $\scomm$ its commutant\footnote{These are not to be confused with the $\bond$ and $\comm$ algebras at operator level.}
\begin{equation}
	\scomm \defeq\mrm{comm}(\sbond)= \{\mc Q:\, [\ad{h_\alpha},\mc Q]\ \forall \alpha\},\label{eq:scomm-def}
\end{equation}
see Fig.~\ref{fig:venn-diagrams} for an illustration of how these sets relate to each other, and their action on operators in the original bond algebra $\mA_{\mc G}$.
A summary of the symbols introduced so far is shown in Table \ref{tab:names}.
It is easy to see that the Lie algebra generated by $\gen$ is obtained by acting with elements of $\sbond$ (which are compositions of adjoint maps in the original Hilbert space, now denoted as superoperators that act as operators on a doubled Hilbert space) on linear combinations of elements of $\gen$ (which are operators in the original Hilbert space, now denoted as states on a doubled Hilbert space), or in other words:
\begin{equation}\label{eq:DLAgeneration}
    \dlie=\{\mc K\,\oket H:\,\mc K\in\sbond,\ H\in\mrm{span}(\gen)\}.
\end{equation}
Due to the fact that these are sets of superoperators, we will call $\sbond$ and $\scomm$ the \textit{super-bond algebra} and the \textit{super-commutant} respectively\footnote{These sets can actually be shown to be {finite-dimensional} von Neumann algebras under \textit{two} independent product operations, cf.  App.~\ref{sec:app-manycopy}.}.
In the following section we directly show how this equation leads to a structural connection between superoperator symmetries and the DLA $\dlie$. A concrete example (i.e. the non-universality of local $U(1)$ gates) is later discussed in detail in Sec.~\ref{subsec:U1example}.
\subsection{Methodology}\label{sec:methodology}
\begin{figure*}[t]
\stackon[5pt]{\includegraphics[width=0.47\textwidth]{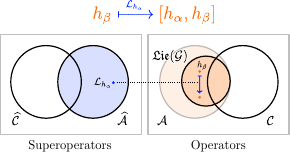}}{\textsf{(a)}}
\hspace{4em}
\stackon[5pt]{\includegraphics[width=0.3\textwidth]{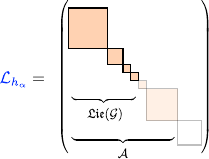}}{\textsf{(b)}}
\caption{Diagram of the operator and superoperator algebras under consideration. For simplicity, we have omitted most references to the generators $\gen$. (a) The superoperators in $\sbond$ are generated by the adjoint superoperators $\ad{h_\alpha}$, and their action on the generators $\{h_\beta\}$ spans the DLA $\dlie$.
The algebra generated by $\{h_\beta\}$ is $\bond$, and its commutant $\comm$ is the same as $\dlie$'s commutant; $\comm$ is the set of conventional symmetries of the set of gates. (b) By writing an element of $\sbond$ according to the general block-decomposition of equation \eqref{eq:matrixrep}, we can see that $\dlie$ and $\bond$ are a direct sum of Krylov subspaces. This is because: both algebras are invariant under the action of $\sbond$, and the action of $\sbond$ is irreducible on every Krylov subspace. \label{fig:venn-diagrams}}
\end{figure*}
Eq.~\eqref{eq:DLAgeneration} turns our original question regarding DLAs into a connectivity problem: we wish to ask how the operators in $\gen$ evolve under the ``dynamics" described by the action of superoperators in the super-bond algebra $\sbond$.
This involves determining the invariant subspaces (also referred to as Krylov subspaces) of the super-bond algebra, and determining the subspaces in which at least some operators in $\gen$ have non-zero overlap.
The direct sum of these subspaces is precisely $\dlie$, and hence the origin of any non-universality can be attributed to the properties of these invariant subspaces, which in turn are understood using the superoperator symmetries in the super-commutant $\scomm$.
This is completely analogous to what happens at the state level for quantum systems discussed in Sec.~\ref{subsec:connectivity}, where the orbit of states under the ``dynamics" described by the action of operators in the bond algebra $\bond$ is understood using its invariant subspaces and the symmetries in its commutant $\comm$.
We start by characterizing $\sbond$ and $\scomm$, and by decomposing the Hilbert space of operators $\hend$, according to the fundamental theorem \eqref{eq:fund-th}:
\begin{equation}
	\hend =\bigoplus_{{\widehat\lambda}}\left(\hend_{\widehat\lambda}^{\sbond}\ot\hend_{\widehat\lambda}^{\scomm}\right).\label{eq:hilbdecend}
\end{equation}
We then need to identify all the subspaces in which at least one of the generators $\oket{h_\alpha}$ in $\gen$ has a non-zero weight.
All the operators belonging to such subspaces can be generated by repeated actions of the Liouvillians $\{\mc L_{h_\alpha}\}$, and hence they span the dynamical Lie Algebra $\dlie$.  
To do so we define $\mc P_{\widehat\lambda}\in\scent$ to be the projection superoperator onto the subspace labelled by ${\widehat\lambda}$ and, within each $\widehat\lambda$, we identify the smallest vector space $\widehat V_{\widehat\lambda}\subseteq\hend_{\widehat\lambda}^{\scomm}$ such that
\begin{equation}\label{eq:project-to-tensor}
	\forall \oket{h_\alpha}\in\gen: \mc P_{\widehat\lambda}\oket{h_\alpha}\in \left(\hend_{\widehat\lambda}^{\sbond} \ot \widehat V_{\widehat\lambda}\right).
\end{equation}
If we choose a basis $\{\oket{v}_{\widehat\lambda}\}_v$ of $\widehat V_{\widehat\lambda}$, we can think of $\{\hend_{\widehat\lambda}^{\sbond}\ot\oket v_{\widehat\lambda}\}_{{\widehat\lambda},v}$ as the set of subspaces on which the generators $\gen$ have non-zero weight. Finally, thanks to the irreducilibity of the action of $\sbond$ on $\hend_{\widehat\lambda}^{\sbond}$, we find that $\dlie$ is the direct sum of these subspaces
\begin{equation}
    \dlie=\bigoplus\!\,_{\widehat\lambda}\left(\hend_{\widehat\lambda}^{\sbond} \ot \widehat V_{\widehat\lambda}\right).
\end{equation}
Note that in the decomposition of Eq.~(\ref{eq:hilbdecend}), $\hend$ contains both symmetric and non-symmetric operators, but for the calculation of the DLA we are only interested in computing the connectivity of the set of symmetric operators since the generators have no weight outside of this subspace.
The set of symmetric operators is simply $\bond$, which is invariant under the action of $\sbond$.
Hence we can always choose to restrict the scope of our Hilbert space decomposition in Eq.~\eqref{eq:hilbdecend} from $\hend$ to $\bond$ only.
\subsection{Numerical Methods}
The formulation of this problem in terms of commutant algebras also allows us to use numerical methods introduced in Ref.~\cite{moudgalya2023numerical} to compute the super-commutant algebra $\scomm$ or even the whole decomposition \eqref{eq:hilbdecend} of the Hilbert space for small systems. 
Since proving the complete structure of $\scomm$ can be challenging even for simple gate sets $\gen$, these numerical methods end up being a very useful tool for obtaining partial information on the problem.
A detailed discussion of them can be found in App.~\ref{sec:app-num}.
One method obtains an Matrix Product State (MPS)~\cite{perezgarcia2007matrix, cirac2021matrix} representation for $\scomm$ by using the fact that it can be expressed as the ground states of a frustration-free Hamiltonian; the time complexity of this method scales with the dimension of $\scomm$, and it is therefore useful when this grows algebraically as a function of the system size $L$.
Another method explicitly builds the block-decomposition \eqref{eq:matrixrep} for elements of $\sbond$, and its complexity scales exponentially with $L$, and is hence useful for small system sizes.
This method can however be specialized to the problem of calculating the DLA by restricting the space of interest from the total operator Hilbert space $\hend$ to the set of symmetric operators $\bond$, or even to the set of operators acting on a single symmetry sector $\mc H_\lambda$ of the physical Hilbert space. 
For gate sets which possess a large commutant, this step can greatly reduce the complexity of the calculations.
From the block-decomposition the DLA and its dimension can then be computed directly, by calculating the overlap between each block and the generators.

Since these techniques work with superoperator symmetries, their performance is not affected by the size of the DLA $\dlie$. This fact is especially useful for the MPS method, as it performs well when the super-commutant $\scomm$ is small -- a situation that typically corresponds to large DLAs. Since typically a gate set is expected to have a certain minimal amount of superoperator symmetries, this allows one to rapidly decide whether the given gate set has some special features or not (i.e. \textit{strong non-universality}, see Sec.~\ref{sec:nonuniversality}).
\subsection{Example: $U(1)$ symmetric circuits}\label{subsec:U1example}
We now illustrate the origin of non-universality in the commutant language, while using $U(1)$ symmetric circuits as a concrete example, which was first shown to be non-universal by Marvian~\cite{marvian2020locality}.
This serves as an introduction for the more general phenomenology of non-universality discussed in the subsequent sections.
Let us consider the case of $2$-local $U(1)$-symmetric gates on a chain of $L$ qubits, which can be generated using exponentials of the following set of generators:
\begin{equation}
    \gen=\gen_{U(1)} \defeq \{X_j X_{j+1} + Y_j Y_{j+1},\, Z_j Z_{j+1},\, Z_j\}_{j=1,...,L}
\label{eq:U1generators}
\end{equation}
where for simplicity we chose periodic boundary conditions $L+1=1$. From Eq.~\eqref{eq:liouvillnot}, it is clear that any operator of the form $Q_1\ot Q_2^T$ for $Q_{1,2}\in\comm= \commx{U(1)} = \llangle\{N_\mrm{tot}\}\rrangle$ [see Eq.~(\ref{eq:CU1}) for an expression of $\commx{U(1)}$] belongs to the super-commutant $\scommx{U(1)}$.
In addition, for any choice of $\gen$, the projector $\oketbra{\1}{\1}$ also belongs to $\scomm$ since
\begin{equation}
    \begin{gathered}
        \ad{K}\oket{\1}=\oket{[K,\1]}=0\\
        \implies[\ad{K},\oketbra{\1}{\1}]=\ad{K}\oketbra{\1}{\1}-\oketbra{\1}{\1}\ad{K}=0.
    \end{gathered}
    \label{eq:liouvannihilate}
\end{equation}
These operators generate the full super-commutant in the $U(1)$ case (see App.~\ref{sec:app-nplusoneproof} for a proof):
\begin{equation}
    \scommx{U(1)}=\llangle(\commx{U(1)}\ot\commx{U(1)}^T)\cup\{\oketbra{\1}{\1}\}\rrangle.
\label{eq:U1supercomm}
\end{equation}
Note that due to algebraic closure, this also means that operators of the form $\oketbra{Q_2}{Q_1}$ for $Q_{1,2} \in \commx{U(1)}$ also belong to $\scommx{U(1)}$, since $\oketbra{Q_2}{Q_1} = Q_2 \ot \1 \oketbra{\1}{\1}Q_1\ot\1$.
In the $U(1)$ case, since the projectors $\{\Pi_n\}_{n=0,...,L}$ onto the charge sectors $N_\mrm{tot}=n$ (as in Eq.~\eqref{eq:U1basis}) form a basis for the commutant $\commx{U(1)}$, a linear basis for the super-commutant $\scommx{U(1)}$ is given by
\begin{equation}
    \{\Pi_n\ot\Pi_m,\oketbra{\Pi_n}{\Pi_m}\}_{n,m=0,...,L}.
\end{equation}
Note that $\Pi_n\ot\Pi_n=\oketbra{\Pi_n}{\Pi_n}$ when $n=0$ or $n=L$, so we have $\mrm{dim}(\scommx{U(1)})=2L^2+4L$.

As we will discuss in the next section, this is the simplest possible structure for the super-commutant: the $\commx{U(1)}\ot\commx{U(1)}^T$ part is directly inherited from the $U(1)$ symmetry of the gates, while $\oketbra{\1}{\1}$ is the projector onto a one-dimensional Krylov subspace spanned by $\oket{\1}$ that is invariant under the action of the super-bond algebra generators $\ad{K}\in\sbondx{U(1)}$. 
Since one-dimensional Krylov subspaces in the physical Hilbert space are referred to as quantum many-body scars~\cite{moudgalya2022exhaustive}, we sometimes refer to projectors such as $\oketbra{\1}{\1}$ as \textit{scar projectors}.
Indeed, the operators $\oket{Q}$ for $Q\in\comm$ can be interpreted as quantum many-body scars in operator space, i.e., w.r.t. the decomposition of Eq.~(\ref{eq:hilbdecend}), which is consistent with the fact that symmetry operators can be interpreted as frustration-free ground states of local superoperators~\cite{moudgalya2023symmetries}.
As we will show, these scars are responsible for non-universality, while the $\commx{U(1)} \otimes \commx{U(1)}^T$ part of the super-commutant simply distinguishes symmetric operators from non-symmetric ones (hence identifying the subspaces spanned by $\bondx{U(1)}$).

Let us focus on the conventional part of the super-commutant first.
The Hilbert space of operators can be decomposed according to the $\commx{U(1)}\ot\commx{U(1)}^T$ quantum numbers as follows:
\begin{equation}
    \begin{gathered}
        \hend=\bigoplus_{n,m=0}^L \hend_{n,m},\\
        \hend_{n,m}=\{O\in\hend:\ N_\mrm{tot}O=nO,\ ON_\mrm{tot}=mO\}.
    \end{gathered}\label{eq:nmdec}
\end{equation}
In other words, operators in $\hend_{n,m}$ are eigenvectors of both he left and right action of the commutant $\commx{U(1)}$ on the operator space $\hend$. These operators map states with $m$ spin-ups to states with $n$ spin-ups.
The set of all $U(1)$-symmetric operators, which is also the bond algebra $\bondx{U(1)}$, are those whose actions preserve the charge, and hence are obtained by restricting the above sum to $n=m$.

In terms of irreps of the full $\sbondx{U(1)}\ot\scommx{U(1)}$ (given by the fundamental theorem \eqref{eq:fund-th}), each $\hend_{n,n}$ further splits into a scar $\{\oket{\Pi_n}\}$ and its orthogonal complement $\hend_n^*$.
This can be seen by a reasoning similar to Eq.~\eqref{eq:liouvannihilate}.
We therefore have \textit{one scar per symmetry sector} of the physical quantum system, which we find is a general feature of Abelian symmetries.
Since the Liouvillians $\{\ad{h_\alpha}\}$ for $h_\alpha \in \gen_{U(1)}$ annihilate the scars (see Eq.~(\ref{eq:liouvannihilate})), they are degenerate one-dimensional representations and span the subspace $\hend_\mrm{scar}=\mrm{span}(\{\oket{\Pi_n}\}_{n=0,...,L})$.
The decomposition of the bond algebra $\bondx{U(1)}$ (viewed as a Hilbert space) is then
\begin{equation}
    \bondx{U(1)} = \left(\bigoplus_{n=1}^{L-1}\hend_n^*\right)\oplus \hend_\mrm{scar}.
\label{eq:AU1decomp}
\end{equation}
Note that since $\dliex{\gen_{U(1)}}\subseteq\bondx{U(1)}$, we can restrict our attention only to these subspaces of symmetric operators, see Fig.~\ref{fig:venn-diagrams}b.
Furthermore, using the known dimensions of the $U(1)$ charge sectors on a spin-1/2 Hilbert space, we can deduce that:
\begin{enumerate}[(i)]
    \item The subspaces $\hend_n^*$ are the tensor product of a one-dimensional irrep of $\scommx{U(1)}$ and of a $(\binom{L}{n}^{2}-1)$-dimensional irrep of $\sbondx{U(1)}$. 
    \item The scar space $\hend_\mrm{scar} =\commx{U(1)}$ is the tensor product of an $(L+1)$-dimensional irrep of $\scommx{U(1)}$ and a one-dimensional irrep of $\sbondx{U(1)}$ (with zero eigenvalue). We say that there are $L+1$ degenerate scars.
\end{enumerate}
Having characterized the decomposition of $\bondx{U(1)}$, we now study the overlap of the generators of Eq.~(\ref{eq:U1generators}) with each subspace. Notice that the operator $Z_j\in\gen_{U(1)}$ has non-zero projection on all $\hend_{n}^*$ -- since it is traceless and non-zero within each $n$-particle sector $\mc H_n$ of the physical Hilbert space. 
This implies that the dynamical Lie algebra will contain all the operators in $\bigoplus_n\hend^*_n$ (this is proven more in general in Lemma \ref{lem3}).
Finally, to study the overlap of the generators with the one-dimensional degenerate subspaces in $\hend_\mrm{scar}$, it is convenient to use the following Pauli-string basis for the scar space:
\begin{equation}\label{eq:paulibasisu1}
    \begin{gathered}
    \hend_\mrm{scar}=\mrm{span}\left(\{\oket{\Sigma_n}_{n=0,...,L}\}\right)\\
    \Sigma_0=\1 \qquad\Sigma_{n\geq 1}\defeq\sum_{j_1<...<j_n}Z_{j_1}\cdot...\cdot Z_{j_n}
    \end{gathered}
\end{equation}
One can easily verify that the $2$-local generators of Eq.~(\ref{eq:U1generators}) only have a non-zero overlap\footnote{The generators also do not overlap with the identity operator $\oket{\Sigma_0}=\oket{\1}$, but since this is just the operator responsible for global phases, we can add it to our set of generators without loss of generality.} with $\oket{\Sigma_n}$ for $n\leq 2$, the set of realizable unitaries $\unit$ will have co-dimension $L-2$ in the set of symmetric unitaries $\unitt$.
More generally, \textit{any} choice of generators for the $U(1)$ bond algebra that contains at most $k$-local operators would have a non-zero overlap only with $\oket{\Sigma_n}$ for $n \leq k$, therefore the set of realizable unitaries would have co-dimension at least $L-k$.
The ``missing unitaries'' $\exp(i\theta \Sigma_n)$ for $n>k$ are operators that commute with all other operators in $\unitt$ and that simply give different relative phases to each $N_\mrm{tot}=m$ sector of the Hilbert space:
\begin{equation}
    \exp(i\theta \Sigma_n)=\prod_{m=0}^L \exp\left[i\theta\cdot\frac{\obraket{\Pi_m}{\Sigma_n}}{\binom{L}{m}}\cdot\Pi_m\right].
\label{eq:U1missing}
\end{equation}
This situation, in which all symmetric operations can be performed up to a charge-sector dependent phase, has been dubbed ``semi-universality'' in the literature \cite{kazi2024permutationinvariant,marvian2024abelian} and is discussed in greater generality in Sec.~\ref{subsec:semiuniversal}.
Note that these results can be straightforwardly generalized to arbitrary sets of generators $\gen$ that are $U(1)$ symmetric (i.e., $\comm = \commx{U(1)}$), such as non-local terms that have support on $k$ sites.
The co-dimension of the realizable unitaries in that case is also at least $L - k$, and the missing unitaries are again of the form Eq.~(\ref{eq:U1missing}). 
If the chosen set of generators produces a larger super-commutant $\scomm$ than the one in Eq.~\eqref{eq:U1supercomm} (e.g. the example in Sec.~\ref{sec:mgu1}), the decomposition of the super-bond algebra in Eq.~\eqref{eq:AU1decomp} will become finer, and the generated DLA will be missing larger subspaces beyond the ones given by scars $\oket{\Sigma_n}$, but this lower bound on the co-dimension will still apply.
Finally, if the chosen set of generators produces a group $\unit$ that is compact, then it is not even possible to approximate unitaries that cannot be generated exactly.
This is always the case when each generator $h_\alpha$ has rational spectrum, which is usually the physically relevant situation.
On the contrary, if $\unit$ is not compact, the closure of the group $\unit$ within the set of all unitaries might therefore include unitaries of the form $e^{i\theta Q}$ for some scars $\oket{Q}\in\commx{U(1)}$ that do not necessarily belong to the DLA $\dlie$, which hence may be approximated to arbitrary precision.
\section{Two Classes of Non-Universality}\label{sec:nonuniversality}
%
\begin{figure}[t]
\includegraphics[width=0.85\columnwidth]{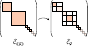}
\caption{The minimal super-commutant $\scommt$ describes the block decomposition associated to the conventional symmetries at operator level $\comm$. If $\scomm$ contains additional superoperator symmetries, they will be responsible for further splitting of the operator Hilbert space into smaller Krylov subspaces and/or for the formation of additional degeneracies between Krylov subspaces belonging to different symmetry sectors. \label{fig:blocks}}
\end{figure}
\subsection{The General Picture}
We now illustrate the general structure of the algebras corresponding to a general set of generators $\mathcal{G}$.
As we showed in the example of a $U(1)$ symmetric circuit, in general we can see that the super-commutant $\scomm$ is composed of the following two types of operators
\begin{equation}
    \forall Q_1, Q_2 \in \comm:\;\; 
    \text{(i)}\ Q_1 \ot Q_2^T \in \scomm,\;\;\text{(ii)}\ \oketbra{Q_2}{Q_1}\in\scomm,\label{eq:forsure}
\end{equation}
$\comm$ is the commutant (symmetries) of $\gen$, and we have assumed $Q_1$, $Q_2$ to be hermitian.
Stated in operator language, property (i) indicates that 
for any symmetric operator $K\in\bond$, the superoperator $\ad K$ commutes with left and right multiplication of symmetry operators (i.e. $Q_1 [K, \,\bullet\,] Q_2 =  [K, Q_1 \!\!\,\bullet\,\!\! Q_2]$).
Property (ii) follows from the fact that the symmetry operators commute with all symmetric operators (i.e., $\ad K\oket{Q}=\oket{[K,Q]} = 0$).
In the $U(1)$ example discussed in Sec.~\ref{subsec:U1example}, superoperators of these two forms generate the whole super-commutant algebra, but as we will show, in general more superoperators may be present: when they occur, these additional superoperator symmetries are responsible for the more dramatic obstructions to universality.
We define the \textit{minimal super-commutant} to be the algebra generated by the superoperators of Eq.~\eqref{eq:forsure}, which can be denoted as
\begin{equation}
	\scommt \defeq \llangle(\comm\ot \comm^T)\cup\{\oketbra{\1}{\1}\}\rrangle.
\label{eq:supercommtrivial}
\end{equation}
Note that including $\oketbra{\1}{\1}$ is sufficient to generate all operators of the form (ii) in \eqref{eq:forsure}, as also discussed in the $U(1)$ case below Eq.~\eqref{eq:U1supercomm}.
To it we can associate the \textit{maximal super-bond algebra}, generated by the adjoint action of \textit{all} symmetric operators -- not just the local ones:
\begin{equation}\label{eq:sbont}
	\sbondt \defeq \llangle\{\ad{K}\}_{K\in\bond}\rrangle = \mrm{comm}(\scommt).
\end{equation}
The second equality is proven in App.~\ref{sec:app-math} as Lemma \ref{lem2}; it states that the minimal super-commutant is the one associated to generic $\comm$-symmetric unitaries, and hence to the  symmetries of the unitaries (indeed it is defined as the set of all superoperator symmetries inherited from the original symmetries).

Given these definitions, a non-universal gate set (defined in Eq.~\eqref{eq:def-non-univ}) can lie in one of two main classes, which can be defined in the following way:
\begin{itemize}
    \item \textbf{\textit{Weak Non-Universality}}, where $\scomm=\scommt$;
    \item \textbf{\textit{Strong Non-Universality}}, where $\scomm\supsetneq\scommt$.
\end{itemize}
The first kind is simply due to vanishing overlaps between the generators $\mathcal G$ and some elements of the center $\cent$ (which we refer to as scar operators).
This situation appears to hold for many of symmetric gates we study, including in the $U(1)$ case illustrated earlier in Sec.~\ref{subsec:U1example}; we discuss this in Sec.~\ref{subsec:semiuniversal}.
The second provides richer possibilities, although it is restricted to a smaller number of gate sets, e.g. matchgate circuits, free-fermion Hamiltonians, or 2-local $SU(d)$-symmetric circuits on qudit chains \cite{hulse2021qudit}; we discuss this in Sec.~\ref{sec:con-super}. 
In the following sections our goal will be to provide a broad overview of weak and strong non-universality using many relevant examples.
In many cases, we leave the providing precise characterizations of the commutants $\scomm$ with rigorous proofs to future work, but we will point out the cases in which such a characterization exists.

These two possibilities also appear in Ref.~\cite{zimboras2015symmetry}, which studied the conditions under which a set of generators $\gen$ is able to fully simulate a larger set of generators $\gen\subseteq\gen'$. 
It is shown that this is possible if and only if the ``quadratic symmetries''~\cite{zeier2011symmetry} are equal\footnote{Quadratic symmetries are related to superoperator symmetries through the correspondences shown in Sec.~\ref{sec:manycopy}.} and the projections of the generators onto their centers are also equal.
Considering $\mG' = \lgen \mG \rgen$ in their analysis, the first condition fails in strongly non-universal systems, while only the second condition fails in weakly non-universal systems.
In App.~\ref{app:types}, we provide an overview of the relation of these types of universality proposed in Ref.~\cite{marvian2024abelian}, referred to as types I-IV.
In particular, it is possible to show that semi-universality as defined in Ref.~\cite{marvian2024abelian} follows from weak non-universality.
However, since these constraints can appear in combination with each other, for the remainder of this work we will stick to the two-fold classification into weak and strong non-universality.
\subsection{Weak Non-Universality}\label{subsec:semiuniversal}
As we discuss below, and prove in Lemma \ref{lem3}, the condition of weak non-universality leads to the situation in which the $\dlie$ differs from the algebra $\bond$ of \textit{all} symmetric operators only by a few central elements, i.e. operators $\oket Z\in\cent$ (cf. Eq.~\eqref{eq:matrixrepcenter}).
Such operators are general linear combinations of projectors $\oket{\Pi_\lambda}$ onto the irreps $\mc H_\lambda = \mc H_\lambda^{\bond}\ot\mc H_\lambda^{\comm}$ in the decomposition Eq.~\eqref{eq:fund-th}, and the unitaries obtained when these are exponentiated are simply the ones that give different relative phases to each symmetry sector $\mc H_\lambda$.
In this case the decomposition of $\bond$ is analogous to that of Eq.~(\ref{eq:AU1decomp}) in $U(1)$-symmetric circuits, where the only irrep responsible for non-universality is $\hend_{\text{scar}}$, which consists of operators from $\cent$.
We discuss this in the next section for general weak non-universal gate sets, which need not have $\cent = \comm$ as in the Abelian case.
\subsubsection{Counting Missing Dimensions for Weakly Non-Universal Systems}
We now wish to describe the physical and operator Hilbert space decompositions of a general -- possibly non-Abelian -- commutant $\comm$, for a weakly non-universal gate set $\gen$ (i.e. we assume $\scomm=\scommt$).
The physical Hilbert space $\mc H$ splits into $\bond\ot\comm$ irreps labeled by $\lambda$ (see Eq.~\eqref{eq:fund-th}).
Similarly to Eq.~(\ref{eq:nmdec}), the operator Hilbert space $\hend$ first splits into irreps labeled by the pair of quantum numbers $(\lambda,\lambda')$ under the action of $\comm\ot\comm^T$ (which is only part of the whole super-commutant $\scommt$ of Eq.~\eqref{eq:supercommtrivial}).\footnote{An operator $\oket{O}$ belongs to a given $(\lambda,\lambda')$ irrep if and only if for all $\ket\psi\in\mc H_\lambda$ it satisfies $O\ket\psi\in\mc H_{\lambda'}$ and for all $\ket\psi\in\mc H_\lambda^\perp$  it satisfies $O\ket\psi=0$.}
The operators in sectors $(\lambda, \lambda')$ for $\lambda \neq \lambda'$ are off-diagonal in the decomposition of Eq.~\eqref{eq:fund-th}, and thus cannot belong $\bond$, since they do not admit either of the forms of Eq.~(\ref{eq:matrixrep}).
Each $(\lambda,\lambda)$ sector is spanned by operators of the form $\oket{M_\lambda\ot N_\lambda}$.
Therefore under the action of $\sbond$, each of these sectors splits into $d_\lambda^2$ degenerate subspaces of dimension $D_\lambda^2$ (according to the notation of Eq.~\eqref{eq:matrixrep}).
Further, within each of these degenerate subspaces, there is a one-dimensional Krylov subspace (i.e. a scar) given by $\oket{\1_{D_\lambda}\ot N_\lambda}\in\comm$.
This is simply due to the fact that by definition all elements of $\comm$ commute with all generators in $\gen$, that is $\ad{h_\alpha}\oket{Q}=0$ for $\oket Q\in\comm$. Therefore $\comm\subseteq\hend$ constitutes a single irrep composed of one-dimensional Krylov subspaces in the decomposition of Eq.~\eqref{eq:hilbdecend}.
However, note that for the purposes of non-universality, we are interested only in the decomposition \textit{within} the space of symmetric operators $\bond$. 
The operators in the bond algebra $\bond$ within each $(\lambda, \lambda)$ sector are of the form $\oket{M_\lambda\ot\1_{d_\lambda}}$, and the only one-dimensional invariant subspace that is also in $\bond$ is the projector $\oket{\Pi_\lambda}=\oket{\1_{D_\lambda}\ot\1_{d_\lambda}}$.
These are simply the elements of the center $\cent\defeq\bond\cap\comm$ (cf. Eq.~(\ref{eq:matrixrepcenter})).
Hence the situation for more general gates, including non-Abelian ones, remains analogous to the ``\textit{one scar per symmetry sector}'' story from the $U(1)$ case (cf. Sec.~\ref{subsec:U1example}), if we define a symmetry sector to be any of the subspaces labeled by $\lambda$ in Eq.~\eqref{eq:fund-th}.
For a more detailed description of the operator Hilbert space decomposition when $\scomm=\scommt$ see App.~\ref{sec:app-scommt}.

Following the methods in Sec.~\ref{sec:methodology}, to find $\dlie$ one simply needs to study the overlaps between the generators $\gen$ and the invariant subspaces of $\hend$.
One can in general show that when $\scomm=\scommt$ the DLA $\dlie$ contains all traceless symmetric operators (i.e. all operators in $\bond$ up to central elements $\oket Z\in\cent$); this is proven in App.~\ref{sec:app-math} as Lemma \ref{lem3}.
Hence the co-dimension of the controllable manifold within the space of symmetric unitaries is the number of linearly independent elements of the center $\oket Z\in\cent$ which do not overlap with the generators $\gen$.\footnote{Note that technically speaking, the identity operator $\oket\1$ always belongs to the center, and never overlaps with traceless generators, but since it is responsible for global phases $u_\1(\theta)=e^{i\theta}$, we will never consider its absence in the $\dlie$ as a loss of universality.}
In other words, the vector space $\bond$ of symmetric operators splits up as
\begin{equation}\label{eq:semiunivformula}
    \bond=\dlie\oplus\{\oket Z\in\cent: \obraket{h_\alpha}{Z}=0,\ \forall h_\alpha\in\gen\}.
\end{equation}
where the second term is simply the orthogonal complement of $\dlie$ within the center $\cent$.
In terms of dimensions, we then have
\begin{multline}\label{eq:dimformula}
    \mrm{dim}(\bond)-\mrm{dim}(\dlie) \\= \mrm{dim}\{\oket Z\in\cent: \obraket{h_\alpha}{Z}=0,\ \forall h_\alpha\in\gen\}.
\end{multline}
Calculating the dimension of this space is equivalent to calculating, for an arbitrary basis $\{\oket{Z_l}\}$ of the center $\cent$, the overlap matrix $S$, which has elements $S_{l\alpha}\defeq \obraket{h_\alpha}{Z_l}$ and finding its rank:
\begin{equation}\label{eq:dimformulark}
    \mrm{dim}(\bond)-\mrm{dim}(\dlie)=\mrm{dim}(\cent)-\mrm{rk}(S).
\end{equation}
Notice that the rank of the overlap matrix $\mrm{rk}(S)$ is bounded from above by the number of linearly independent generators $\oket{h_\alpha}\in\gen$. 
The equality can also hold more generally, even when $\scomm$ is not necessarily equal $\scommt$, and this condition (i.e. any of Eqs.~(\ref{eq:semiunivformula}-\ref{eq:dimformulark})) is referred to in the literature as \textit{semi-universality}~\cite{marvian2024abelian}, where only the relative phases between symmetry sectors cannot be fully controlled.
{Hence weak non-universality is a subclass of semi-universality.
However, in general the R.H.S. of Eqs.~(\ref{eq:dimformula}-\ref{eq:dimformulark}) is still valid as a lower bound on the co-dimension of $\dlie$ in $\bond$, which also applies to strongly non-universal systems.}
Similar expressions (albeit originating from a different analysis) have appeared in Ref.~\cite{marvian2020locality, marvian2024abelian} to ultimately prove a lower bound on the number of degrees of freedom which cannot be controlled by local group-symmetric gates.
However, note that Eq.~\eqref{eq:dimformula} is completely general, and applies to \textit{arbitrary} sets of gates $\gen$.
In addition, the language of commutants (through Eq.~\eqref{eq:dimformula}) also enables us to compute these dimensions \textit{without} any knowledge of the representation theory of the symmetry algebra $\comm$, only the expression of the elements in the $\cent$.
Choosing an appropriate basis for the center $\cent$ can also greatly simplify the calculation of the overlaps in Eq.~\eqref{eq:dimformula}, and computing the co-dimension of the controllable manifold $\unit$.
For example, a Pauli-string basis has been exploited in this way in the $U(1)$ example of Sec.~\ref{subsec:U1example}, and we will show similar examples below.

Note that while Eq.~\eqref{eq:dimformula} is a statement about exact generation of unitaries, stronger statements can be made when $\unit$ is compact (which is always the case when each of the generators in $\gen$ has a rational spectrum) -- in such cases, it is not even possible to approximate the missing unitaries.
However, if $\dlie$ is not compact, it may be possible to approximate with arbitrary precision some unitaries outside of $\unit$, effectively reducing number of missing dimensions.

\subsubsection{$SU(2)$ Symmetric Systems}
To further illustrate our approach to weak non-universality, let us consider the case of $k$-local $SU(2)$-symmetric circuits on a qubit chain. This has been proven to be semi-universal in Ref.~\cite{marvian2022rotationally} using different methods, and the results of Ref.~\cite{li2024designslocalrandomquantum} also imply weak non-universality\footnote{These are stated to hold only for $L>8$, but for smaller system sizes weak non-universality can be checked numerically using the methods in App.~\ref{sec:app-num}.}.
For example, the $k=2$ case with periodic boundary conditions is generated by
\begin{equation}
    \gen=\gen_{SU(2)}\defeq\{\vec S_j\cdot\vec S_{j+1}\}_{j=1,...,L}.
\label{eq:SU2gens}
\end{equation}
To count the number of constraints on the relative phases, we can compute the dimension of the overlap between the center and the generators $\gen_{SU(2)}$ as in Eq.~\eqref{eq:dimformula}.
A simple approach is to consider a basis for center $\cent = \centx{SU(2)}$ such that each element only overlaps with generators of a given range (similar to the Pauli string basis of Eq.~\eqref{eq:paulibasisu1}).
Since $\gen_{SU(2)}$ generates the algebra of permutations of the $L$ sites on the lattice~\cite{moudgalya2022from}, $\centx{SU(2)}$ is simply the set of permutation-invariant operators which belong to the the permutation algebra.
By symmetrizing permutation operators that have support on $n$ qubits we 
obtain the following orthogonal basis for the center
\begin{equation}
    \cent=\mrm{span}(\{\oket{P_n}\}_{n=0,2,...,2\lfloor\frac{L}{2}\rfloor})
\end{equation}
where $\oket{P_n}$ is the operator
\begin{equation}
    P_n=\sum_{j_1\neq...\neq j_n}\left(\vec S_{j_1}\cdot \vec S_{j_2}\right)\cdot...\cdot\left(\vec S_{j_{n-1}}\cdot \vec S_{j_n}\right).
\end{equation}
When expanded in terms of Pauli strings, the operator $\oket{P_n}$ only consists of strings of length $n$, and therefore has zero overlap with any $k$-local operator with $k<n$.
We can therefore conclude through Eq.~\eqref{eq:dimformula} that for the set of \textit{all} $SU(2)$-symmetric gates that are at most $k$-local (which includes $\oket{P_n}$ for all $n\leq k$):
\begin{equation}
    \mrm{dim}({\bond})-\mrm{dim}({\dlie})=\left\lfloor\frac{L}{2}\right\rfloor-\left\lfloor\frac{k}{2}\right\rfloor.
\end{equation}
When instead considering a subset of all possible $SU(2)$-symmetric $k$-local gates, this result is still valid as a lower bound on the co-dimension of $\dlie$ in $\bond$.

\subsubsection{Systems with a Non-Group Symmetry}\label{subsubsec:nongroup}
We have so far demonstrated that this framework can be used to systematically understand non-universality in $k$-local group-symmetric circuits, which has also been demonstrated in previous works~\cite{marvian2020locality, marvian2024abelian} using alternate methods.
However, this framework is much more general, and it provides a unified description for studying non-universality for \textit{any} set of hermitian generators $\gen=\{h_\alpha\}$; these could be symmetric $k$-local gates, but in general they need not possess any particular internal symmetry or spatial structure.
This allows us to quantify the non-universality corresponding to \textit{any} weakly non-universal gate set $\gen$ by computing its center $\cent$, and simply applying the formula Eq.~\eqref{eq:dimformula}.
Note that particular subsets of symmetric gates can lead to more dramatic examples of non-universality, which we study in Sec.~\ref{sec:con-super}.
For example we can apply this framework to gates $\gen$ that have unconventional symmetries~\cite{moudgalya2021hilbert, moudgalya2022exhaustive}, where the commutant $\comm$ does not have a simple group structure.
For example the $t$-$J_z$ model~\cite{batista2000tJz, rakovszky2020statistical, moudgalya2021hilbert} is such an example, which exhibits Hilbert space fragmentation~\cite{moudgalya2021review}. 
It is a model with $\mb C^3$ local degrees of freedom denoted by the basis $\{\ket{\up}, \ket{0}, \ket{\downarrow}\}$, and the Hamiltonian consists of the terms
\begin{multline}
    \gen = \gen_{t\text{-}J_z} \defeq \{T_{j,j+1},Z_j Z_{j+1}\}_{j=1,...,L-1}\\\cup\{Z_j,Z_j^2\}_{j=1,...,L}\label{eq:tjzgates}
\end{multline}
where $Z_j \defn \ketbra{\uparrow}{\uparrow}_j -\ketbra{\downarrow}{\downarrow}_j$ and $T_{j,j+1} \defn (\ketbra{\uparrow 0}{0\uparrow}+\ketbra{\downarrow 0}{0\downarrow})_{j,j+1} + h.c.$.
These gates act on single or neighbouring degrees of freedom with open boundary conditions, and their action conserves the full \textit{pattern} of spins ($\ket{\uparrow}$ or $\ket{\downarrow}$) in one dimension~\cite{rakovszky2020statistical}.
This pattern conservation symmetry is seen in the commutant of this set of gates, which is Abelian and has  dimension that scales exponentially with the system size $\mrm{dim}(\comm)=\mrm{dim}(\cent)=2^{L+1}-1$~\cite{moudgalya2021hilbert}.
This directly implies that according to Eq.~\eqref{eq:dimformulark}, the co-dimension of $\dlie$ in $\bond$ will also need to grow exponentially with system size, since the number of generating gates only grows linearly.
According to numerics for small system sizes using the methods discussed in App.~\ref{sec:app-num}, we observe weak non-universality, and therefore we expect Eq.~\eqref{eq:dimformula} to apply exactly.
The exact calculation of the overlaps (performed in App.~\ref{sec:gates}) shows that the projection of $\gen_{t\text{-}J_z}$ onto the center $\cent=\comm$ has dimension $2L$, thus implying
\begin{equation}
    \mrm{dim}(\bond)-\mrm{dim}(\dlie)\geq 2^L-2L-1.
\end{equation}
This shows that in systems with Hilbert space fragmentation~\cite{sala2020fragmentation, khemani2019int,  moudgalya2019thermalization, yang2019hilbertspace, moudgalya2021review}, which have exponentially many symmetries~\cite{moudgalya2021hilbert}, there is not only a heavy constraint on the dynamics of states due to these symmetries, but in addition the dynamics of unitary operations is further heavily constrained due to locality.

\subsubsection{Translation Invariant Gates}
Non-universality arising from Eq.~\eqref{eq:dimformula} can also be found in systems that respect some spatial symmetries.
For example, any one-dimensional system with periodic boundary conditions (PBC) acted upon by a spatially homogeneous (possibly time-dependent) Hamiltonian can be described by a gate set $\gen$ which possesses translation symmetry.
Hence we have
\begin{equation}
    \llangle T\rrangle = \text{span}\{\mathds{1}, T, T^2, \cdots, T^{L-1}\} \subseteq\comm
\end{equation}
where $T$ is the translation operator that maps each spin $j$ to the next one $j+1$ (with PBC).
For simplicity, let us consider the case where $\comm=\llangle T\rrangle$, which can be guaranteed by choosing the following set of operators acting on a qubit chain
\begin{equation}
    \gen=\gen_T\defeq\left\{\sum_{j=1}^L S_j^\alpha,\ \sum_{j=1}^L S^\alpha_jS^\beta_{j+1}\right\}_{\alpha,\beta\in\{x,y,z\}}.
\label{eq:gates-transinv}\end{equation}
Since the interactions in the gate set have finite range, the number of generators will be $\mc O(1)$ and independent of system size $L$, while $\mrm{dim}(\comm)=\mrm{dim}(\cent)=L$.
According to Eq.~\eqref{eq:dimformulark}, this immediately implies non-universality, since $\mrm{rk}(S)\leq |\gen|$, where $|\gen|$ is the number of linearly independent generators in $\gen$, and therefore
\begin{equation}
    \mrm{dim}(\bond)-\mrm{dim}(\dlie)\geq L-|\gen|\sim \mc O(L).
\end{equation}
This illustrates the strength of the rank argument, as it can be applied with little modifications to \textit{any} gate set with translation symmetry\footnote{The counting would change if the spatial part of the symmetry is enlarged to a non-abelian group with a smaller center.} to obtain a lower bound on the co-dimension of the DLA. 
In the particular case of the gate set $\gen_T$, the overlap matrix $S$ of Eq.~\eqref{eq:dimformulark} can be seen to have rank $1$ (cf. App.~\ref{sec:gates}).
Assuming that -- as simulations suggest -- this system is weakly non-universal, then $\mrm{dim}(\bond)-\mrm{dim}(\dlie)=L-1$.
\subsection{Strong Non-Universality}\label{sec:con-super}
The other type of non-universality occurs due to the presence of non-trivial superoperator symmetries $\mc Q\in\scomm, \mc Q\notin\scommt$.
The existence of such symmetries indicates either that some Krylov subspaces associated to the symmetries in $\scommt$ must split into smaller Krylov subspaces which are invariant under the action of $\sbond$, or that some Krylov subspaces belonging to different symmetry sector must be degenerate under the action of $\sbond$ (see Fig. \ref{fig:blocks}).
The presence of such non-trivial conserved quantities $\mc Q$ in the super-commutant, implies that $\dlie\neq\bond$.
If these conserved superoperators act non-trivially on $\bond$ {(i.e., if the block decomposition of $\bond$ is modified w.r.t. weakly non-universal systems)}, then some non-central operators will necessarily be missing from $\dlie$, as we show in Lemma \ref{lem4} of App.~\ref{sec:app-math}.\footnote{As a consequence Eq.~\eqref{eq:dimformula} still applies as a lower bound on $\mrm{dim}(\bond)-\mrm{dim}(\dlie)$ in the case of strong non-universality.}
In other words, non-trivial superoperator symmetries always modify the decomposition of the operator Hilbert space $\hend$ and always imply non-universality, and if they modify the decomposition within the set of symmetric operators $\bond$, they also directly imply missing overlaps between the generators and some of the non-scar subspaces in the decomposition.
Such a superoperator is defined by the equations:
\begin{equation}\label{eq:comm-conditions}
    [\mc Q,\sbond]=0,\qquad [\mc Q,\sbondt]\neq 0,
\end{equation}
which means that $\mc Q$ commutes with the adjoint action of the gates in $\mc G$, but not with the adjoint action of general gates that have the same symmetries as $\mc G$.
In the setting of Ref.~\cite{marvian2020locality}, for a symmetry group $G$ this would mean that we are looking for superoperators that only commute with the adjoint action of \textit{local} $G$-symmetric gates but not with that of \textit{global} ones.
In the following, we provide a few simple examples of gate sets that display strong non-universality.
\subsubsection{Decoupled Qubits}\label{sec:zx}
To illustrate the framework, we consider an almost trivial example of strong non-universality obtained by on-site (1-local) gate sets
\begin{equation}
    \gen=\gen_\mrm{XZ} \defeq \{X_j,Z_j\}_{j=1,...,L},
\end{equation}
which acts universally on each qubit of the chain, but cannot couple any pair of qubits.
The symmetry (commutant algebra) of this set of gates is trivial $\comm=\llangle\{\1\}\rrangle$, but it is clear that the set of realizable unitaries $\unit$ only contains those that factorize along the local degrees of freedom $U=\prod_{j=1}^L U_j$.
Interestingly, this non-universality can alternately be understood at the superoperator level as originating from additional strictly-local superoperator conserved quantities beyond the ones from Eq.~\eqref{eq:forsure}.
The super-commutant reads
\begin{equation}
    \scomm = \llangle \{...(\1\ot\1)_{j-1}\ot\oketbra{\1_j}{\1_j}\ot (\1\ot\1)_{j+1}...\}_{j=1,...,L}\rrangle.\label{eq:decouplingscomm}
\end{equation}
These conserved superoperators, corresponding to the on-site one-dimensional invariant subspaces $\oket{\1_j}$, simply emerge because the generators act independently on each qubit.
In the block decomposition of Eq.~\eqref{eq:hilbdecend}, the full operator Hilbert space then splits into $2^L$ non-degenerate Krylov subspaces, each one containing operators that only couple a given subset of qubits; the generators only overlap with the subspaces corresponding to on-site operators, which only act on one qubit, resulting in non-universality.

\subsubsection{Decoupling due to Frozen Configurations}
Superoperoperator symmetries also appear in more general situations where dynamical decoupling occurs.
If one considers for example the generators for the PXP model~\cite{turner2017quantum}:
\begin{equation}
    \gen =\gen_\mrm{PXP}\defeq \{P_jX_{j+1}P_{j+2}\}_{j=1,...,L-2},
\end{equation}
where $P=\ketbra{\downarrow}{\downarrow}$, one can see that a configuration of two neighboring spins such as $\uparrow\uparrow$ always remains frozen under the action of these terms.
For example, in the following configuration of $L = 8$ spins
\begin{equation}
    \ket{\psi(t=0)}=\ket{\downarrow\downarrow\downarrow\uparrow\uparrow\downarrow\downarrow\downarrow},
\end{equation}
the 2nd and 7th spins along the chain will evolve independently from each other, the other spins remain frozen under the action of the PXP terms.
This fact is described for example by the non-trivial superoperator conserved quantity:
\begin{equation}\label{eq:gluedPXPsymm}
    \begin{gathered}
        \mc Q= I_1I_2I_3U_4U_5S_6S_7S_8,\ \text{where:}\\
        I=\1\ot\1,\ S=\oketbra{\1}{\1},\ \text{and}\ U=\ketbra{\uparrow}{\uparrow}\ot \ketbra{\uparrow}{\uparrow}.
    \end{gathered}
\end{equation}
While the superoperators in Eq.~\eqref{eq:decouplingscomm} describe a system where all qubits are completely decoupled to each other, this describes a system where the left and right subsystems are decoupled as long as they are separated by a ``frozen'' configuration of the form $\uparrow\uparrow$ configuration.
The presence of this superoperator in the super-commutant $\scomm$ leads for example to the absence of the operators such as $P_1X_2P_3P_6X_7P_8\in\bond$ from the DLA $\dlie$.
Non-trivial conserved superoperator are a general feature of dynamical decoupling, and they have been identified also in other models where decoupling occurs \cite{kovács2024operatorspacefragmentationperturbed}. 
If a given configuration $\ket{f}$ is frozen under the dynamics of a given gate set $\gen$, and if the Hamiltonian terms $h_\alpha\in\gen$ are unable to couple degrees of freedom at the left and right of a frozen configuration (for example due to locality of the interactions), then superoperators of form similar to Eq.~\eqref{eq:gluedPXPsymm} will appear in the super-commutant $\scomm$, with the string $U_j...U_{j+k}$ being replaced by $\ketbra{f}{f}\ot\ketbra{f}{f}$.
\subsubsection{\texorpdfstring{$\mathbb Z_2$}{} Matchgate Circuits}\label{sec:mgz2}
An important example of a superoperator conserved quantity can be found in matchgate circuits \cite{valiant2002quantum, terhal2002classical, jozsa2008matchgate}, where the generators are of the form
\begin{equation}\label{eq:matchgate-gens}
    \gen=\gen_\mrm{MG} \defeq \{X_jX_{j+1}\}_{j=1,...,L-1}\cup \{Z_j\}_{j=1,...,L},
\end{equation}
where we have assumed open boundary conditions (OBC) for simplicity.
These terms have a quadratic free-fermion form after a Jordan-Wigner transformation.
That is, we define the Majorana fermions
\begin{equation}
\gamma_{2j-1}=Z_1...Z_{j-1} X_{j},\quad \gamma_{2j}=Z_1...Z_{j-1} Y_{j},
\label{eq:majoranadefn}
\end{equation}
with anticommutation relations $\{\gamma_k,\gamma_l\}=2\delta_{kl}$.
We define the Majorana strings $\oket{a}=\oket{\prod_{k=1}^{2L}\gamma_k^{a_k}}$ where $a=(a_1,a_2,...,a_{2L})$ with $a_k\in\{0,1\}$ and $|a|\defeq\sum_{k=1}^{2L}a_k$ is referred to as the length of the string (i.e. the \textit{degree} of the operator).
The set of all such products of operators is an orthogonal basis of the operator space.
In terms of these Majoranas, the generators of Eq.~\eqref{eq:matchgate-gens} take the form of quadratic fermion operators
\begin{gather}
    Z_j=-i\gamma_{2j-1}\gamma_{2j}\quad X_j X_{j+1}=-i\gamma_{2j}\gamma_{2j+1} \nn \\
    \implies \gen_{\text{MG}} = \{-i \gamma_j \gamma_{j+1}\}_{j = 1, \cdots, 2L-1} 
    \label{eq:majoranagen}
\end{gather}
These generators commute with the $\mb Z_2$ parity operator $P=(-i)^L\prod_{k=1}^{2L} \gamma_k$, which can easily be verified to generate the commutant $\comm$.
The associated super-commutant $\scomm$ contains operators of the form of Eq.~\eqref{eq:forsure}, which are inherited from the symmetry.
However, beyond these superoperators, the adjoint action of the generators preserves an unrelated $U(1)$ superoperator symmetry: the number of Majorana fermions, defined as
\begin{equation}
    \mc N_\gamma = \sum_{n=0}^{2L}n\sum_{|a|=n}\oketbra{a}{a}\label{eq:majoranaN}
\end{equation}
Note that this or related conserved quantity has been pointed out in earlier works~\cite{bravyi2004lagrangian, prosen2008third, spee2018mode, bao2021symmetry}.
This additional symmetry splits the parity subspaces according to the number of Majorana fermions, making sure that $\dlie$ is equal to the space of quadratic operators (i.e. $\mc N_\gamma=2$).

The $U(1)$ generator $\mc N_\gamma$ does not commute with the $\mb Z_2\ot\mb Z_2$ superoperator symmetry that is derived from the physical $\mb Z_2$ parity, thus generating a non-Abelian super-commutant $\scomm$ with dimension\footnote{The absence of additional superoperator symmetries can be proven for example {by mapping them to the ground state of an effective Hamiltonian on four copies of the Hilbert space [see discussion around Eq.~(\ref{eq:fourcopyavg})] that turns out to be related to} the spin-$\frac{1}{2}$ Heisenberg model \cite{PhysRevB.112.064301}{, whose ground states are well-known}.}
\begin{equation}
    \mrm{dim}(\scomm)=4L+2.
\end{equation}
The operator Hilbert space $\hend$ splits into irreps (cf. Eq.~\eqref{eq:hilbdecend})
\begin{equation}
    \hend = \left(\bigoplus_{n=0}^{L-1}\hend_{n,2L-n}\right)\oplus \hend_{L^+}\oplus\hend_{L^-}.
\label{eq:MGdecomp}
\end{equation}
Here we have
\begin{align}
    \hend_{n,2L-n} &= \mrm{span}(\{\oket{a}:\,|a|=n\ \mrm{or}\ |a|=2L-n\}),\nn \\
    \hend_{L^\pm} &= \mrm{span}(\{(\Pi_\pm\ot\1)\oket{a}:\,|a|=L\}),
\end{align}
and we define $\Pi_\pm = \frac{\1\pm P}{2}$ to be the projectors onto the parity sectors.
The $\hend_{n,2L-n}$ irreps correspond to two degenerate Krylov subspaces (i.e. $d_{\widehat\lambda}=2$ in the language of Eq.~\eqref{eq:matrixrep}) consisting of strings of length $n$ and $2L-n$, while $\hend_{L^\pm}$ are simple Krylov subspaces with $d_{\widehat\lambda}=1$ consisting of strings of length $L$ (since the action of $P$ preserves the length of the string); the space of strings of length $L$ is block diagonalized into parity sectors, which are $\hend_{L^+}$ and $\hend_{L^-}$.
The bond algebra $\bond$ of symmetric operators is given by all $\hend_{n,2L-n}$ for even $n$, and $\hend_{L^\pm}$ if $L$ is even.
However, since the generators in Eq.~(\ref{eq:majoranagen}) have length $2$, they have complete weight within one of the two degenerate subspaces of $\hend_{2,2L-2}$, the one composed of all operators of degree $2$, which is also the DLA $\dlie$.
This leads to dramatic non-universality within the space of symmetric operators.\footnote{\label{ft:matchgate2}However, note that for $L=2,$ matchgates only differs from the general $\mathbb Z_2$ case (seen in App. \ref{sec:gates}) through the absence of the central element $Z_1Z_2=P$ from the Lie algebra. Thus the $L=2$ system is semi-universal, while also being strongly non-universal (due to the double degeneracy found in the subspace $\hend_{1,3}$).}

A related example is obtained by considering periodic boundary conditions (PBC) for the generators $\gen_\mrm{MG}$ by adding the term $X_LX_1=-iP\gamma_1\gamma_{2L}$ to Eq.~(\ref{eq:matchgate-gens}), which in terms of Majoranas of Eq.~(\ref{eq:majoranadefn}) is a string of length $2L-2$.
Adding this single term modifies the super-commutant and block-decomposition of the operator Hilbert space since the Majorana number superoperator of Eq.~(\ref{eq:majoranaN}) is no longer conserved under the action of the corresponding Liouvillian $\mL_{X_L X_1}$.
If $|a|$ is even, $\ad{X_LX_1}\oket a=iP\ad{\gamma_1\gamma_{2L}}\oket a$, hence mapping Majorana strings of length $n$ to those of length $2L-n$.
In terms of Eq.~(\ref{eq:MGdecomp}), for even $n$, $\hend_{n,2L-n}$ is fully connected under the action of this term, and becomes a single Krylov subspace with $d_\lambda=1$. 
The DLA $\dlie$ will correspond to the space $\hend_{2,2L-2}$, spanned by Majorana strings of length $2$ and $2L-2$.
The gate set therefore remains strongly non-universal, with a dynamical Lie algebra of twice the size compared to the OBC case.
\subsubsection{\texorpdfstring{$U(1)$}{} Matchgate Circuits}\label{sec:mgu1}
The case of $U(1)$-conserving matchgate circuits, associated to particle number conserving free-fermion Hamiltonians, is a simple extension of the previous case.
A set of generators for this type of circuits is
\begin{multline}\label{eq:matchgate-gens-u1}
    \gen=\gen_{\mrm{MG},U(1)} \defeq \{X_jX_{j+1}+Y_jY_{j+1}\}_{j=1,...,L-1}\\ \cup \{Z_j\}_{j=1,...,L}.
\end{multline}
These can be expressed in terms of physical fermionic creation and annihilation operators defined as:
\begin{equation}
\begin{gathered}
    c_j = \frac{\gamma_{2j-1}+i\gamma_{2j}}{2},\qquad c_j\+ = \frac{\gamma_{2j-1}-i\gamma_{2j}}{2},\\
    Z_j = 2c_{j}\+c_{j}-\1, \quad X_jX_{j+1}+Y_jY_{j+1}=c\+_{j}c_{j+1}+h.c.
\end{gathered}
\end{equation}
These generators commute with the $U(1)$ number operator $N_\mrm{tot}=\sum_j c\+_jc_j$, which generates the commutant $\comm$~\cite{moudgalya2022from}.
Due to the presence of this symmetry, as well as the superoperator symmetry introduced in the previous example in Eq.~\eqref{eq:majoranaN}, the number of $c$ and $c\+$ operators in a string are preserved independently under the adjoint action of the generators (see App.~\ref{sec:gates} for the full block decomposition of the operator space).
As expected, this leads to $\dlie$ being equal to the space of number-conserving quadratic operators spanned by $\oket{c\+_jc_{j'}}$ (except for the number operator $N_\mrm{tot}$) giving rise to non-universality.\footnote{As for the $\mathbb Z_2$ matchgate case, this gate set is semi-universal and strongly non-universal for $L=2$.}

\subsubsection{Matchgate-like Subsystems}
The examples discussed so far provide an understanding of the non-universality of some classic types of circuits in terms of superoperator conserved quantities.
Nevertheless, it is important to note that our discussion of matchgate systems also naturally applies -- with little adjustments -- to many other examples of strong non-universality, other than the ones associated with the sets of gates in Eqs.~\eqref{eq:matchgate-gens} and \eqref{eq:matchgate-gens-u1}.
There are many examples of systems whose dynamics are not free, but that within some symmetry sectors behave exactly like free fermions.
In all such cases, the part of the super-commutant $\scomm$ responsible for the block decomposition of operators acting on these subspaces will take one of the forms shown above, with the presence of an additional $U(1)$ generator $\mc N_\gamma$ at the superoperator level.
Many systems with Hilbert space fragmentation can often possess integrable subspaces which evolve under free-fermion dynamics~\cite{Batista_2000, rakovszky2020statistical, moudgalya2019thermalization}.
For example, the $t$-$J_z$ model of \eqref{eq:tjzgates} discussed in Sec.~\ref{subsubsec:nongroup} above can be restricted in such a way that any given subspace maps onto spinless free fermions.
For example, consider
\begin{multline}
    \gen = \gen_{t\text{-}J_z,\text{MG}} \defeq \{T_{j,j+1}, Z_{j}^2Z_{j+1}^2-Z_{j}Z_{j+1}\}_{j=1,...,L-1}\\\cup\{Z_j^2\}_{j=1,...,L}
\end{multline}
and the subspace $\mc H_{\sigma}$, with $\sigma\in\{\uparrow,\downarrow\}$, spanned by all product states where all spins are either $0$ or $\sigma$, or equivalently, obtained by repeatedly acting with the terms $T_{j,j+1}$ on states of the form 
\begin{equation}
|\underset{N}{\underbrace{\sigma\sigma
\cdots \sigma}} \ \underset{L-N}{\underbrace{00
\cdots 0}} \ \rangle.
\end{equation}
If we map each product state in $\mc H_{\sigma}$ to a state in a spin-$\frac{1}{2}$ chain through the mapping $\ket{0}\mapsto|\tilde\downarrow\rangle$ and $\ket{\downarrow/\uparrow}\mapsto|\tilde\uparrow\rangle$ (due to the fixed spin pattern, this mapping is one-to-one between the two spaces), then we find that on this subspace
\begin{equation}
\begin{gathered}
    T_{j,j+1}=\tilde X_j\tilde X_{j+1}+\tilde Y_j\tilde Y_{j+1},\quad Z_j^2 = \frac{1}{2}\left(\tilde Z_j+\1\right),\\
    Z_{j}^2Z_{j+1}^2-Z_{j}Z_{j+1} = 0,
\end{gathered}
\end{equation}
where $\tilde X_j$, $\tilde Y_j$, $\tilde Z_j$ are Pauli matrices on the spin-$\frac{1}{2}$ degrees of freedom.
Therefore this gate set acts on each $\mc H_{\sigma}$ symmetry sector exactly like the $U(1)$-conserving matchgates of Eq.~\eqref{eq:matchgate-gens-u1}, with $N$ being the total particle number, thus inheriting the conserved superoperators that characterize its strong non-universality.
Numerous mappings of the same flavor have also been found in subspaces of certain dipole-conserving models~\cite{moudgalya2019thermalization, rakovszky2020statistical}, and we expect the same type of strong non-universality in such systems.
In many cases, slight modifications to the analysis performed above may need to be applied when studying free-fermion sectors of such circuits.
For example, to perform the appropriate mapping, the extent of the physical Hilbert space $\mc H$ of the matchgate system may need to be restricted to just a few symmetry sectors: the free-fermion subspace of the system of interest could correspond to only one of the parity sectors of the $\mb Z_2$ matchgates, or to only some of the particle-number sectors of the $U(1)$ matchgates.
Additionally, the free-fermion sector of the system of interest may be composed of many degenerate copies of the same subspace ($d_{\widehat\lambda}>1$ in Eq.~\eqref{eq:fund-th}), in which case the physical Hilbert space of matchgates would need to be enlarged by tensoring it with additional degrees of freedom for the mapping to be precise.

Both of these phenomena appear in the example of $SU(d)$ symmetric circuits for $d > 2$, where non-trivial superoperator symmetries have also been found when considering 2-local gates~\cite{hulse2021qudit}.\footnote{There this has been shown in a different language, but it is related to ours through the correspondences shown in Sec.~\ref{sec:manycopy}.}, although a systematic understanding of non-universality in this case is still lacking.
The main signature, which persists for arbitrarily large system sizes, is that some of the symmetry sectors in the $SU(d)$ decomposition of the Hilbert space can be mapped to number-conserving free fermions; this part of the analysis therefore reduces to the matchgate example from the previous section.
This sector maps onto free fermions with a restriction on the total particle number $N<d$, and the additional superoperators associated to free-fermion dynamics only appearing in $SU(d)$ circuits for $d>2$, where states with more than one particle can exist.\footnote{For $d=2$, a mapping can be made between some subspaces and a free-fermion system with total particle number restricted to $0\leq N\leq 1$, but such a system is only weakly non-universal.}
Furthermore, the $SU(d)$ irreps associated to free fermions are degenerate in the Hilbert space decomposition of Eq.~\eqref{eq:fund-th}, with each $N$-particle sector of the free-fermion subspace appearing with multiplicity $d_{\lambda_N}=\frac{d!}{N!(d-N
-1)!}$.

\section{Physical Implications}\label{sec:physicalimplications}
We now ask if the physical consequences of the non-universality.
One naive way to physically detect non-universality in a set of gates is to find a quantity that is conserved under $\unit$ unitary evolution {under the restricted set of symmetric gates}, but is not conserved under general $\unitt$ unitary evolution {under arbitrary set of symmetric gates}.
However, it is impossible to find such a quantity since $\mrm{comm}(\gen)=\mrm{comm}(\lgen \mathcal{G} \rgen)=\comm$.
Nevertheless, as we show below, conserved quantities of this type can be found on multiple copies of the system.
In this section we discuss some observables whose late-time physics is controlled by objects known in past literature as ``quadratic symmetries'' \cite{zimboras2015symmetry}, that have a direct connection to strong non-universality and the super-commutant.
We also discuss some multi-copy generalizations which can instead hold information on weak non-universality.
\subsection{Many-copy Interpretation and Quadratic Symmetries}\label{sec:manycopy}
The study of non-universality in this work is based on the interpretation through Eq. \eqref{eq:liouvillian} of operators as states of two copies of the original system.
Operator evolution translates to
\begin{equation}\label{eq:twocopy-evo}
    O(t)=UOU\+\implies \oket{O(t)}=U\ot U^*\oket{O},
\end{equation}
which can be interpreted as a Schr\"{o}dinger time-evolution on a two-copy Hilbert space, with the first copy evolving forwards and the second copy evolving backwards.
For $U=e^{-iHt}$, we have $\oket{O(t)}=e^{-i \ad{H} t}\oket{O}$, where $\ad{H} = H \otimes \mathds{1} - \mathds{1} \otimes H^T$ (cf. Eq.~\eqref{eq:liouvillnot}), and operators in the commutant $ Q\in \comm$ are invariant under the Heisenberg evolution Eq.~\eqref{eq:twocopy-evo} for any unitary $U$ generated by the generators $\gen$.
This is of course not the only way of interpreting a two-copy Hilbert space: the forward time evolution on both the copies of a two-copy state $\ket\psi_2\in \mc H\ot\mc H$ is
\begin{equation}
    \ket{\psi(t)}_2= U\ot U \ket\psi_2, \label{eq:twocopy}
\end{equation}
where for $U = e^{-iH t}$, the associated two-copy Hamiltonian has the form $\bad{H}=H\ot\1+\1\ot H$.
While at this level  Eqs.~(\ref{eq:twocopy-evo}) and (\ref{eq:twocopy}) do not seem to have much in common, for a given set of local Hamiltonian terms $\gen$, the commutants of superoperators of type $\{\ad{h_\alpha}\}$ (which we are referring to as \textit{superoperator symmetries}) and those of type $\{\bad{h_\alpha}\}$ (which were referred to as \textit{quadratic symmetries} in earlier literature~\cite{zimboras2015symmetry}) are actually in a one-to-one correspondence through partial transposition.
To show this, it is useful to generalize the Liouvillian notation of Eq.~\eqref{eq:liouvillian} by introducing a graphical notation for states in many-copy Hilbert spaces $\mc H^{\ot n}$ that will also be useful for the next sections.
We start by representing each copy of the Hilbert space as a dot, so that
\begin{equation}
    \tket{\vdotc{0}{$\mu_1$}\vdotc{0.4}{$\mu_2$}\vdotc{0.8}{$\cdots$}\vdotc{1.2}{$\mu_n$}}_{\!n}\defeq \ket{\mu_1}\ot\ket{\mu_2}\ot\cdots\ot\ket{\mu_n}
\end{equation}
where $\{\ket{\mu}\}$ is an orthonormal basis for $\mc H$.
For our purposes, we are restricted to cases where the total number of copies is even, with half of them corresponding to forward time-evolution, while the other half correspond to backward time-evolution.
We then introduce the following notation to represent operators
\begin{equation}
    \tket{\varcdimercr{0.4}{0}{$O$}{white}}_{\!2}\defeq \sum_{\mu_1,\mu_2} O_{\mu_1\mu_2}\tket{\vdotc{0}{$\mu_1$}\vdotc{0.4}{$\mu_2$}}_{\!2},
\end{equation}
where the arrow indicates the fact that the operator $O=\sum_{\mu_1,\mu_2} O_{\mu_1\mu_2}\ketbra{\mu_1}{\mu_2}$ is a map from $\tp{\vdotc{0}{$2$}}$ to $\tp{\vdotc{0}{$1$}}$.
Note that this is identical to the definition of $\oket O$ in Eq.~\eqref{eq:liouvillian}. For the sake of brevity we will sometimes simply indicate the number of copies as a lower index to the ket, as in Eq.~\eqref{eq:twocopy}, so that $\oket O$ becomes $\ket{O}_2$.
More generally, we will use a similar notation to denote states on higher number of copies of the Hilbert space, such as superoperators.
For example we will indicate a generic superoperator symmetry as $\ket{\mc Q}_4\in\scomm$, while the two types of superoperator symmetries appearing in Eq.~\eqref{eq:forsure} in this notation have the form:\footnote{Here we replaced $Q_1$ with $Q_1\+$ to consider the general case in which $Q_1\in\comm$ is not necessarily hermitian.}
\begin{equation}
\begin{aligned}
    Q_1\ot Q_2^T &\mapsto \tket{\varcdimercr{0.4}{0}{$Q_1$}{white}\varcdimerc{0.8}{1.2}{$Q_2$}{white}}_{\!4}\\
    \oketbra{Q_2}{Q_1\+} &\mapsto \tket{\varcdimercr{0.8}{0}{$Q_2$}{white}\varcdimerc{0.4}{1.2}{$Q_1$}{white}}_{\!4}.\label{eq:dimer-trivial-comm}
\end{aligned}
\end{equation}

We can now straightforwardly show the one-to-one map between the commutant of $\{\ad{h_\alpha}\}$ and that of $\{\bad{h_\alpha}\}$.
From Eq.~\eqref{eq:twocopy-evo} we obtain that the operators $\ket Q_2$ that satisfy $U\ot U^* \ket{Q}_2 = \ket{Q}_2$ are exactly the ones such that $[H,Q]=0$; hence replacing $U$ with superoperators $e^{-i\ad{h_\alpha}t}$ and $e^{-i\bad{h_\alpha}t}$ we get the equivalences
\begin{equation}\label{eq:unitary-patterns}
    \begin{gathered}
        [\ad{h_\alpha},\mc Q]=0\iff (U\ot U^*\ot U^*\ot U)|\mc Q\rangle_4=|\mc Q\rangle_4,\\
        [\bad{h_\alpha},\bar{\mc Q}]=0\iff (U\ot U\ot U^*\ot U^*)|{\bar{\mc Q}}\rangle_4=|{\bar{\mc Q}}\rangle_4.
    \end{gathered}
\end{equation}
This shows that superoperators in the super-commutant $\mc Q\in\scomm$ are in one to one correspondence with operators $\bar{\mc Q}$ in the commutant of a two-copy system through partial transposition of the second and fourth copies of the Hilbert space in the tensor product $\tp{\vdotc{0}{$2$}}\leftrightarrow\tp{\vdotc{0}{$4$}}$.
This allows us to establish a direct connection between the strong forms of non-universality, and physical phenomena, which we will explore in the following sections.
Note however that while the commutants of the superoperators $\{\ad{h_\alpha}\}$ and $\{\bad{h_\alpha}\}$ have the same dimension, they do not need to have the same algebraic structure, since the mapping between them does not preserve matrix multiplication.
Therefore all the results discussed in earlier sections, which connect the block structure of superoperator conserved quantities to the Lie algebra $\dlie$ are only true when considering the actual super-commutant $\scomm$, derived from $\{\ad{h_\alpha}\}$, while such connections are not so simple for quadratic conserved quantities, derived from $\{\bad{h_\alpha}\}$.
This fact also entails a few interesting consequences on the general algebraic structure of the super-commutants $\scomm$ which are explored in App.~\ref{sec:app-manycopy}, and allows for improvements in the performance of the numerical methods studied in App.~\ref{sec:app-num}.
In summary, conserved superoperators responsible for strong non-universality can be thought of as conserved quantities appearing in two-copy systems, which then provide dynamical signatures when studying the late-time behaviour of appropriate physical quantities such as higher point correlation functions and Rényi entropies.
In the case of circuits that produce a compact subgroup $\unit$, this is equivalent to the statement that strong $\scomm\supsetneq\scommt$ non-universality is equivalent to the absence of $2$-designs~\cite{li2024designslocalrandomquantum}, which is a standard result in the literature.

\begin{figure*}[t]\vspace{0pt}
\stackon[5pt]{\includegraphics[width=.45\textwidth]{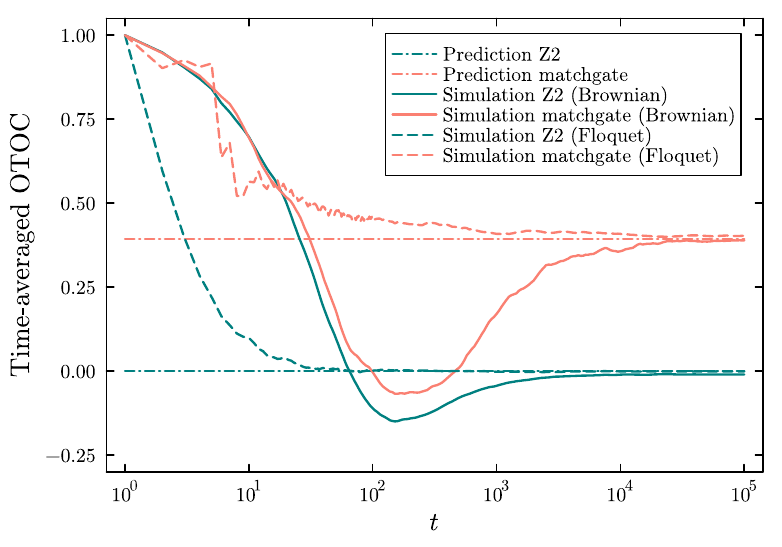}}{\textsf{(a)}}
\hspace{2em}
\stackon[5pt]{\includegraphics[width=.45\textwidth]{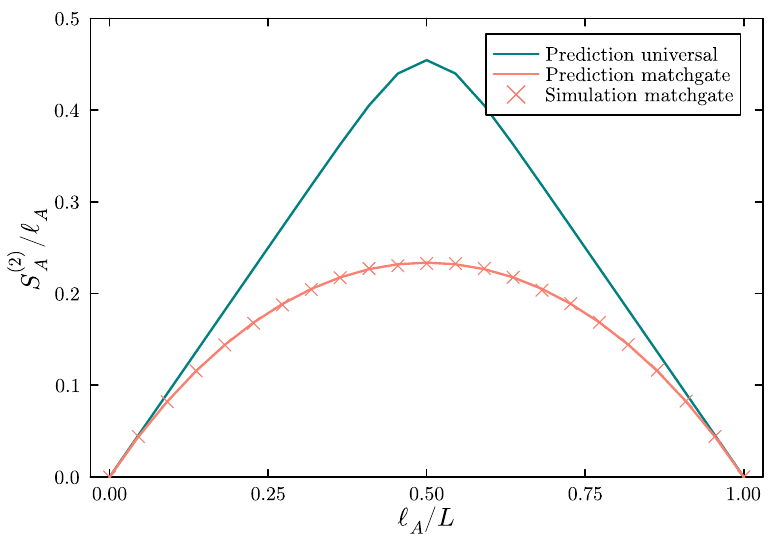}}{\textsf{(b)}}
\caption{Plots of simulations and predictions for the asymptotic values of physical observables associated to strong universality breaking. (a) Time-averaged OTOC of $Z_j$ for a $\mb Z_2$-preserving and matchgate Brownian circuit, and for a $\mb Z_2$-preserving and matchgate Floquet system ($L=6$). (b) Average value of the second Rényi entropies for a universal Brownian circuit and a matchgate Brownian circuit ($L=22$).}\label{fig:phys-mg}
\end{figure*}
\subsection{Higher Point Correlation Functions}
In this section, we will show the impact of superoperator conserved quantities on the higher point correlation functions. 
For the purposes of illustration, we focus on the Out-of-Time Ordered Correlators (OTOCs), which
are quantities that are often used to study operator spreading and chaos in quantum many-body systems~\cite{cotler2017chaos}.
For a pair of hermitian observables $A(t)$ and $B(t)$, the OTOC is defined as
\begin{equation}
\begin{aligned}
        C_{AB}(t)\defeq &\,\langle A(t)B(0)A(t)B(0)\rangle\\
    =&\,\langle UA U\+ B U A U\+ B \rangle
\end{aligned}
\end{equation}
where $\langle \cdot\rangle$ is a thermal average (we will consider the infinite temperature limit $\langle \cdot\rangle\rightarrow\frac{1}{\dim(\mc H)}\tr(\cdot)$).
When averaged over an \textit{ensemble} of unitaries, this quantity contains information about the second moment of the unitary distribution, which is evident from its expression as a four-copy time evolution operator with two forward and two backward time-evolutions:
\begin{equation}
    C_{AB}(t)= \prescript{}{4\!}{\tbra{\varcdimercr{0.4}{0}{$B$}{white}\varcdimerc{0.8}{1.2}{$B$}{white}}}
    \frac{U\ot U^*\ot U^*\ot U}{\dim(\mc H)}
    \tket{\varcdimercr{0.8}{0}{$A$}{white}\varcdimerc{0.4}{1.2}{$A$}{white}}_{\!4}.\label{eq:otoc-many-copy}
\end{equation}
Note that in order to make a clear connection to non-universality we chose to order the unitaries as in the superoperator interpretation from Eq.~\eqref{eq:unitary-patterns}.
The dynamics of the OTOC $C_{AB}(t)$ are in general very complex, but their average long-time behaviour is expected to be determined by the superoperator symmetries $\ket{\mc Q}_4\in\scomm$.
This is in analogy to the fact that the late-time behavior of two-point functions $C_A(t) \defn \langle{A(t)A(0)}\rangle$ is controlled by the physical symmetries $\ket Q_2\in\comm$, a result known as the Mazur bound~\cite{mazurbound1969, dhar2020revisiting, moudgalya2021hilbert,moudgalya2023symmetries}.
In particular, this can be written as a matrix element
\begin{equation}
C_A(t) =\prescript{}{2\!}{\tbra{\varcdimercr{0.4}{0}{$A$}{white}}}\frac{U\ot U^*}{\dim(\mc H)}\tket{\varcdimercr{0.4}{0}{$A$}{white}}_{\!2},
\end{equation}
and we expect that the late-time behavior is given by
\begin{equation}
    \overline{C_A}(\infty) \sim \sum_{Q}\prescript{}{2\!}{\tbra{\varcdimercr{0.4}{0}{$A$}{white}}}\frac{\ket{Q}_2\!\bra{Q}}{\dim(\mc H)}\tket{\varcdimercr{0.4}{0}{$A$}{white}}_{\!2}
\label{eq:reviewmazur}
\end{equation}
where $\overline{C_A}(\infty) = \lim_{t \rightarrow \infty}\frac{1}{T}\int_0^\infty C_A(t)\ \dd t$ is the time-average of $C_A(t)$, and $\{\oket Q\}$ is an orthonormal basis for the symmetries in the commutant $\comm$.
Note that the Mazur bound actually says that the R.H.S. of Eq.~(\ref{eq:reviewmazur}) is a lower-bound for the L.H.S., but in practice one finds in numerical simulations that this is quite close to saturation for general symmetric evolution.
In addition, Eq.~(\ref{eq:reviewmazur}) can be shown to be exact equality when $U$ is chosen from an ensemble of random Brownian circuits~\cite{moudgalya2023symmetries} generated by $\mG$ (reviewed in App.~\ref{sec:brown}), and $\overline{C_A(\infty)}$ is interpreted as averaging over the ensemble of Brownian circuits.
In particular, ensemble averaging over the class of Brownian unitary evolutions gives us~\cite{lashkari2013towards,bauer2017stochastic,  sunderhauf2019quantum, xu2019locality,  ogunnaike2023unifying, moudgalya2023symmetries, vardhan2024entanglement}
\begin{equation}
    \overline{U \otimes U^\ast} = e^{-P_2 t},
\label{eq:twocopyavg}
\end{equation}
where $P_2 \defn \sum_\alpha \mL_{h_\alpha}^2$ is a superoperator (that acts on two copies of the Hilbert space operator) whose ground states are the symmetries in the commutant $\comm$ ($\ad{h_\alpha}$ is the adjoint map defined in Eq.~\eqref{eq:adjoint}).
We expect that similar results generalize to OTOCs, and in particular that their late-time value depends on the superoperator symmetries of the system. 
Similar to Eq.~(\ref{eq:reviewmazur}), we will be interested in studying the long-time average of the OTOC $\overline{C_{AB}}(\infty)\defeq \lim_{T\rightarrow\infty}\frac{1}{T}\int_0^T C_{AB}(t)\,\dd t$,
and we expect that 
\begin{equation}
    \overline{C_{AB}}(\infty) \sim \sum_{\mc Q}\prescript{}{4\!}{\tbra{\varcdimercr{0.4}{0}{$B$}{white}\varcdimerc{0.8}{1.2}{$B$}{white}}}\frac{\ket{\mc Q}_4\!\bra{\mc Q}}{\mrm{dim}(\mc H)}
    \tket{\varcdimercr{0.8}{0}{$A$}{white}\varcdimerc{0.4}{1.2}{$A$}{white}}_{\!4}\label{eq:otoc-projector}
\end{equation}
where $\{\ket{\mc Q}\}$ forms an orthonormal basis of $\scomm$.
As shown in App.~\ref{sec:brown}, this is precise for the case of Brownian circuits with generators $\gen$: similar to the Eq.~(\ref{eq:twocopyavg}), the for the four-copy case we obtain
\begin{equation}
    \overline{U \otimes U^\ast \otimes U^\ast \otimes U} = e^{-P_4 t},
\label{eq:fourcopyavg}
\end{equation}
where $P_4 \defn \sum_\alpha{(\mL_{h_\alpha} \otimes \mathds{1} - \mathds{1} \otimes \mL_{h_\alpha}^T)^2}$ and the ground states of $P_4$ are the superoperator symmetries of $U$, which lie in $\scomm$.
This allows us to conjecture a direct dynamical consequence of the strong non-universality of Eq.~\eqref{eq:comm-conditions}: the saturation value for the time-averaged OTOC of systems with the given set of generators $\gen$ has a different saturation value than the one of a generic symmetric system (with more general generators chosen from $\gent$).
For example, in the matchgate example of Eq.~\eqref{eq:matchgate-gens} we show in App.~\ref{sec:gates} that
\begin{equation}
    \overline{C_{Z_jZ_j}}(\infty) = 1-8\cdot \frac{L-1}{2L^2-L}\label{eq:how-to-otoc-mg}
\end{equation}
while for a generic $\mb Z_2$ symmetric Brownian circuit the OTOC saturates to zero.
Numerical simulations of time-averaged OTOCs for the two types of Brownian circuits, as well as Floquet circuits of the same form, are shown in Fig.~\ref{fig:phys-mg}.
Similar results for any other pair of operators can naturally be extracted from Eq.~\eqref{eq:otoc-projector}.
Note that this diagnostic does not apply to systems with weak non-universality, since the superoperator symmetries for the generators $\gen$ are the same as those for $\gent$ in that case by definition.

Note that the Brownian circuit formalism from App.~\ref{sec:brown} used for the derivation of Eq.~\eqref{eq:otoc-projector} can be extended further to higher point correlation functions of the form
\begin{equation}
    C_{{\{A_i B_i\}}}(\infty) \defn \bigg\langle\prod_{i=1}^k [A_i(t)B_i(0)]\bigg\rangle.
\label{eq:highercopycorr}
\end{equation}
In this general case, the late-time behavior of such correlation functions is expected to depend on 
$k$-copy conserved quantities of the form $\ket{\mc Q}_{2k}$, in direct analogy to Eq.~(\ref{eq:otoc-projector}).
One set of higher copy conserved quantities can be derived from the physical symmetries and the superoperator symmetries, hence systems exhibiting strong non-universality (i.e., with non-trivial superoperator symmetries) would show signatures in these higher point correlation functions.
There could in principle exist non-trivial higher copy conserved quantities, that affect the behavior of higher point correlation functions, and we defer an exploration of such cases to future work.
\subsection{R\'enyi Entropies}
Signatures of non-universality also occur in the entanglement entropy, which is a standard quantity that can be studied to reveal information about complex dynamics in quantum systems.
Given a bipartition of the Hilbert space $\mc H=\mc H_A\otimes\mc H_{\bar A}$,
if $\rho_A=\tr_{\bar A}(\ketbra{\psi}{\psi})$ is the reduced density matrix for the subsystem $A$, the $k$-th Rényi entropies are defined as\footnote{The von Neumann entropy $S_{A}=-\tr(\rho_A\log\rho_A)$ is the $k \rightarrow 1$ limit of $S^{(k)}_A$.}
\begin{equation}
    S^{(k)}_{A}\defeq\frac{1}{1-k}\log(\tr(\rho_A^k)).\label{eq:def-renyi}
\end{equation}
R\'{e}nyi entropies are well-defined for any $k\in(1,\infty)$, but for integer $k>1$ the expression in Eq.~\eqref{eq:def-renyi} takes a precise meaning in terms of matrix multiplication, and the ``replica trick'' can be used to interpret it in terms of a many-copy system~\cite{calabrese2004entanglement}.
To do so we start by considering the following identities:
\begin{equation}
    \begin{aligned}
        \prescript{}{2k\!}{\bigg \langle}
        \tp{\varcdimercr{0.4}{0}{$\1$}{white}\alabel{0.8}{$...$}\varcdimercr{1.6}{1.2}{$\1$}{white}}
        \,\bigg |\,
        \tp{\varcdimercr{0.4}{0}{$O_1$}{white}\alabel{0.8}{$...$}\varcdimercr{1.6}{1.2}{$O_k$}{white}}
        \bigg \rangle_{\!2k}
        &= \prod_{j=1}^k\tr(O_j), \\
        \prescript{}{2k\!}{\bigg \langle}
        \tp{\varcdimerca{0.4}{0.7}{$\1$}{white}\varcdimercb{0.9}{1.2}{}{white}\alabel{0.8}{$...$}\varcdimercr{1.6}{0.0}{$\1$}{white}}
        \,\bigg |\,
        \tp{\varcdimercr{0.4}{0}{$O_1$}{white}\alabel{0.8}{$...$}\varcdimercr{1.6}{1.2}{$O_k$}{white}}
        \bigg \rangle_{\!2k}
        &= \tr(O_1... O_k).
    \end{aligned}
\end{equation}
We can then define a \textit{domain wall configuration} as the following state defined on $(\mc H_A)^{\ot 2k}\ot(\mc H_{\bar A})^{\ot 2k}$
\begin{equation}\label{eq:domainwalls}
    \ket{A\!:\!\bar A}_{2k}\defeq
    \tket{\varcdimercr{0.4}{0}{$\1$}{white}\alabel{0.8}{$...$}\varcdimercr{1.6}{1.2}{$\1$}{white}}_{\!\!\bar A}\ot
    \tket{\varcdimerca{0.4}{0.7}{$\1$}{white}\varcdimercb{0.9}{1.2}{}{white}\alabel{0.8}{$...$}\varcdimercr{1.6}{0.0}{$\1$}{white}}_{\!\! A}
\end{equation}
so that the $k$-th \textit{purity} of a state $\ket{\psi}$ can be expressed as
\begin{equation}
    \tr(\rho_A^k)=\prescript{}{2k}{\bra{A\!:\!\bar A}}\left(\ket{\psi}^{\ot 2k}\right).
\end{equation}
A typical state in the Hilbert space has entanglement close to the maximal possible value for any bipartition.
This classical result in information is usually stated in terms of the average entanglement of states in the Hilbert space as a function of bipartition size, known as the Page curve \cite{page1993}.
The idea of the Page curve has been extended to various other settings, such as for the average von Neumann entropy of Gaussian states \cite{PhysRevB.103.L241118, PhysRevB.104.214306,PRXQuantum.3.030201, Yu_2023} or constrained systems~\cite{morampudi2020universal}.

Here we show that some of these Page curves can be understood in terms of superoperator conserved quantities, which, in systems with strong non-universality, can lead to deviations from the expected behavior.
Similar to the higher point correlation functions discussed in the previous section, this can be concretely studied in the setting of Brownian circuits generated by $\gen$. Focusing on $k = 2$, we can write the purity as
\begin{equation}
     \tr(\rho_A^2(t))=\prescript{}{4}{\bra{A\!:\!\bar A}}U\ot U^*\ot U\ot U^* \ket{\psi}^{\ot 4},
     \label{eq:renyiexpr}
\end{equation}
and similar to Eq.~(\ref{eq:fourcopyavg}) we obtain
\begin{equation}
    \overline{U \otimes U^\ast \otimes U \otimes U^\ast} = e^{-\widetilde{P}_4 t},
\end{equation}
where $\widetilde{P}_4 \defn \sum_\alpha \left(\ad{h_\alpha}\ot\1+\1\ot\ad{h_\alpha}\right)^2$ is related to $P_4$ in Eq.~(\ref{eq:fourcopyavg}) 
through a partial transposition $\tp{\vdotc{0}{$3$}}\leftrightarrow\tp{\vdotc{0}{$4$}}$.
By going to long times we obtain an expression similar to Eq.~\eqref{eq:otoc-projector}
\begin{equation}\label{eq:renyisuperoperatorformula}
    \overline{\tr(\rho_A^2(\infty))} =\sum_{\widetilde{\mc Q}}\prescript{}{4}{\langle}{A\!:\!\bar A}\ket*{\widetilde{\mc Q}}_{4}\!\!\bra*{\widetilde{\mc Q}}\!\psi\rangle^{\ot 4}
\end{equation}
where $\{|{\widetilde{\mc Q}}\rangle_4\}$ forms an orthonormal basis of the ground state space of $\widetilde{P}_4$, which is related to the superoperator symmetries through the same partial transposition shown above.
Finally, by exploiting the convexity properties of the logarithm we find the following lower bound to the asymptotic value of the entropy\footnote{In many cases numerical simulations show close agreement between this lower bound and the asymptotic value for the entanglement, see also \cite{swann2023spacetime}.
Analytical predictions on the closeness of this lower bound would require the study of the variance of the purity, which would require an analysis of a higher number of replicas.}:
\begin{equation}
    \overline{S_A^{(2)}(\infty)} \geq -\log(\overline{\tr(\rho_A^2(\infty))})
    \label{eq:how-to-page}
\end{equation}
Note that there is usually a non-trivial dependence on the initial-state, as it selects the weight of each superoperator $|{\widetilde{\mc Q}}\rangle_4$ in the sum in Eq.~(\ref{eq:renyisuperoperatorformula}).
Intuitively, the choice of the initial state determines the portion of the Hilbert space that will be explored by the trajectory; superoperators $\mc Q\in\scommt$ emerging from the conventional symmetries (such as Eq.~\eqref{eq:dimer-trivial-comm}) simply constrain the state to evolve within a given symmetry sector, while the superoperators associated to strong non-universality are responsible for less trivial constraints.

We now use the R.H.S. of Eq.~(\ref{eq:how-to-page}) to obtain the page curves for a couple of relevant cases. Despite in general being only an inequality, simulations tend to agree very well with the predicted lower bound, showing that the fluctuations of the purity around its mean value are not too strong.
As a quick check, we can apply Eq.~\eqref{eq:how-to-page} to the case of a universal gate set, where $\scomm=\llangle\oketbra{\1}{\1}\rrangle$, and we obtain
\begin{equation}
    \begin{aligned}
        \overline{\tr(\rho_A^2(\infty))} &=\frac{(2^{\ell}-2^{-\ell})+(2^{L-\ell}-2^{\ell-L})}{2^L-2^{-L}}\label{eq:how-to-page-uni}\\
        &= \frac{1}{2^\ell}+\frac{1}{2^{L-\ell}}+o(2^{-L})
    \end{aligned}
\end{equation}
where $\ell$ is the size of the subsystem $A$. In this case there is no initial state dependence since the overlap of $\oketbra{\1}{\1}$ with $\ket\psi^{\ot 4}$ is equal to $\braket{\psi}{\psi}^2$. 
A similar result has been obtained in Ref.~\cite{Liu_2021}, and this yields the expected form of the Page curve (shown in Fig.~\ref{fig:phys-mg}b).
Conventional symmetries such as $\mb Z_2$ or $U(1)$ group symmetries tend to correct the Page result by effectively reducing the size of the Hilbert space to the size of the symmetry sectors that overlap with the initial state, and explicit computations can be done using similar methods.
For the purposes of non-universality, it is then clear that the non-trivial superoperator symmetries responsible for strong non-universality contribute in Eq.~(\ref{eq:renyisuperoperatorformula}), and can change the nature of the Page curve.
For example in App.~\ref{sec:gates} we compute the prediction of Eq.~(\ref{eq:how-to-page}) in the matchgate example of Eq.~\eqref{eq:matchgate-gens} for a Gaussian initial state.
These are a set of states that is preserved by free-fermionic evolution, and that can for example be obtained by evolving the vacuum state $\ket{\downarrow...\downarrow}$. By performing a random matchgate time evolution of such a state, we obtain the following asymptotic value for the purity 
\begin{equation}
    \overline{\tr(\rho_A^2(\infty))} =\frac{1}{2^{L-\ell}}\sum_{k=0}^{L-\ell}\frac{\binom{L}{k}\binom{2(L-\ell)}{2k}}{\binom{2L}{2k}}.\label{eq:how-to-page-mg}
\end{equation}
This yields a Page curve (shown in Fig.~\ref{fig:phys-mg}) that qualitatively shows the same features as analogous results for the average von Neumann entropy for Gaussian states derived using other methods in Refs.~\cite{PhysRevB.103.L241118, PhysRevB.104.214306,PRXQuantum.3.030201, Yu_2023}.
Similar to correlation functions such as Eq.~(\ref{eq:highercopycorr}), we expect that the saturation value of the higher Rényi entropies depend on the higher copy conserved quantities.
Since these higher copy conserved quantities can be derived from the physical and superoperator symmetries, they too would contain signatures of strong non-universality.

\subsection{Weak Non-Universality and $k$-Designs}
In the previous sections we have shown examples of signatures of non-universality which can appear in quantities at the so-called 2-copy level, i.e., those whose expressions involve $(U \otimes U^\ast)^{\otimes 2}$ in some order, e.g., see Eqs.~(\ref{eq:otoc-projector}) or (\ref{eq:renyiexpr}).
Due to the connection (Eq.~\eqref{eq:unitary-patterns}) between the super-commutant $\scomm$ and the conserved superoperators appearing in Eqs.~\eqref{eq:otoc-projector} and \eqref{eq:renyisuperoperatorformula}, these quantities are only suitable for detecting strong non-universality: any discrepancy from the behaviour expected from the minimal super-commutant $\scommt$ indicates the presence of additional conserved superoperators.
Vice versa, the absence of such signatures at the 2-copy level indicates that the system being investigated is weakly non-universal.
It is natural to wonder about possible signatures of weak non-universality at the $k$-copy level for $k>2$.
In general, the statistical properties of $k$ copies of a quantum system are expressed through the notion of $k$-designs~\cite{harrow2009random, brandao2016local, hunterjones2019unitary}: a set of gates is said to be a $k$-design of another set of gates, if the average values of $k$-copy observables (i.e., whose expressions necessarily involve unitaries such as $(U \otimes U^\ast)^{\otimes k}$ or its permutations) under random evolution\footnote{For the case of circuits generated by the unitaries $u_\alpha(\theta)$ of Eq.~\eqref{eq:localgates} one might for example consider Brownian trajectories as in App.~\ref{sec:brown} or the Haar measure on the generated group $\unit$.} of the two systems are indistinguishable.
These observables may for example be higher point correlation functions or Rényi entropies discussed in Sec.~\ref{sec:physicalimplications}, and the unitary evolution can be Brownian circuits with some set of gates (see App.~\ref{sec:brown}).
Recently there has been interest in understanding whether local circuits which are symmetric under the action of some group $G$ are a $k$-design of globally symmetric time evolutions \cite{hearth2023unitary,li2024designslocalrandomquantum}.
As a standard result in the theory of $k$-designs, it can be shown \cite{li2024designslocalrandomquantum} that for such systems the presence of $k$-copy conserved quantities is equivalent to the local circuit not being a $k$-design of the global one.\footnote{This result relies on the compactness of the group of unitaries $\unit$ generated by the local terms, and in general can be applied in our case by ensuring that the generators $h_\alpha\in\gen$ have rational spectrum (see App.~\ref{app:compact})}
As a consequence, weakly non-universal gate sets are always 2-designs, while strongly non-universal gate sets are not.

It is easy to see that if a set of gates is non-universal, then it cannot be a $k$-design for arbitrary large values of $k$, since the latter would imply that all symmetric time-evolution can be produced.
This should also apply to weakly non-universal systems, but since the discrepancy between the DLA $\dlie$ and the set of symmetric operators $\bond$ only consists in the absence of a few scar-like operators, which do not play an important role in the dynamics of the system, one might expect that these are $k$-designs for a large range of values of $k$.
Indeed, it has been shown that 2-local $U(1)$-conserving qubit circuits are $k$-designs for at least $k\sim \mc O(L)$ \cite{hearth2023unitary} and 4-local $SU(d)$-conserving qudit circuits are $k$-designs for at least $k\sim \mc O(L^2)$ \cite{li2024designslocalrandomquantum}.
Therefore signatures of weak non-universality, at least in these cases, only appear for large values of $k$.
In fact, very recently, Refs.~\cite{hulse2024unitary, mitsuhashi2024unitary, mitsuhashi2024characterization} have obtained precise values of $k$ for which semi-universal systems with certain symmetries become $k$-designs.
\section{Conclusions}\label{sec:conclusions}
In this work, we studied the question of \textit{universality} of the unitary operators of the form $\{e^{i \theta h_\alpha}\}$ generated from a given set of terms $\mG = \{h_\alpha\}$, which are usually chosen to be $k$-local on a lattice.
While it is clear that these cannot generate unitaries that have different symmetries than that of $\mG$, even the space of all unitaries with the same symmetries cannot in general be generated. Additional restrictions imposed from the fact that the starting set of unitaries is $k$-local have been shown in earlier literature for some kinds of symmetric unitaries by Marvian and collaborators~\cite{marvian2020locality, hulse2021qudit, marvian2022rotationally, marvian2024abelian}.
In this work we consider instead a completely general set of gates $\mG$, obtain the precise conditions under which they can exhibit different types of non-universality, and understand the full structure of the manifold of unitaries that can be generated.
The key technique that enables this generality is to study the dynamical Lie algebra $\dlie$ associated with the unitary evolution ($\dlie$ is generated by taking nested commutators of operators in $\mG$ and their linear combinations), and phrase it as a problem of connectivity of operators $\mG$ to the rest of the operator Hilbert space under the adjoint action of the commutators in $\mG$.
Just as the connectivity of states in the physical Hilbert space under the action of some terms $\mG$ can be understood in terms of the symmetries of $\mG$, connectivity of operators in the operator Hilbert space under the adjoint action $[h_\alpha,\bullet]$ of $\mG$ can be understood in terms of superoperator symmetries of the adjoint superoperators.
The structure of these superoperator symmetries can be systematically derived using the framework of commutant algebras, which has been successful in understanding a variety of symmetries and their associated block decompositions in the physical Hilbert space~\cite{moudgalya2021hilbert, moudgalya2022from, moudgalya2023numerical} -- here we apply this machinery to understand the block decompositions of the operator Hilbert space. 
From this understanding, we directly obtain concrete criteria for universality in terms of the superoperator conserved quantities.
This allows us to distinguish two classes of systems that exhibit non-universality.
First, there is the phenomenon of weak non-universality, where all the superoperator symmetries of the unitary evolution are derived from the physical symmetries of the system, but the number of realizable unitaries is nevertheless suppressed.
This {implies the condition known as semi-universality, which} has been the focus of many previous works on non-universality~\cite{marvian2020locality,hulse2021qudit,  marvian2022rotationally, marvian2024abelian}, and occurs for local symmetric gates $\mG$ for many physical symmetries.
Second, there is strong non-universality, where the superoperator symmetries are completely distinct from the physical symmetries of the system, which leads to novel kinds of block decompositions that constrain the connectivity of local operators. 
These occur mainly when gates $\mG$ have a special structure, such as in matchgate circuits or systems that have subspaces that map onto matchgate circuits. 
We also discussed physical implications of non-universality, but signatures in simple measures such as out-of-time-ordered correlation functions and the second Rényi entropies only manifest in systems with strong non-universality.
These two classes also connect to conditions for the simulability of a particular gate set from another gate set that were formulated in terms of ``quadratic symmetries"~\cite{zimboras2015symmetry}, which we have shown to be closely related to the superoperator symmetries we obtain.

In future work, it would be interesting to better understand some other aspects of {weakly non-universal} systems in this framework.
For example, the connection between semi-universality~\cite{marvian2024abelian} and weak non-universality is not fully clear. 
While the latter implies the former (see Sec.~\ref{subsec:semiuniversal}), the opposite is not true, as evident from the example of 2-site matchgate systems (cf. footnote \ref{ft:matchgate2}).
It would be interesting to understand whether such examples can exist for larger system sizes.
We also believe that the ideas used in the proof of weak non-universality for the $U(1)$-symmetric case in App.~\ref{sec:app-nplusoneproof} could be generalized to other gate sets, and might provide a useful criterion for proving the absence of non-trivial superoperator symmetries through simple numerical checks.
Moreover, while strong non-universality corresponds to having non-trivial superoperator conserved quantities, weak non-universality should correspond to having non-trivial conserved quantities at the level of $k$-copies of the systems.
This is due to the fact that they do not form $k$-designs for a large enough $k$ due to non-universality~\cite{hearth2023unitary, li2024designslocalrandomquantum}, and an exhaustively characterization of such $k$-copy conserved quantities and their implications for physical quantities is highly desirable. 
On this note, we should point out two recent sets of results in the literature that make progress in this direction.
Ref.~\cite{zhukas2024observation} proposed quantities that detect the degree of semi-universality in $U(1)$-conserving systems, and very recently Refs.~\cite{hulse2024unitary,mitsuhashi2024unitary, mitsuhashi2024characterization} used different techniques to derive criteria for when a given set of semi-universal gates can be a $k$-design. 
It would be interesting to understand and connect these results in the language we develop in this work.
Finally, there are many natural extensions of this problem that would be interesting to understand in this framework.
First, it would be interesting to understand the general effect of adding ancilla degrees of freedom on non-universality, particularly due to results of Ref.~\cite{marvian2020locality, marvian2022rotationally, marvian2024abelian} that show that universality for circuits with some kinds of symmetries can be ``recovered" by the addition of a few ancilla.
Then, it would be interesting to explore similar questions for more general quantum channels, for which we expect that many of the methods developed in this work might readily generalize.
Finally, being able to characterize the constraints that arise in circuits composed of gates with a discrete structure, such as Clifford~\cite{bittel2025completetheorycliffordcommutant} or dual unitary circuits~\cite{piroli2020dual}, would also be very interesting.
On a different note, the method we employ here can also be viewed as an alternate method to study dynamical Lie algebras and obtain their properties such as their dimension, which is a problem of much current interest~\cite{Larocca2022diagnosingbarren,Fontana_2024,Ragone_2024}.
This might also aid in understanding and potentially classifying more general forms of algebras, extending some recent results~\cite{wiersema2023classification, aguilar2024full}. 
This is also related to the question on the kinds of superoperator symmetries that can naturally appear in quantum many-body systems with locality, analogous to the Majorana number conservation in matchgate circuits.
It would be interesting to employ numerical methods such as those discussed in App.~\ref{sec:app-num} to systematically search for such systems.
The MPS method is particularly promising in this regard, as it can rapidly identify -- in 1d -- systems that possess an unusual super-commutant $\scomm$, which might lead to unusual kinds of dynamical Lie algebras.

\textit{Note Added --} During the preparation of this work, Refs.~\cite{hulse2024unitary, mitsuhashi2024unitary, mitsuhashi2024characterization} appeared, which explicitly showed examples of $k$-copy conserved quantities associated to semi-universality.
While Ref.~\cite{mitsuhashi2024characterization} also discusses circuits with general sets of gates, it assumes semi-universality and focuses on the analysis of $k$-designs, whereas here we are interested in explicitly characterizing the extent of non-universality in general.
Our results agree wherever there is some overlap.

\textbf{Data and materials availability:} Data analysis and simulation codes are available on Zenodo~\cite{zenodo}.

\phantom{a}\\ 

\phantom{a}\\

\section*{Acknowledgements}
We particularly thank Lesik Motrunich for enlightining discussions in the early stages of this work. 
We also thank David Gross, Pavel Kos, Barbara Kraus, Tibor Rakovszky, Tomohiro Soejima, David Stephen, Shreya Vardhan for useful discussions. 
We acknowledge support from the Munich Quantum Valley, which is supported by the Bavarian state government with funds from the Hightech Agenda Bayern Plus, and the Munich Center for Quantum Science and Technology (MCQST), supported by the Deutsche Forschungsgemeinschaft (DFG, German Research Foundation) under Germany’s Excellence Strategy--EXC--2111--390814868.
S.M. thanks Lesik Motrunich for collaboration on \cite{moudgalya2022from, moudgalya2023numerical}, and Shreya Vardhan for collaboration on \cite{vardhan2024entanglement}.
\bibliography{newrefs, refs}

@misc{zenodo,
  author       = {Lastres, Marco and Pollmann, Frank and Moudgalya, Sanjay},
  title        = {{Non-Universality from Conserved Superoperators in Unitary Circuits}},
  month        = 12,
  year         = 2025,
  publisher    = {Zenodo},
  doi          = {10.5281/zenodo.17991810},
  url          = {https://zenodo.org/records/17991811}
}

@article{kazi2024permutationinvariant,
   title={On the universality of ${S}_n$-equivariant $k$-body gates},
   volume={26},
   ISSN={1367-2630},
   url={http://dx.doi.org/10.1088/1367-2630/ad4819},
   DOI={10.1088/1367-2630/ad4819},
   number={5},
   journal={New Journal of Physics},
   publisher={IOP Publishing},
   author={Kazi, Sujay and Larocca, Martín and Cerezo, M},
   year={2024},
   month=may, pages={053030}
}

@article{xu2019locality,
  title = {Locality, Quantum Fluctuations, and Scrambling},
  author = {Xu, Shenglong and Swingle, Brian},
  journal = {Phys. Rev. X},
  volume = {9},
  issue = {3},
  pages = {031048},
  numpages = {21},
  year = {2019},
  month = {Sep},
  publisher = {American Physical Society},
  doi = {10.1103/PhysRevX.9.031048},
  url = {https://link.aps.org/doi/10.1103/PhysRevX.9.031048}
}

@Article{lashkari2013towards,
author={Lashkari, Nima
and Stanford, Douglas
and Hastings, Matthew
and Osborne, Tobias
and Hayden, Patrick},
title={Towards the fast scrambling conjecture},
journal={Journal of High Energy Physics},
year={2013},
month={Apr},
day={03},
volume={2013},
number={4},
pages={22},
issn={1029-8479},
doi={10.1007/JHEP04(2013)022},
url={https://doi.org/10.1007/JHEP04(2013)022}
}

@article{rakovszky2018diffusive,
  title = {Diffusive Hydrodynamics of Out-of-Time-Ordered Correlators with Charge Conservation},
  author = {Rakovszky, Tibor and Pollmann, Frank and von Keyserlingk, C. W.},
  journal = {Phys. Rev. X},
  volume = {8},
  issue = {3},
  pages = {031058},
  numpages = {28},
  year = {2018},
  month = {Sep},
  publisher = {American Physical Society},
  doi = {10.1103/PhysRevX.8.031058},
  url = {https://link.aps.org/doi/10.1103/PhysRevX.8.031058}
}

@article{aguilar2024full,
  title={Full classification of Pauli Lie algebras},
  author={Aguilar, Gerard and Cichy, Simon and Eisert, Jens and Bittel, Lennart},
  journal={arXiv preprint arXiv:2408.00081},
  year={2024}
}

@article{zimboras2015symmetry,
  title = {Symmetry criteria for quantum simulability of effective interactions},
  author = {Zimbor\'as, Zolt\'an and Zeier, Robert and Schulte-Herbr\"uggen, Thomas and Burgarth, Daniel},
  journal = {Phys. Rev. A},
  volume = {92},
  issue = {4},
  pages = {042309},
  numpages = {8},
  year = {2015},
  month = {Oct},
  publisher = {American Physical Society},
  doi = {10.1103/PhysRevA.92.042309},
  url = {https://link.aps.org/doi/10.1103/PhysRevA.92.042309}
}

@Article{zeier2011symmetry,
author={Zeier, Robert
and Schulte-Herbr{\"u}ggen, Thomas},
title={Symmetry principles in quantum systems theory},
journal={Journal of Mathematical Physics},
year={2011},
month={Nov},
day={23},
volume={52},
number={11},
pages={113510},
issn={0022-2488},
doi={10.1063/1.3657939},
url={https://doi.org/10.1063/1.3657939}
}

@article{chen2010local,
  title = {Local unitary transformation, long-range quantum entanglement, wave function renormalization, and topological order},
  author = {Chen, Xie and Gu, Zheng-Cheng and Wen, Xiao-Gang},
  journal = {Phys. Rev. B},
  volume = {82},
  issue = {15},
  pages = {155138},
  numpages = {28},
  year = {2010},
  month = {Oct},
  publisher = {American Physical Society},
  doi = {10.1103/PhysRevB.82.155138},
  url = {https://link.aps.org/doi/10.1103/PhysRevB.82.155138}
}

@article{huang2015quantum,
  title = {Quantum circuit complexity of one-dimensional topological phases},
  author = {Huang, Yichen and Chen, Xie},
  journal = {Phys. Rev. B},
  volume = {91},
  issue = {19},
  pages = {195143},
  numpages = {10},
  year = {2015},
  month = {May},
  publisher = {American Physical Society},
  doi = {10.1103/PhysRevB.91.195143},
  url = {https://link.aps.org/doi/10.1103/PhysRevB.91.195143}
}

@article{zhou2020diffusive,
  title = {Diffusive scaling of R\'enyi entanglement entropy},
  author = {Zhou, Tianci and Ludwig, Andreas W. W.},
  journal = {Phys. Rev. Res.},
  volume = {2},
  issue = {3},
  pages = {033020},
  numpages = {9},
  year = {2020},
  month = {Jul},
  publisher = {American Physical Society},
  doi = {10.1103/PhysRevResearch.2.033020},
  url = {https://link.aps.org/doi/10.1103/PhysRevResearch.2.033020}
}

@article{huang2020dynamics,
doi = {10.1088/2633-1357/abd1e2},
url = {https://dx.doi.org/10.1088/2633-1357/abd1e2},
year = {2020},
month = {dec},
publisher = {IOP Publishing},
volume = {1},
number = {3},
pages = {035205},
author = {Yichen Huang},
title = {Dynamics of Rényi entanglement entropy in diffusive qudit systems},
journal = {IOP SciNotes}}

@article{rakovszky2019subballistic,
  title = {Sub-ballistic Growth of R\'enyi Entropies due to Diffusion},
  author = {Rakovszky, Tibor and Pollmann, Frank and von Keyserlingk, C. W.},
  journal = {Phys. Rev. Lett.},
  volume = {122},
  issue = {25},
  pages = {250602},
  numpages = {6},
  year = {2019},
  month = {Jun},
  publisher = {American Physical Society},
  doi = {10.1103/PhysRevLett.122.250602},
  url = {https://link.aps.org/doi/10.1103/PhysRevLett.122.250602}
}

@article{khemani2018operator,
  title = {Operator Spreading and the Emergence of Dissipative Hydrodynamics under Unitary Evolution with Conservation Laws},
  author = {Khemani, Vedika and Vishwanath, Ashvin and Huse, David A.},
  journal = {Phys. Rev. X},
  volume = {8},
  issue = {3},
  pages = {031057},
  numpages = {25},
  year = {2018},
  month = {Sep},
  publisher = {American Physical Society},
  doi = {10.1103/PhysRevX.8.031057},
  url = {https://link.aps.org/doi/10.1103/PhysRevX.8.031057}
}

@article{friedman2019spectral,
  title = {Spectral Statistics and Many-Body Quantum Chaos with Conserved Charge},
  author = {Friedman, Aaron J. and Chan, Amos and De Luca, Andrea and Chalker, J. T.},
  journal = {Phys. Rev. Lett.},
  volume = {123},
  issue = {21},
  pages = {210603},
  numpages = {6},
  year = {2019},
  month = {Nov},
  publisher = {American Physical Society},
  doi = {10.1103/PhysRevLett.123.210603},
  url = {https://link.aps.org/doi/10.1103/PhysRevLett.123.210603}
}

@article{calabrese2004entanglement,
doi = {10.1088/1742-5468/2004/06/P06002},
url = {https://dx.doi.org/10.1088/1742-5468/2004/06/P06002},
year = {2004},
month = {jun},
publisher = {},
volume = {2004},
number = {06},
pages = {P06002},
author = {Pasquale Calabrese and  John Cardy},
title = {Entanglement entropy and quantum field theory},
journal = {Journal of Statistical Mechanics: Theory and Experiment}}

@ARTICLE{wiersema2023classification,
  title={Classification of dynamical Lie algebras of 2-local spin systems on linear, circular and fully connected topologies},
  author={Roeland Wiersema and Efekan K{\"o}kc{\"u} and Alexander F Kemper and Bojko N Bakalov},
  journal={Npj Quantum Information},
  year={2023},
  volume={10},
  url={https://api.semanticscholar.org/CorpusID:261697035}
}

@ARTICLE{zhukas2024observation,
       author = {{Zhukas}, Liudmila A. and {Wang}, Qingfeng and {Katz}, Or and {Monroe}, Christopher and {Marvian}, Iman},
        title = "{Observation of the Symmetry-Protected Signature of 3-body Interactions}",
      journal = {arXiv e-prints},
     keywords = {Quantum Physics, Condensed Matter - Strongly Correlated Electrons, High Energy Physics - Theory, Mathematical Physics, Nuclear Theory},
         year = 2024,
        month = aug,
archivePrefix = {arXiv},
       eprint = {2408.10475},
 primaryClass = {quant-ph},
       adsurl = {https://ui.adsabs.harvard.edu/abs/2024arXiv240810475Z},
      adsnote = {Provided by the SAO/NASA Astrophysics Data System}
}

@ARTICLE{mitsuhashi2024unitary,
       author = {{Mitsuhashi}, Yosuke and {Suzuki}, Ryotaro and {Soejima}, Tomohiro and {Yoshioka}, Nobuyuki},
        title = "{Unitary Designs of Symmetric Local Random Circuits}",
      journal = {arXiv e-prints},
     keywords = {Quantum Physics, Condensed Matter - Statistical Mechanics},
         year = 2024,
        month = aug,
archivePrefix = {arXiv},
       eprint = {2408.13472},
 primaryClass = {quant-ph},
       adsurl = {https://ui.adsabs.harvard.edu/abs/2024arXiv240813472M},
      adsnote = {Provided by the SAO/NASA Astrophysics Data System}
}

@ARTICLE{mitsuhashi2024characterization,
       author = {{Mitsuhashi}, Yosuke and {Suzuki}, Ryotaro and {Soejima}, Tomohiro and {Yoshioka}, Nobuyuki},
        title = "{Characterization of Randomness in Quantum Circuits of Continuous Gate Sets}",
      journal = {arXiv e-prints},
     keywords = {Quantum Physics, Condensed Matter - Statistical Mechanics},
         year = 2024,
        month = aug,
archivePrefix = {arXiv},
       eprint = {2408.13475},
 primaryClass = {quant-ph},
       adsurl = {https://ui.adsabs.harvard.edu/abs/2024arXiv240813475M},
      adsnote = {Provided by the SAO/NASA Astrophysics Data System}
}

@ARTICLE{hulse2024unitary,
       author = {{Hulse}, Austin and {Liu}, Hanqing and {Marvian}, Iman},
        title = "{Unitary Designs from Random Symmetric Quantum Circuits}",
      journal = {arXiv e-prints},
     keywords = {Quantum Physics, Condensed Matter - Statistical Mechanics, High Energy Physics - Theory, Mathematical Physics, Nuclear Theory},
         year = 2024,
        month = aug,
archivePrefix = {arXiv},
       eprint = {2408.14463},
 primaryClass = {quant-ph},
       adsurl = {https://ui.adsabs.harvard.edu/abs/2024arXiv240814463H},
      adsnote = {Provided by the SAO/NASA Astrophysics Data System}
}

@article{morampudi2020universal,
  title = {Universal Entanglement of Typical States in Constrained Systems},
  author = {Morampudi, S. C. and Chandran, A. and Laumann, C. R.},
  journal = {Phys. Rev. Lett.},
  volume = {124},
  issue = {5},
  pages = {050602},
  numpages = {6},
  year = {2020},
  month = {Feb},
  publisher = {American Physical Society},
  doi = {10.1103/PhysRevLett.124.050602},
  url = {https://link.aps.org/doi/10.1103/PhysRevLett.124.050602}
}

@misc{moudgalya2023symmetries,
   title={Symmetries as Ground States of Local Superoperators: Hydrodynamic Implications},
   volume={5},
   ISSN={2691-3399},
   url={http://dx.doi.org/10.1103/PRXQuantum.5.040330},
   DOI={10.1103/prxquantum.5.040330},
   number={4},
   journal={PRX Quantum},
   publisher={American Physical Society (APS)},
   author={Moudgalya, Sanjay and Motrunich, Olexei I.},
   year={2024},
   month=nov
}

@article{dhar2020revisiting,
title = {{Revisiting the Mazur bound and the Suzuki equality}},
journal = {Chaos, Solitons \& Fractals},
volume = {144},
pages = {110618},
year = {2021},
issn = {0960-0779},
doi = {https://doi.org/10.1016/j.chaos.2020.110618},
url = {https://www.sciencedirect.com/science/article/pii/S0960077920310092},
author = {Abhishek Dhar and Aritra Kundu and Keiji Saito},
keywords = {Mazur bound, Integrable systems, Auto-correlation functions and ergodicity}}

@article{mazurbound1969,
	title        = {Non-ergodicity of phase functions in certain systems},
	author       = {P. Mazur},
	year         = 1969,
	journal      = {Physica},
	volume       = 43,
	number       = 4,
	pages        = {533--545},
	doi          = {https://doi.org/10.1016/0031-8914(69)90185-2},
	issn         = {0031-8914},
	url          = {http://www.sciencedirect.com/science/article/pii/0031891469901852}
}

@inbook{moudgalya2019thermalization,
author = {Sanjay   Moudgalya  and  Abhinav   Prem  and  Rahul   Nandkishore  and  Nicolas   Regnault  and  B. Andrei   Bernevig },
title = {{Thermalization and Its Absence within Krylov Subspaces of a Constrained Hamiltonian}},
booktitle = {Memorial Volume for Shoucheng Zhang},
chapter = {7},
pages = {147-209},
doi = {10.1142/9789811231711_0009},
URL = {https://www.worldscientific.com/doi/abs/10.1142/9789811231711_0009},
}

@article{turner2017quantum,
	title        = {Weak ergodicity breaking from quantum many-body scars},
	author       = {Turner, CJ and Michailidis, AA and Abanin, DA and Serbyn, M and Papic, Z},
	year         = 2018,
	journal      = {Nature Physics},
	publisher    = {Springer Nature},
	volume       = 14,
	number       = 7,
	pages        = {745--749},
	doi          = {10.1038/s41567-018-0137-5},
	url          = {https://www.nature.com/articles/s41567-018-0137-5}
}

@article{perezgarcia2007matrix,
	title        = {Matrix Product State Representations},
	author       = {Perez-Garcia, D. and Verstraete, F. and Wolf, M. M. and Cirac, J. I.},
	year         = 2007,
	month        = jul,
	journal      = {Quantum Info. Comput.},
	publisher    = {Rinton Press, Incorporated},
	address      = {Paramus, NJ},
	volume       = 7,
	number       = 5,
	pages        = {401–430},
	issn         = {1533-7146},
	issue_date   = {July 2007},
	numpages     = 30
}

@article{zanardi2001virtual,
	title        = {Virtual Quantum Subsystems},
	author       = {Zanardi, Paolo},
	year         = 2001,
	month        = {Jul},
	journal      = {Phys. Rev. Lett.},
	publisher    = {American Physical Society},
	volume       = 87,
	pages        = {077901},
	doi          = {10.1103/PhysRevLett.87.077901},
	url          = {https://link.aps.org/doi/10.1103/PhysRevLett.87.077901},
	issue        = 7,
	numpages     = 4
}

@article{bartlett2007reference,
	title        = {Reference frames, superselection rules, and quantum information},
	author       = {Bartlett, Stephen D. and Rudolph, Terry and Spekkens, Robert W.},
	year         = 2007,
	month        = {Apr},
	journal      = {Rev. Mod. Phys.},
	publisher    = {American Physical Society},
	volume       = 79,
	pages        = {555--609},
	doi          = {10.1103/RevModPhys.79.555},
	url          = {https://link.aps.org/doi/10.1103/RevModPhys.79.555},
	issue        = 2,
	numpages     = {0}
}

@article{lidar1998decoherence,
	title        = {Decoherence-Free Subspaces for Quantum Computation},
	author       = {Lidar, D. A. and Chuang, I. L. and Whaley, K. B.},
	year         = 1998,
	month        = {Sep},
	journal      = {Phys. Rev. Lett.},
	publisher    = {American Physical Society},
	volume       = 81,
	pages        = {2594--2597},
	doi          = {10.1103/PhysRevLett.81.2594},
	url          = {https://link.aps.org/doi/10.1103/PhysRevLett.81.2594},
	issue        = 12,
	numpages     = {0}
}

@inbook{lidar2003decoherencereview,
	title        = {Decoherence-Free Subspaces and Subsystems},
	author       = {Lidar, Daniel A. and Birgitta Whaley, K.},
	year         = 2003,
	booktitle    = {Irreversible Quantum Dynamics},
	publisher    = {Springer Berlin Heidelberg},
	address      = {Berlin, Heidelberg},
	pages        = {83--120},
	doi          = {10.1007/3-540-44874-8_5},
	url          = {https://doi.org/10.1007/3-540-44874-8_5},
	editor       = {Benatti, Fabio and Floreanini, Roberto}
}

@Article{haferkamp2022linear,
author={Haferkamp, Jonas
and Faist, Philippe
and Kothakonda, Naga B. T.
and Eisert, Jens
and Yunger Halpern, Nicole},
title={Linear growth of quantum circuit complexity},
journal={Nature Physics},
year={2022},
month={May},
day={01},
volume={18},
number={5},
pages={528-532},
issn={1745-2481},
doi={10.1038/s41567-022-01539-6},
url={https://doi.org/10.1038/s41567-022-01539-6}
}

@article{lent1993quantum,
doi = {10.1088/0957-4484/4/1/004},
url = {https://dx.doi.org/10.1088/0957-4484/4/1/004},
year = {1993},
month = {jan},
publisher = {},
volume = {4},
number = {1},
pages = {49},
author = {C S Lent and  P D Tougaw and  W Porod and  G H Bernstein},
title = {Quantum cellular automata},
journal = {Nanotechnology},
}

@article{farrelly2020review,
  doi = {10.22331/q-2020-11-30-368},
  url = {https://doi.org/10.22331/q-2020-11-30-368},
  title = {A review of {Q}uantum {C}ellular {A}utomata},
  author = {Farrelly, Terry},
  journal = {{Quantum}},
  issn = {2521-327X},
  publisher = {{Verein zur F{\"{o}}rderung des Open Access Publizierens in den Quantenwissenschaften}},
  volume = {4},
  pages = {368},
  month = nov,
  year = {2020}
}

@Article{arrighi2019overview,
author={Arrighi, P.},
title={An overview of quantum cellular automata},
journal={Natural Computing},
year={2019},
month={Dec},
day={01},
volume={18},
number={4},
pages={885-899},
issn={1572-9796},
doi={10.1007/s11047-019-09762-6},
url={https://doi.org/10.1007/s11047-019-09762-6}
}

@Article{bulchandani2021smooth,
author={Bulchandani, Vir B.
and Sondhi, S. L.},
title={How smooth is quantum complexity?},
journal={Journal of High Energy Physics},
year={2021},
month={Oct},
day={28},
volume={2021},
number={10},
pages={230},
issn={1029-8479},
doi={10.1007/JHEP10(2021)230},
url={https://doi.org/10.1007/JHEP10(2021)230}
}

@article{nielsen2005geometric,
       author = {{Nielsen}, Michael A.},
        title = "{A geometric approach to quantum circuit lower bounds}",
      journal = {arXiv e-prints},
     keywords = {Quantum Physics},
         year = 2005,
        month = feb,
          eid = {quant-ph/0502070},
          doi = {10.48550/arXiv.quant-ph/0502070},
archivePrefix = {arXiv},
 primaryClass = {quant-ph},
       adsurl = {https://ui.adsabs.harvard.edu/abs/2005quant.ph..2070N},
      adsnote = {Provided by the SAO/NASA Astrophysics Data System}
}

@article{brown2018second,
  title = {Second law of quantum complexity},
  author = {Brown, Adam R. and Susskind, Leonard},
  journal = {Phys. Rev. D},
  volume = {97},
  issue = {8},
  pages = {086015},
  numpages = {29},
  year = {2018},
  month = {Apr},
  publisher = {American Physical Society},
  doi = {10.1103/PhysRevD.97.086015},
  url = {https://link.aps.org/doi/10.1103/PhysRevD.97.086015}
}

@Article{piroli2022random,
author={Piroli, Lorenzo},
title={Random circuits have no shortcuts},
journal={Nature Physics},
year={2022},
month={May},
day={01},
volume={18},
number={5},
pages={482-483},
abstract={Theoretical physicists studying black holes have produced a conjecture that random quantum circuits cannot be simplified. Now, a minimal version of this conjecture has been proven, reaching a milestone in quantum-circuit complexity theory.},
issn={1745-2481},
doi={10.1038/s41567-022-01559-2},
url={https://doi.org/10.1038/s41567-022-01559-2}
}

@article{moudgalya2021spectral,
  title = {Spectral statistics in constrained many-body quantum chaotic systems},
  author = {Moudgalya, Sanjay and Prem, Abhinav and Huse, David A. and Chan, Amos},
  journal = {Phys. Rev. Res.},
  volume = {3},
  issue = {2},
  pages = {023176},
  numpages = {27},
  year = {2021},
  month = {Jun},
  publisher = {American Physical Society},
  doi = {10.1103/PhysRevResearch.3.023176},
  url = {https://link.aps.org/doi/10.1103/PhysRevResearch.3.023176}
}

@article{batista2000tJz,
	title        = {Quantum Phase Diagram of the $\mathit{t}\ensuremath{-}{J}_{z}$ Chain Model},
	author       = {Batista, C. D. and Ortiz, G.},
	year         = 2000,
	month        = {Nov},
	journal      = {Phys. Rev. Lett.},
	publisher    = {American Physical Society},
	volume       = 85,
	pages        = {4755--4758},
	doi          = {10.1103/PhysRevLett.85.4755},
	url          = {https://link.aps.org/doi/10.1103/PhysRevLett.85.4755},
	issue        = 22,
	numpages     = {0}
}

@Article{serbyn2020review,
author={Serbyn, Maksym
and Abanin, Dmitry A.
and Papi{\'{c}}, Zlatko},
title={Quantum many-body scars and weak breaking of ergodicity},
journal={Nature Physics},
year={2021},
month={Jun},
day={01},
volume={17},
number={6},
pages={675-685},
issn={1745-2481},
doi={10.1038/s41567-021-01230-2},
url={https://doi.org/10.1038/s41567-021-01230-2}
}

@article{yang2019hilbertspace,
	title        = {Hilbert-Space Fragmentation from Strict Confinement},
	author       = {Yang, Zhi-Cheng and Liu, Fangli and Gorshkov, Alexey V. and Iadecola, Thomas},
	year         = 2020,
	month        = {May},
	journal      = {Phys. Rev. Lett.},
	publisher    = {American Physical Society},
	volume       = 124,
	pages        = 207602,
	doi          = {10.1103/PhysRevLett.124.207602},
	url          = {https://link.aps.org/doi/10.1103/PhysRevLett.124.207602},
	issue        = 20,
	numpages     = 6
}

@article{sala2020fragmentation,
	title        = {Ergodicity Breaking Arising from Hilbert Space Fragmentation in Dipole-Conserving Hamiltonians},
	author       = {Sala, Pablo and Rakovszky, Tibor and Verresen, Ruben and Knap, Michael and Pollmann, Frank},
	year         = 2020,
	month        = {Feb},
	journal      = {Phys. Rev. X},
	publisher    = {American Physical Society},
	volume       = 10,
	pages        = {011047},
	doi          = {10.1103/PhysRevX.10.011047},
	url          = {https://link.aps.org/doi/10.1103/PhysRevX.10.011047},
	issue        = 1,
	numpages     = 19
}

@article{rakovszky2020statistical,
	title        = {Statistical localization: From strong fragmentation to strong edge modes},
	author       = {Rakovszky, Tibor and Sala, Pablo and Verresen, Ruben and Knap, Michael and Pollmann, Frank},
	year         = 2020,
	month        = {Mar},
	journal      = {Phys. Rev. B},
	publisher    = {American Physical Society},
	volume       = 101,
	pages        = 125126,
	doi          = {10.1103/PhysRevB.101.125126},
	url          = {https://link.aps.org/doi/10.1103/PhysRevB.101.125126},
	issue        = 12,
	numpages     = 23
}

@article{landsman1998lecture,
	title        = {{Lecture notes on $C^*$-algebras, Hilbert $C^*$-modules, and quantum mechanics}},
	author       = {Landsman, Nicolas P},
	year         = 1998,
	journal      = {arXiv preprint math-ph/9807030}
}

@article{moudgalya2021hilbert,
  title = {Hilbert Space Fragmentation and Commutant Algebras},
  author = {Moudgalya, Sanjay and Motrunich, Olexei I.},
  journal = {Phys. Rev. X},
  volume = {12},
  issue = {1},
  pages = {011050},
  numpages = {44},
  year = {2022},
  month = {Mar},
  publisher = {American Physical Society},
  doi = {10.1103/PhysRevX.12.011050},
  url = {https://link.aps.org/doi/10.1103/PhysRevX.12.011050}
}

@article{harlow2017,
	title        = {The Ryu--Takayanagi Formula from Quantum Error Correction},
	author       = {Harlow, Daniel},
	year         = 2017,
	month        = {Sep},
	day          = {01},
	journal      = {Communications in Mathematical Physics},
	volume       = 354,
	number       = 3,
	pages        = {865--912},
	doi          = {10.1007/s00220-017-2904-z},
	issn         = {1432-0916},
	url          = {https://doi.org/10.1007/s00220-017-2904-z}
}

@book{fulton2013representation,
	title        = {Representation theory: a first course},
	author       = {Fulton, William and Harris, Joe},
	year         = 2013,
	publisher    = {Springer Science \& Business Media},
	volume       = 129
}

@article{moudgalya2022from,
title = {{From symmetries to commutant algebras in standard Hamiltonians}},
journal = {Annals of Physics},
volume = {455},
pages = {169384},
year = {2023},
issn = {0003-4916},
doi = {https://doi.org/10.1016/j.aop.2023.169384},
url = {https://www.sciencedirect.com/science/article/pii/S0003491623001707},
author = {Sanjay Moudgalya and Olexei I. Motrunich},
keywords = {Symmetry, Commutant algebras, Quantum many-body scars, Hilbert space fragmentation}}

@article{moudgalya2022exhaustive,
   title={Exhaustive Characterization of Quantum Many-Body Scars Using Commutant Algebras},
   volume={14},
   ISSN={2160-3308},
   url={http://dx.doi.org/10.1103/PhysRevX.14.041069},
   DOI={10.1103/physrevx.14.041069},
   number={4},
   journal={Physical Review X},
   publisher={American Physical Society (APS)},
   author={Moudgalya, Sanjay and Motrunich, Olexei I.},
   year={2024},
   month=dec
}

@Article{cotler2017chaos,
author={Cotler, Jordan
and Hunter-Jones, Nicholas
and Liu, Junyu
and Yoshida, Beni},
title={Chaos, complexity, and random matrices},
journal={Journal of High Energy Physics},
year={2017},
month={Nov},
day={09},
volume={2017},
number={11},
pages={48},
issn={1029-8479},
doi={10.1007/JHEP11(2017)048},
url={https://doi.org/10.1007/JHEP11(2017)048}
}

@article{spee2018mode,
  title = {Mode entanglement of Gaussian fermionic states},
  author = {Spee, C. and Schwaiger, K. and Giedke, G. and Kraus, B.},
  journal = {Phys. Rev. A},
  volume = {97},
  issue = {4},
  pages = {042325},
  numpages = {20},
  year = {2018},
  month = {Apr},
  publisher = {American Physical Society},
  doi = {10.1103/PhysRevA.97.042325},
  url = {https://link.aps.org/doi/10.1103/PhysRevA.97.042325}
}

@article{bravyi2004lagrangian,
  title={Lagrangian representation for fermionic linear optics},
  author={Sergey Bravyi},
  journal={Quantum Inf. Comput.},
  year={2004},
  volume={5},
  pages={216-238},
  url={https://api.semanticscholar.org/CorpusID:43803405}
}

@article{bao2021symmetry,
title = {Symmetry enriched phases of quantum circuits},
journal = {Annals of Physics},
volume = {435},
pages = {168618},
year = {2021},
note = {Special issue on Philip W. Anderson},
issn = {0003-4916},
doi = {https://doi.org/10.1016/j.aop.2021.168618},
url = {https://www.sciencedirect.com/science/article/pii/S0003491621002244},
author = {Yimu Bao and Soonwon Choi and Ehud Altman}}

@article{prosen2008third,
doi = {10.1088/1367-2630/10/4/043026},
url = {https://dx.doi.org/10.1088/1367-2630/10/4/043026},
year = {2008},
month = {apr},
publisher = {},
volume = {10},
number = {4},
pages = {043026},
author = {Tomaž Prosen},
title = {Third quantization: a general method to solve master equations for quadratic open Fermi systems},
journal = {New Journal of Physics}
}

@ARTICLE{jozsa2008matchgate,
   title={Matchgates and classical simulation of quantum circuits},
   volume={464},
   ISSN={1471-2946},
   url={http://dx.doi.org/10.1098/rspa.2008.0189},
   DOI={10.1098/rspa.2008.0189},
   number={2100},
   journal={Proceedings of the Royal Society A: Mathematical, Physical and Engineering Sciences},
   publisher={The Royal Society},
   author={Jozsa, Richard and Miyake, Akimasa},
   year={2008},
   month=jul, pages={3089–3106}
}

@article{terhal2002classical,
  title = {Classical simulation of noninteracting-fermion quantum circuits},
  author = {Terhal, Barbara M. and DiVincenzo, David P.},
  journal = {Phys. Rev. A},
  volume = {65},
  issue = {3},
  pages = {032325},
  numpages = {10},
  year = {2002},
  month = {Mar},
  publisher = {American Physical Society},
  doi = {10.1103/PhysRevA.65.032325},
  url = {https://link.aps.org/doi/10.1103/PhysRevA.65.032325}
}

@article{valiant2002quantum,
author = {Valiant, Leslie G.},
title = {Quantum Circuits That Can Be Simulated Classically in Polynomial Time},
journal = {SIAM Journal on Computing},
volume = {31},
number = {4},
pages = {1229-1254},
year = {2002},
doi = {10.1137/S0097539700377025},
URL = {https://doi.org/10.1137/S0097539700377025},
eprint = {https://doi.org/10.1137/S0097539700377025}}

@Article{brandao2016local,
author={Brand{\~a}o, Fernando G. S. L.
and Harrow, Aram W.
and Horodecki, Micha{\l}},
title={Local Random Quantum Circuits are Approximate Polynomial-Designs},
journal={Communications in Mathematical Physics},
year={2016},
month={Sep},
day={01},
volume={346},
number={2},
pages={397-434},
issn={1432-0916},
doi={10.1007/s00220-016-2706-8},
url={https://doi.org/10.1007/s00220-016-2706-8}
}

@Article{harrow2009random,
author={Harrow, Aram W.
and Low, Richard A.},
title={Random Quantum Circuits are Approximate 2-designs},
journal={Communications in Mathematical Physics},
year={2009},
month={Oct},
day={01},
volume={291},
number={1},
pages={257-302},
issn={1432-0916},
doi={10.1007/s00220-009-0873-6},
url={https://doi.org/10.1007/s00220-009-0873-6}
}

@ARTICLE{hunterjones2019unitary,
       author = {{Hunter-Jones}, Nicholas},
        title = "{Unitary designs from statistical mechanics in random quantum circuits}",
      journal = {arXiv e-prints},
     keywords = {Quantum Physics, Condensed Matter - Statistical Mechanics, Condensed Matter - Strongly Correlated Electrons, High Energy Physics - Theory},
         year = 2019,
        month = may,
archivePrefix = {arXiv},
       eprint = {1905.12053},
 primaryClass = {quant-ph},
       adsurl = {https://ui.adsabs.harvard.edu/abs/2019arXiv190512053H},
      adsnote = {Provided by the SAO/NASA Astrophysics Data System}
}

@article{piroli2020dual,
  title = {Exact dynamics in dual-unitary quantum circuits},
  author = {Piroli, Lorenzo and Bertini, Bruno and Cirac, J. Ignacio and Prosen, Toma\ifmmode \check{z}\else \v{z}\fi{}},
  journal = {Phys. Rev. B},
  volume = {101},
  issue = {9},
  pages = {094304},
  numpages = {16},
  year = {2020},
  month = {Mar},
  publisher = {American Physical Society},
  doi = {10.1103/PhysRevB.101.094304},
  url = {https://link.aps.org/doi/10.1103/PhysRevB.101.094304}
}

@ARTICLE{vardhan2024entanglement,
       author = {{Vardhan}, Shreya and {Moudgalya}, Sanjay},
        title = "{Entanglement dynamics from universal low-lying modes}",
      journal = {arXiv e-prints},
     keywords = {Condensed Matter - Statistical Mechanics, Condensed Matter - Strongly Correlated Electrons, High Energy Physics - Theory, Quantum Physics},
         year = 2024,
        month = jul,
archivePrefix = {arXiv},
       eprint = {2407.16763},
 primaryClass = {cond-mat.stat-mech},
       adsurl = {https://ui.adsabs.harvard.edu/abs/2024arXiv240716763V},
      adsnote = {Provided by the SAO/NASA Astrophysics Data System}
}

@article{moudgalya2021review,
	title        = {Quantum many-body scars and Hilbert space fragmentation: a review of exact results},
	author       = {Sanjay Moudgalya and B Andrei Bernevig and Nicolas Regnault},
	year         = 2022,
	month        = {jul},
	journal      = {Reports on Progress in Physics},
	publisher    = {{IOP} Publishing},
	volume       = 85,
	number       = 8,
	pages        = {086501},
	doi          = {10.1088/1361-6633/ac73a0},
	url          = {https://doi.org/10.1088/1361-6633/ac73a0}
}

@misc{papic2021review,
author="Papi{\'{c}}, Zlatko",
editor="Bayat, Abolfazl
and Bose, Sougato
and Johannesson, Henrik",
title="Weak Ergodicity Breaking Through the Lens of Quantum Entanglement",
bookTitle="Entanglement in Spin Chains: From Theory to Quantum Technology Applications",
year="2022",
publisher="Springer International Publishing",
address="Cham",
pages="341--395",
abstract="Recent studies of interacting systems of quantum spins, ultracold atoms, and correlated fermions have shed a new light on how isolated many-body systems can avoid rapid equilibration to their thermal state. It has been shown that many such systems can ``weakly'' break ergodicity: they possess a small number of non-thermalising eigenstates and/or display slow relaxation from certain initial conditions, while the majority of other initial states equilibrate fast, like in conventional thermalising systems. In this chapter, we provide a pedagogical introduction to weak ergodicity breaking phenomena, including Hilbert space fragmentation and quantum many-body scars. Central to these developments have been the tools based on quantum entanglement, in particular matrix product states and tangent space techniques, which have allowed to analytically construct non-thermal eigenstates in various non-integrable quantum models, and to explore semiclassical quantisation of such systems in the absence of a large-N or mean-field limit. We also discuss recent experimental realisations of weak ergodicity breaking phenomena in systems of Rydberg atoms and tilted optical lattices.",
isbn="978-3-031-03998-0",
doi="10.1007/978-3-031-03998-0_13",
url="https://doi.org/10.1007/978-3-031-03998-0_13"
}

@article{chandran2022review,
author = {Chandran, Anushya and Iadecola, Thomas and Khemani, Vedika and Moessner, Roderich},
title = {{Quantum Many-Body Scars: A Quasiparticle Perspective}},
journal = {Annual Review of Condensed Matter Physics},
volume = {14},
number = {1},
pages = {443-469},
year = {2023},
doi = {10.1146/annurev-conmatphys-031620-101617},
URL = {https://doi.org/10.1146/annurev-conmatphys-031620-101617}
}

@Article{bauer2017stochastic,
	title={{Stochastic dissipative quantum spin chains (I) : Quantum fluctuating  discrete hydrodynamics}},
	author={Michel Bauer and Denis Bernard and Tony Jin},
	journal={SciPost Phys.},
	volume={3},
	pages={033},
	year={2017},
	publisher={SciPost},
	doi={10.21468/SciPostPhys.3.5.033},
	url={https://scipost.org/10.21468/SciPostPhys.3.5.033},
}

@article{page1993,
	title        = {Average entropy of a subsystem},
	author       = {Page, Don N.},
	year         = 1993,
	month        = {Aug},
	journal      = {Physical Review Letters},
	publisher    = {American Physical Society},
	volume       = 71,
	pages        = {1291--1294},
	doi          = {10.1103/PhysRevLett.71.1291},
	url          = {https://link.aps.org/doi/10.1103/PhysRevLett.71.1291},
	issue        = 9,
	numpages     = {0}
}

@inbook{lidar2014dfs,
	title        = {Review of Decoherence-Free Subspaces, Noiseless Subsystems, and Dynamical Decoupling},
	author       = {Lidar, Daniel A.},
	year         = 2014,
	booktitle    = {Quantum Information and Computation for Chemistry},
	publisher    = {John Wiley \& Sons, Ltd},
	pages        = {295--354},
	doi          = {https://doi.org/10.1002/9781118742631.ch11},
	isbn         = 9781118742631,
	url          = {https://onlinelibrary.wiley.com/doi/abs/10.1002/9781118742631.ch11},
	chapter      = {},
	keywords     = {collective decoherence, collective dephasing, decoherence-free subspaces (DFSs), dynamical decoupling (DD), noiseless subsystems (NSs), quantum computation, representation theory, symmetrization}
}

@article{poulin2005stabilizer,
	title        = {Stabilizer Formalism for Operator Quantum Error Correction},
	author       = {Poulin, David},
	year         = 2005,
	month        = {Dec},
	journal      = {Phys. Rev. Lett.},
	publisher    = {American Physical Society},
	volume       = 95,
	pages        = 230504,
	doi          = {10.1103/PhysRevLett.95.230504},
	url          = {https://link.aps.org/doi/10.1103/PhysRevLett.95.230504},
	issue        = 23,
	numpages     = 4
}

@article{moudgalya2023numerical,
  title = {Numerical methods for detecting symmetries and commutant algebras},
  author = {Moudgalya, Sanjay and Motrunich, Olexei I.},
  journal = {Phys. Rev. B},
  volume = {107},
  issue = {22},
  pages = {224312},
  numpages = {19},
  year = {2023},
  month = {Jun},
  publisher = {American Physical Society},
  doi = {10.1103/PhysRevB.107.224312},
  url = {https://link.aps.org/doi/10.1103/PhysRevB.107.224312}
}

@article{lootens2021MPO,
	title        = {{Matrix product operator symmetries and intertwiners in string-nets with  domain walls}},
	author       = {Laurens Lootens and Jürgen Fuchs and Jutho Haegeman and Christoph Schweigert and Frank Verstraete},
	year         = 2021,
	journal      = {SciPost Phys.},
	publisher    = {SciPost},
	volume       = 10,
	pages        = 53,
	doi          = {10.21468/SciPostPhys.10.3.053},
	url          = {https://scipost.org/10.21468/SciPostPhys.10.3.053},
	issue        = 3
}

@article{cirac2021matrix,
	title        = {Matrix product states and projected entangled pair states: Concepts, symmetries, theorems},
	author       = {Cirac, J. Ignacio and P\'erez-Garc\'{\i}a, David and Schuch, Norbert and Verstraete, Frank},
	year         = 2021,
	month        = {Dec},
	journal      = {Rev. Mod. Phys.},
	publisher    = {American Physical Society},
	volume       = 93,
	pages        = {045003},
	doi          = {10.1103/RevModPhys.93.045003},
	url          = {https://link.aps.org/doi/10.1103/RevModPhys.93.045003},
	issue        = 4,
	numpages     = 65
}

@article{khemani2019int,
	title        = {Signatures of integrability in the dynamics of Rydberg-blockaded chains},
	author       = {Khemani, Vedika and Laumann, Chris R. and Chandran, Anushya},
	year         = 2019,
	month        = {Apr},
	journal      = {Physical Review B},
	publisher    = {American Physical Society},
	volume       = 99,
	pages        = 161101,
	doi          = {10.1103/PhysRevB.99.161101},
	url          = {https://link.aps.org/doi/10.1103/PhysRevB.99.161101},
	issue        = 16,
	numpages     = 6
}

@book{d2007introduction,
  title={Introduction to Quantum Control and Dynamics},
  author={D'Alessandro, D.},
  isbn={9781584888833},
  series={Chapman \& Hall/CRC Applied Mathematics \& Nonlinear Science},
  year={2007},
  publisher={CRC Press},
  edition={1},
  pages={81-82}
}

@misc{marvian2024abelian,
  title = {Theory of quantum circuits with Abelian symmetries},
  author = {Marvian, Iman},
  journal = {Phys. Rev. Res.},
  volume = {6},
  issue = {4},
  pages = {043292},
  numpages = {17},
  year = {2024},
  month = {Dec},
  publisher = {American Physical Society},
  doi = {10.1103/PhysRevResearch.6.043292},
  url = {https://link.aps.org/doi/10.1103/PhysRevResearch.6.043292}
}

@article{marvian2022rotationally,
  title = {Rotationally Invariant Circuits: Universality with the Exchange Interaction and Two Ancilla Qubits},
  author = {Marvian, Iman and Liu, Hanqing and Hulse, Austin},
  journal = {Phys. Rev. Lett.},
  volume = {132},
  issue = {13},
  pages = {130201},
  numpages = {7},
  year = {2024},
  month = {Mar},
  publisher = {American Physical Society},
  doi = {10.1103/PhysRevLett.132.130201},
  url = {https://link.aps.org/doi/10.1103/PhysRevLett.132.130201}
}

@Article{marvian2020locality,
author={Marvian, Iman},
title={Restrictions on realizable unitary operations imposed by symmetry and locality},
journal={Nature Physics},
year={2022},
month={Mar},
day={01},
volume={18},
number={3},
pages={283-289},
issn={1745-2481},
doi={10.1038/s41567-021-01464-0},
url={https://doi.org/10.1038/s41567-021-01464-0}
}

@ARTICLE{hearth2023unitary,
       author = {{Hearth}, Sumner N. and {Flynn}, Michael O. and {Chandran}, Anushya and {Laumann}, Chris R.},
        title = "{Unitary k-designs from random number-conserving quantum circuits}",
      journal = {arXiv e-prints},
     keywords = {Condensed Matter - Statistical Mechanics, Condensed Matter - Strongly Correlated Electrons, High Energy Physics - Theory, Nonlinear Sciences - Chaotic Dynamics, Quantum Physics},
         year = 2023,
        month = jun,
archivePrefix = {arXiv},
       eprint = {2306.01035},
 primaryClass = {cond-mat.stat-mech},
       adsurl = {https://ui.adsabs.harvard.edu/abs/2023arXiv230601035H},
      adsnote = {Provided by the SAO/NASA Astrophysics Data System}
}

@article{PhysRevB.103.L241118,
  title = {Page curve for fermionic Gaussian states},
  author = {Bianchi, Eugenio and Hackl, Lucas and Kieburg, Mario},
  journal = {Phys. Rev. B},
  volume = {103},
  issue = {24},
  pages = {L241118},
  numpages = {7},
  year = {2021},
  month = {Jun},
  publisher = {American Physical Society},
  doi = {10.1103/PhysRevB.103.L241118},
  url = {https://link.aps.org/doi/10.1103/PhysRevB.103.L241118}
}

@article{PhysRevB.104.214306,
  title = {Eigenstate capacity and Page curve in fermionic Gaussian states},
  author = {Bhattacharjee, Budhaditya and Nandy, Pratik and Pathak, Tanay},
  journal = {Phys. Rev. B},
  volume = {104},
  issue = {21},
  pages = {214306},
  numpages = {10},
  year = {2021},
  month = {Dec},
  publisher = {American Physical Society},
  doi = {10.1103/PhysRevB.104.214306},
  url = {https://link.aps.org/doi/10.1103/PhysRevB.104.214306}
}

@article{onishchik1990lie,
  title={Lie Groups and Algebraic Groups},
  author={Onishchik, Arkadij L and Vinberg, Ernest B},
  journal={Springer Series in Soviet Mathematics},
  year={1990},
  publisher={Springer Berlin Heidelberg}
}

@article{PRXQuantum.3.030201,
  title = {Volume-Law Entanglement Entropy of Typical Pure Quantum States},
  author = {Bianchi, Eugenio and Hackl, Lucas and Kieburg, Mario and Rigol, Marcos and Vidmar, Lev},
  journal = {PRX Quantum},
  volume = {3},
  issue = {3},
  pages = {030201},
  numpages = {77},
  year = {2022},
  month = {Jul},
  publisher = {American Physical Society},
  doi = {10.1103/PRXQuantum.3.030201},
  url = {https://link.aps.org/doi/10.1103/PRXQuantum.3.030201}
}

@article{Yu_2023,
   title={Free-fermion Page curve: Canonical typicality and dynamical emergence},
   volume={5},
   ISSN={2643-1564},
   url={http://dx.doi.org/10.1103/PhysRevResearch.5.013044},
   DOI={10.1103/physrevresearch.5.013044},
   number={1},
   journal={Physical Review Research},
   publisher={American Physical Society (APS)},
   author={Yu, Xie-Hang and Gong, Zongping and Cirac, J. Ignacio},
   year={2023},
   month=jan }

@article{Liu_2021,
   title={Entanglement Entropies of Equilibrated Pure States in Quantum Many-Body Systems and Gravity},
   volume={2},
   ISSN={2691-3399},
   url={http://dx.doi.org/10.1103/PRXQuantum.2.010344},
   DOI={10.1103/prxquantum.2.010344},
   number={1},
   journal={PRX Quantum},
   publisher={American Physical Society (APS)},
   author={Liu, Hong and Vardhan, Shreya},
   year={2021},
   month=mar }

@misc{li2024designslocalrandomquantum,
  title = {Designs from Local Random Quantum Circuits with $\mathrm{SU}(d)$ Symmetry},
  author = {Li, Zimu and Zheng, Han and Liu, Junyu and Jiang, Liang and Liu, Zi-Wen},
  journal = {PRX Quantum},
  volume = {5},
  issue = {4},
  pages = {040349},
  numpages = {55},
  year = {2024},
  month = {Dec},
  publisher = {American Physical Society},
  doi = {10.1103/PRXQuantum.5.040349},
  url = {https://link.aps.org/doi/10.1103/PRXQuantum.5.040349}
}

@article{Larocca2022diagnosingbarren,
  doi = {10.22331/q-2022-09-29-824},
  url = {https://doi.org/10.22331/q-2022-09-29-824},
  title = {Diagnosing {B}arren {P}lateaus with {T}ools from {Q}uantum {O}ptimal {C}ontrol},
  author = {Larocca, Martin and Czarnik, Piotr and Sharma, Kunal and Muraleedharan, Gopikrishnan and Coles, Patrick J. and Cerezo, M.},
  journal = {{Quantum}},
  issn = {2521-327X},
  publisher = {{Verein zur F{\"{o}}rderung des Open Access Publizierens in den Quantenwissenschaften}},
  volume = {6},
  pages = {824},
  month = sep,
  year = {2022}
}

@article{Fontana_2024,
   title={Characterizing barren plateaus in quantum ansätze with the adjoint representation},
   volume={15},
   ISSN={2041-1723},
   url={http://dx.doi.org/10.1038/s41467-024-49910-w},
   DOI={10.1038/s41467-024-49910-w},
   number={1},
   journal={Nature Communications},
   publisher={Springer Science and Business Media LLC},
   author={Fontana, Enrico and Herman, Dylan and Chakrabarti, Shouvanik and Kumar, Niraj and Yalovetzky, Romina and Heredge, Jamie and Sureshbabu, Shree Hari and Pistoia, Marco},
   year={2024},
   month=aug }

@article{Ragone_2024,
   title={A Lie algebraic theory of barren plateaus for deep parameterized quantum circuits},
   volume={15},
   ISSN={2041-1723},
   url={http://dx.doi.org/10.1038/s41467-024-49909-3},
   DOI={10.1038/s41467-024-49909-3},
   number={1},
   journal={Nature Communications},
   publisher={Springer Science and Business Media LLC},
   author={Ragone, Michael and Bakalov, Bojko N. and Sauvage, Frédéric and Kemper, Alexander F. and Ortiz Marrero, Carlos and Larocca, Martín and Cerezo, M.},
   year={2024},
   month=aug }

@book{basicalgebra,
	title        = {Basic Algebra: Groups, Rings and Fields},
	author       = {Cohn, P. M.},
	year         = 2003,
	publisher    = {Springer London}
}

@article{PhysRevB.112.064301,
  title = {Spacetime picture for entanglement generation in noisy fermion chains},
  author = {Swann, Tobias and Bernard, Denis and Nahum, Adam},
  journal = {Phys. Rev. B},
  volume = {112},
  issue = {6},
  pages = {064301},
  numpages = {23},
  year = {2025},
  month = {Aug},
  publisher = {American Physical Society},
  doi = {10.1103/PhysRevB.112.064301},
  url = {https://link.aps.org/doi/10.1103/PhysRevB.112.064301}
}

@misc{kovács2024operatorspacefragmentationperturbed,
      title={Operator space fragmentation in perturbed Floquet-Clifford circuits}, 
      author={Marcell D. Kovács and Christopher J. Turner and Lluis Masanes and Arijeet Pal},
      year={2024},
      eprint={2408.01545},
      archivePrefix={arXiv},
      primaryClass={quant-ph},
      url={https://arxiv.org/abs/2408.01545}, 
}

@misc{bittel2025completetheorycliffordcommutant,
      title={A complete theory of the Clifford commutant}, 
      author={Lennart Bittel and Jens Eisert and Lorenzo Leone and Antonio A. Mele and Salvatore F. E. Oliviero},
      year={2025},
      eprint={2504.12263},
      archivePrefix={arXiv},
      primaryClass={quant-ph},
      url={https://arxiv.org/abs/2504.12263}, 
}

@misc{hulse2021qudit,
      title={Qudit circuits with $\mathrm{SU}(d)$ symmetry: Locality imposes additional conservation laws}, 
      author={Austin Hulse and Hanqing Liu and Iman Marvian},
      year={2021},
      eprint={2105.12877},
      archivePrefix={arXiv},
      primaryClass={quant-ph}
}

@Article{sunderhauf2019quantum,
author={S{\"u}nderhauf, Christoph
and Piroli, Lorenzo
and Qi, Xiao-Liang
and Schuch, Norbert
and Cirac, J. Ignacio},
title={Quantum chaos in the Brownian SYK model with large finite N : OTOCs and tripartite information},
journal={Journal of High Energy Physics},
year={2019},
month={Nov},
day={07},
volume={2019},
number={11},
pages={38},
issn={1029-8479},
doi={10.1007/JHEP11(2019)038},
url={https://doi.org/10.1007/JHEP11(2019)038}
}

@article{ogunnaike2023unifying,
  title = {Unifying Emergent Hydrodynamics and Lindbladian Low-Energy Spectra across Symmetries, Constraints, and Long-Range Interactions},
  author = {Ogunnaike, Olumakinde and Feldmeier, Johannes and Lee, Jong Yeon},
  journal = {Phys. Rev. Lett.},
  volume = {131},
  issue = {22},
  pages = {220403},
  numpages = {7},
  year = {2023},
  month = {Nov},
  publisher = {American Physical Society},
  doi = {10.1103/PhysRevLett.131.220403},
  url = {https://link.aps.org/doi/10.1103/PhysRevLett.131.220403}
}

@article{Batista_2000,
   title={Quantum Phase Diagram of th $t$-$J_z$ Chain Model},
   volume={85},
   ISSN={1079-7114},
   url={http://dx.doi.org/10.1103/PhysRevLett.85.4755},
   DOI={10.1103/physrevlett.85.4755},
   number={22},
   journal={Physical Review Letters},
   publisher={American Physical Society (APS)},
   author={Batista, C. D. and Ortiz, G.},
   year={2000},
   month=nov, pages={4755–4758} }

@ARTICLE{swann2023spacetime,
       author = {{Swann}, Tobias and {Bernard}, Denis and {Nahum}, Adam},
        title = "{Spacetime picture for entanglement generation in noisy fermion chains}",
      journal = {arXiv e-prints},
     keywords = {Condensed Matter - Statistical Mechanics, Condensed Matter - Disordered Systems and Neural Networks, Condensed Matter - Quantum Gases, Quantum Physics},
         year = 2023,
        month = feb,
archivePrefix = {arXiv},
       eprint = {2302.12212},
 primaryClass = {cond-mat.stat-mech},
       adsurl = {https://ui.adsabs.harvard.edu/abs/2023arXiv230212212S},
}

\newpage

\onecolumngrid
\appendix

\section{Numerical Methods for the Super-Commutant}\label{sec:app-num}
An immediate consequence of the formulation of this problem in terms of commutant algebras, it is possible to use known numerical methods to compute the super-commutant algebra $\scomm$ or even the whole decomposition of Eq.~\eqref{eq:hilbdecend} of the Hilbert space, based on the techniques of Ref.~\cite{moudgalya2023numerical}.
We discuss two algorithms in detail.
The first explicitly builds the matrix block-decomposition of Eq.~\eqref{eq:matrixrep} for elements of $\sbond$, while the second uses an MPS representation to find an exact basis of $\scomm$ as the ground states of a one-dimensional frustration-free Hamiltonian.
Here we simply outline the main ideas and limitations of the two techniques.
More details on the techniques themselves can be found in Ref.~\cite{moudgalya2023numerical}.
\subsection{Matrix Diagonalization method}
Consider a superoperator $\mc K\in\sbond$, which can therefore be obtained as a linear combination of products of $\ad{h_\alpha}$ superoperators, for $h_\alpha\in\gen$.
Due to the decomposition in Eq.~\eqref{eq:matrixrep}, there is a basis of the operator Hilbert space $\hend$ such that for any such $\mc K$, we have
\begin{equation}
    \mc K=\bigoplus_{\widehat\lambda}(\mc M_{\widehat\lambda}(\mc K)\ot\1_{d_{\widehat\lambda}}).
\end{equation}
The matrices $\mc M_{\widehat\lambda}(\mc K)$ are some $D_{\widehat\lambda}\times D_{\widehat\lambda}$ complex matrices, and they are in general distinct for distinct values of $\widehat\lambda$.
Therefore we can expect that for a ``random choice'' of $\mc K$, all eigenvalues of all the matrices $\mc M_{\widehat\lambda}(\mc K)$ will be distinct, even across different values of the index $\widehat\lambda$.
Under this assumption, if one diagonalizes $\mc K=\mc W\mc D\mc W\+$, then by putting \textit{any}\footnote{A random operator $\mc K'\in\sbond$ can be sampled by performing a random linear combination of the generators $\{\ad{h_\alpha}\}$ and their products.} $\mc K'\in\sbond$ into this eigenbasis, the resulting matrix $\mc W\+\mc K'\mc W$ will be block diagonal along the irreps labeled by $\widehat\lambda$, therefore allowing them to be identified~\cite{moudgalya2023numerical}.
Within each block labelled by $\widehat\lambda$, the values of $D_{\widehat\lambda}$ and $d_{\widehat\lambda}$ can simply be obtained by counting the degeneracies in the spectrum: $D_{\widehat\lambda}$ is just the number of distinct eigenvalues in the block, while $d_{\widehat\lambda}$ is equal the degeneracy of the each eigenvalue.
A basis in which $\mc P_{\widehat\lambda}\mc K'\mc P_{\widehat\lambda}=\mc M_{\widehat\lambda}(\mc K')\ot\1_{d_{\widehat\lambda}}$ (where $\mc P_{\widehat\lambda}$ is the projector superoperator onto the $\widehat\lambda$-irrep, as in Eq.~\eqref{eq:project-to-tensor}) can also be found numerically with some additional work as discussed in \cite{moudgalya2023numerical}, hence providing an explicit construction of the full super-commutant $\scomm$.
A numerical speedup can be obtained by exploiting the relationship between the commutant $\comm$ and the super-commutant $\scomm$.
If the physical Hilbert space is decomposed as in Eq.~\eqref{eq:fund-th}, then we have
\begin{gather}
    h_\alpha=\bigoplus_\lambda M_\lambda(h_\alpha)\ot\1_{d_\lambda},\;\;{\mathds{1} = \bigoplus_\lambda \mathds{1}_{D_{\lambda}} \otimes \mathds{1}_{d_\lambda}} \nn \\
    \implies \ad{h_\alpha}{\defn h_\alpha \otimes \mathds{1} - \mathds{1} \otimes h_\alpha^T} = \bigoplus_{\lambda,\lambda'} (M_\lambda(h_\alpha)\ot\1_{D_{\lambda'}}-\1_{D_{\lambda}}\ot M_{\lambda'}(h_\alpha)^T)\ot\1_{d_\lambda\cdot d_{\lambda'}}.
\end{gather}
The indices $(\lambda,\lambda')$ are exactly the ones discussed in Sec.~\ref{subsec:semiuniversal} and App.~\ref{sec:app-scommt}, and this decomposition of the superoperator $\ad{h_\alpha}$ is equivalent to the presence of superoperators of type (i) in Eq.~\eqref{eq:forsure} in the super-commutant.
Using this decomposition, we can first identified the $\lambda$ blocks by applying the aforementioned procedure for the $h_\alpha$s, and then for each $(\lambda,\lambda')$ block apply the procedure again to square matrices of size $D_\lambda D_{\lambda'}$, which can be much smaller than the original $\mrm{dim}(\mc H)^2$, especially for non-abelian commutants, where one can have $d_\lambda>1$\footnote{From Eq.~\eqref{eq:fund-th} we see that $\mrm{dim}(\mc H)=\sum_\lambda D_\lambda d_\lambda$.}. The computational complexity of diagonalizing each block is $\mc O(D_\lambda^3 D_{\lambda'}^3)$ instead of the original $\mc O(\mrm{dim}(\mc H)^6)$ for the full system.
In the most common cases of group-like symmetry, this speed-up is usually polynomial with system size, with a computation time that scales like $\mc O(D_\lambda^6)$ for the subspace with the largest dimension $D_\lambda$.
Possible degeneracies arising between different $(\lambda,\lambda')$ sectors can be identified by looking for degenerate eigenvalues across the different sectors.
If we only wish to study the block-decomposition within the set of symmetric operators $\bond$ (which is the only relevant one for universality since the DLA $\dlie$ is fully contained in this subspace), we can further reduce the computational costs by analyzing the blocks where $\lambda=\lambda'$, which are precisely the ones acting on symmetric operators.
Once the block-decomposition is completed, one can construct the DLA without any significant computational costs by studying the overlaps between the blocks and the generators $\{\oket{h_\alpha}\}$, as described in Sec.~\ref{sec:methodology}.
One can do slightly better if one interested in finding the dimension of the super-commutant $\dim(\scomm)$, which can be a good indicator for the existence of an interesting DLA.
A slight improvement in the numerical efficiency may be provided by using $\bad{h_\alpha}\defeq h_\alpha\ot\1+\1\ot h_\alpha$ instead of the usual $\ad{h_\alpha}\defeq h_\alpha \otimes \mathds{1} - \mathds{1} \otimes h_\alpha^T$.
As discussed through the many-copy interpretation (see Sec.~\ref{sec:manycopy} and App.~\ref{sec:app-manycopy}), the commutants of these two families of superoperators are related through partial transposition, and have therefore the same dimension.
While in general the commutant of $\{\ad{h_\alpha}\}$ (i.e. the super-commutant) always contains at least the identity superoperator $\1\ot\1$ and the scar projector $\oketbra{\1}{\1}$, the commutant of $\{\bad{h_\alpha}\}$ instead always contains the identity superoperator $\1\ot\1$ and the swap superoperator $S$, which acts as $S(\ket a\ot\ket b)=\ket b\ot\ket a$ on the two-copy Hilbert space.
This $\mb Z_2$ swap symmetry of the $\bad{h_\alpha}$ superoperators can be used to further reduce the cost of performing the simultaneous block-diagonalization, since it splits each matrix into two blocks of approximately the same dimension.
Note however that while the commutants of these two families of superoperators have the same dimensions, they do not necessarily have the same algebraic structure, and therefore more detailed information about the numbers $\{(D_\lambda,d_\lambda)\}$ and about the DLA is directly accessible only from the original adjoint superoperators $\{\ad{h_\alpha}\}$, and not from $\{\bad{h_\alpha}\}$.


\begin{figure*}[t]\vspace{0pt}
\includegraphics[width=.9\textwidth]{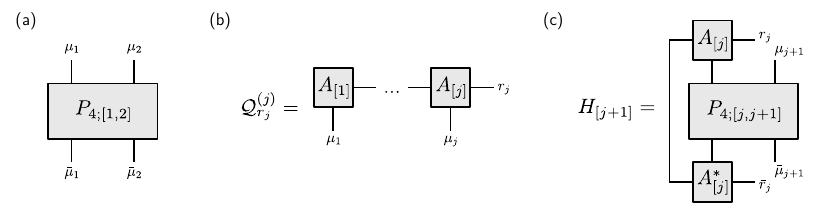}
\caption{Tensor network diagrams for the MPS method for finding the super-commutant. (a) Tensor representation of the first local Hamiltonian term $P_{4;[1,2]}$. (b) Canonical MPS representation of an orthonormal basis for the super-commutant of a subsystem of length $L=j$, as in Eq.~\eqref{eq:mps-basisscomm}. (c) Tensor diagram for the Hamiltonian whose ground state has to be computed at each step, as in Eq.~\eqref{eq:mps-hamiltonian}.}\label{fig:mps}
\end{figure*}

\subsection{Matrix Product State method}\label{app:mpsmethod}
A different method based on Matrix Product States (MPS) can be adopted if the generators $\gen$ act locally on a one-dimensional qudit chain $\mc H=(\mb C^d)^{\ot L}$. This method is based on the idea that symmetries can be understood as ground states of frustration-free Hamiltonians on a doubled Hilbert space (see App.~\ref{sec:brown})~\cite{moudgalya2023numerical, moudgalya2023symmetries}.
For simplicity of notation, we will assume that each generator is $2$-local, but the ideas generalize to longer-range but strictly local generators~\cite{moudgalya2023numerical}.
We start with the following gates with OBC:
\begin{equation}
    \gen=\{\1_d^{\ot (j-1)}\ot h^{(\alpha)}_{j,j+1}\ot \1_d^{\ot (L-j-1)}\}^{\alpha=1,...,N_j}_{j=1,...,L-1}.
\end{equation}
Note that with this definition, the operators $h^{(\alpha)}_{j,j+1}$ only act on the two adjacent qudits at positions $j$ and $j+1$.
As explained in Sec.~\ref{sec:physicalimplications} and App.~\ref{sec:brown} (in particular Eq.~\eqref{eq:supercommutant-hamiltonian}), the elements $\mc Q\in\scomm$ are exactly the ground states of the four-copy Hamiltonian:
\begin{equation}
    P_4=\sum_{j=1}^{L-1} \1_d^{\ot (j-1)}\ot P_{4;[j,j+1]}\ot \1_d^{\ot (L-j-1)}\qquad P_{4;[j,j+1]}\defeq\sum_{\alpha=1}^{N_j}\left(\ad{h^{(\alpha)}_{j,j+1}}\ot\1-\1\ot\ad{h^{(\alpha)}_{j,j+1}}^T\right)^2.
\end{equation}
Due to the positivity of each term in the sum, the ground states of $P_4$ are also ground states of each individual term, hence resulting in a ``frustration-free'' ground state problem \cite{moudgalya2023numerical}, which can be solved exactly and efficiently through the use of an MPS representation of the superoperators $\mc Q\in\scomm$ to be found.

We start by finding an orthonormal basis $\{\mc Q_{r_2}^{(2)}\}_{r_2=1,...,\chi_2}$ for the ground state space of $P_{4;[1,2]}$.
The number $\chi_2$ is equal to the dimension of the super-commutant for the subsystem of length $L=2$ composed of the first two qudits.
By interpreting $P_{4;[1,2]}$ as a $d^4\times d^4\times d^4\times d^4$ tensor with indices $(\mu_1,\mu_2,\bar\mu_1,\bar\mu_2)$ (see Fig.~\ref{fig:mps}a) we can write this basis as a canonical MPS by performing a singular value decomposition between the index $\mu_1$ and the indices $(\mu_2,r_2)$:
\begin{equation}
    (\mc Q_{r_2}^{(2)})^{\mu_1,\mu_2} = \sum_{r_1=1}^{\chi_1} (A_{[1]})^{\mu_1,r_1} (A_{[2]})^{r_1,\mu_2,r_2}.
\end{equation}
Note that here $(r_1, r_2)$ and $(\mu_1, \mu_2)$ are interpreted as the auxiliary and physical indices of a (non-translation-invariant) MPS.
At each step, we build an orthonormal basis $\{\mc Q_{r_{j+1}}^{(j+1)}\}$ for the super-commutant for the subsystem of length $L=j+1$ composed of the qudits at position $1,...,j+1$.
This is done iteratively by finding a tensor $(A_{[j+1]})^{r_j,\mu_{j+1},r_{j+1}}$ such that (see Fig.~\ref{fig:mps}b)
\begin{equation}\label{eq:mps-basisscomm}
    (\mc Q_{r_{j+1}}^{(j+1)})^{\mu_1,\mu_2,...,\mu_{j+1}} = \sum_{r_1=1}^{\chi_1}\sum_{r_2=1}^{\chi_2}...\sum_{r_{j+1}=1}^{\chi_{j+1}} (A_{[1]})^{\mu_1,r_1} (A_{[2]})^{r_1,\mu_2,r_2}...(A_{[j+1]})^{r_j,\mu_{j+1},r_{j+1}}.
\end{equation}
The equation to find $A_{[j+1]}$ only depends on $A_{[j]}$ and $P_{4;[j,j+1]}$~\cite{moudgalya2023numerical}: if we join together the indices $(r_j,\mu_{j+1})$, the tensor $(A_{[j+1]})^{(r_j,\mu_{j+1}),r_{j+1}}$ is any orthonormal basis (indexed by $r_{j+1}$) for the null space of the matrix (see Fig.~\ref{fig:mps}c)
\begin{equation}\label{eq:mps-hamiltonian}
    (H_{[j+1]})^{(r_{j},\mu_{j+1}),(\bar r_{j},\bar \mu_{j+1})} = \sum_{r_{j-1}=1}^{\chi_{j-1}} (A_{[j]})^{r_{j-1},\mu_j,r_j}(A_{[j]}^*)^{r_{j-1},\bar\mu_j,\bar r_j}(P_{4;[j,j+1]})^{\mu_{j},\mu_{j+1},\bar\mu_{j},\bar\mu_{j+1}}.
\end{equation}
The square matrix $H_{[j+1]}$ has dimension $\chi_j d^4$, leading to a computational complexity of $\mc O(\chi_j^3 d^{12})$ for diagonalization in general.
Therefore this method ultimately scales with the dimension of the super-commutant: it will be efficient if $\mrm{dim}(\scomm)$ remains small or grows slowly as a function of the system size $L$, making it a valuable tool in these cases.
Due to the steep $\mc O(d^{12})$ power law, this method is not suitable for systems where the dimension of the local degrees of freedom is not small.
Some optimizations to this algorithm are possible by exploiting some on-site symmetries of $P_4$ in the MPS representation, which might lead to block-diagonalizations of the tensors as in standard implementations of MPS~\cite{cirac2021matrix}.
For example, the Hamiltonian $P_4$ in general possesses an on-site $\mb Z_2\times\mb Z_2$ symmetry, associated to swapping the replicas $\tp{\vdotc{0}{$1$}}\leftrightarrow\tp{\vdotc{0}{$4$}}$ and $\tp{\vdotc{0}{$2$}}\leftrightarrow\tp{\vdotc{0}{$3$}}$ among the four replicas of the original Hilbert space involved in $P_4$.
Furthermore, if the operator-level commutant $\comm$ is non-trivial, then $P_4$ will possess an independent $\comm$ symmetry on each of the four replicas.
\section{Relation to the ``Types of Non-Universality'' in Earlier Works}\label{app:types}
In light of our two-fold classification of the types of universality, it is useful to compare it with the classification of the possible obstructions to universality introduced in Ref.~\cite{marvian2024abelian}, numbered from I to IV.
In our language, this classification can be restated in the following way:
\begin{enumerate}[Type I]
    \item constraints limit the controllability of relative phases between symmetry sectors -- these appear to be the most common, and are the only ones present in the case of weak non-universality. They correspond to vanishing overlaps between scars (i.e. elements of the center $\cent$) and generators $\gen$ (see Sec.~\ref{subsec:semiuniversal} for more details).
    \item constraints arise whenever the set of $G$-symmetric gates $\gen$ that we start with possesses a larger symmetry algebra $\comm$ than the symmetry group $G$. Hence, the $G$-symmetry blocks in the physical Hilbert space $\mc H$ become reducible under the action of the DLA.
    From the point of view presented in Sec.~\ref{sec:superoperatoralgebra}, this is not natural to consider, since we are mostly interested in studying the gate set $\gen$ (and derive its commutant $\comm$ later) rather than study the symmetry group $G$. 
    \item constraints limit the set of unitaries that can be generated \textit{within} each symmetry sector. At the algebra level this means that the set of realizable $M_\lambda(K)$ in Eq.~\eqref{eq:matrixrep} does not contain all $D_\lambda\times D_\lambda$ matrices.\footnote{To be precise, only the absence of a \textit{traceless} matrix would constitute a type III constraint. The absence of the identity matrix $\1_{D_\lambda}$, which corresponds to $\Pi_\lambda$ through Eq.~\eqref{eq:matrixrepcenter}, is instead associated with type I constraints.} 
    This happens when in the $\scommt$ block decomposition of $\hend$, some of the sectors split further into blocks, due to the presence of additional projectors in $\scomm$ beyond the ones of Eq.~\eqref{eq:forsure}.
    \item constraints corresponds to the correlated action of the DLA on different symmetry sectors: referring to Eq.~\eqref{eq:matrixrep}, this means that for some values of $\lambda$, the matrices $M_\lambda(K)$ are not independent one from the other;\footnote{If the interdependence is a linear constraint on the traces of the $M_\lambda(K)$ matrices, then this is a type I constraint instead.} this happens when some irreps in $\hend$ belonging to different $\comm\ot\comm^T$ symmetry sectors become degenerate under the action of $\scomm$ due to the presence of additional superoperators beyond the ones of Eq.~\eqref{eq:forsure}.
\end{enumerate}
Note that although they can often appear together, these constraints are all independent from each other.\footnote{For example the system studied in Sec.~\ref{sec:zx} has type III constraints but no type IV, while vice versa matchgate circuits studied in Secs.~\ref{sec:mgu1} and \ref{sec:mgz2} have type IV constraints but no type III for $L<4$.}
In our language, the rather technical type III and IV constraints both simply correspond to the presence of additional superoperators in the super-commutant, beyond the ones described in Eq.~\eqref{eq:forsure} above, and their presence is therefore associated with strong non-universality.
The exhaustiveness of classifications is also clear from the superoperator picture: if $\scomm=\scommt$, a system can only be non-universal in the presence of type I constraints, and if $\scomm\supsetneq\scommt$, the additional superoperator symmetries can be responsible for type III and IV constraints, if they affect the decomposition of the operator Hilbert space $\hend$ within the symmetric sector $\bond$ (see Lemma~\ref{lem4}).
A system that only displays type I constraints is referred to as ``semi-universal'' in Ref.~\cite{marvian2024abelian}, and although all weakly non-universal systems are semi-universal, in some rarer cases strongly non-universal systems can be semi-universal as well (e.g. matchgates acting on two qubits, see Footnote \ref{ft:matchgate2}).
Semi-universality can be understood in the superoperator language through the equivalence proven in Theorem \ref{theorem1}.
\section{Details of Non-Universality for Particular Sets of Gates}\label{sec:gates}
\subsubsection{Non-symmetric and $\mathbb Z_2$-symmetric Circuits}

In the main text we have presented in full how the commutant framework is applied for studying the non-universality of $U(1)$-symmetric circuits. 
Before moving forwards to the more general cases, we can briefly analyze some simpler examples.
In the case of a sufficiently large and generic set of local non-symmetric gates, we will have $\comm=\mrm{span}\{\1\}$ and $\scomm=\llangle\{\oketbra{\1}{\1}\}\rrangle$.
The Krylov subspace decomposition of the space of operators is therefore simply $\{\mrm{span}\{\oket\1\},\oket\1^\perp\}$, showing that in this case all traceless operators can in general be generated (corresponding to the special unitary group $SU(\dim\mc H)$).
If $\gen$ includes an operator with non-zero trace, then the set of gates will be even more universal, in that the unitary $U(\dim\mH)$ can be generated.
For example the set of 2-local operators
\begin{equation}
    \gen_\mrm{univ}=\{\1,X_j,Z_j,X_jX_{j+1},Z_jZ_{j+1}\}_{j=1}^L
\end{equation}
is found to generate all unitaries, and the set without $\1$ would also generate all unitaries except the overall global phase.
In this work, we will not be interested in the overall global phase, hence the traceless set of gates are also universal in our definition.
The simplest symmetry that can be implemented is the $\mb Z_2$ parity operator $P=\prod_{j=1}^L Z_j$, which is satisfied by the set of generators
\begin{equation}
    \gen_\mrm{\mb Z_2}=\{\1,Z_j,X_jX_{j+1},Z_jZ_{j+1}\}_{j=1}^L.
\end{equation}
In this case $\comm=\mrm{span}\{\1,P\}$ and $\scomm=\llangle\{P\ot\1,\1\ot P^T,\oketbra{\1}{\1}\}\rrangle$.
Scar operators form a two-dimensional subspace $\mrm{span}\{\oket\1,\oket P\}$, while the rest of their operators are split into four Krylov subspaces according to the $\mb Z_2\times\mb Z_2$ symmetry generated by left and right multiplication by the parity operator $\{P\ot\1,\1\ot P^T\}$.
This situation is analogous to the $U(1)$ case, with the only difference that, due to the small size of the group, there is only one additional scar $\oket P$, which never overlaps with $k$-local $\mb Z_2$-symmetric generators if $k<L$.
Hence the co-dimension of the controllable manifold $\unit$ within $\unitt$ is $1$.
\subsubsection{$t$-$J_z$ Model}
We now discuss the case of the $t$-$J_z$ model.
An orthogonal basis for the center $\cent$ of the gate set of Eq.~\eqref{eq:tjzgates}, which is also its commutant $\comm$, is described in detail in App.~B of Ref.~\cite{moudgalya2021hilbert}.
In terms of the operators of Eq.~(\ref{eq:tjzgates}), if we define $O_j\defeq\1_j-Z_j^2$, then this basis for $\cent$ is obtained by considering the following ``word operators'' for all strings $s=(s_1,...,s_n)$ of $n=0,...,L$ numbers $s_l\in\{1,2\}$:
\begin{equation}
    W_n(s)=\sum_{j_1<...<j_n} Z_{j_1}^{s_1}\cdot...\cdot Z_{j_k}^{s_n}\cdot\left(\prod_{j\notin\{j_1,...,j_n\}}O_j\right).
\label{eq:tjzwords}
\end{equation}
Let us now compute the necessary overlaps between these elements, which are the scars, and the generators $T_{j,j+1}$, $Z_j^2$, $Z_j$, and $Z_j Z_{j+1}$ of Eq.~(\ref{eq:tjzgates}).
We can then see that:
\begin{itemize}
    \item For all positions $j$ and strings $s$ we have $\obraket{T_{j,j+1}}{W_n(s)}=0$, since
    \begin{equation}
        \tr(T_{j,j+1}A_j\ot B_{j+1})=\sum_{\sigma=\uparrow,\downarrow}(\langle\sigma 0|A\ot B|0\sigma\rangle+\langle 0\sigma |A\ot B|\sigma 0\rangle),
    \end{equation}
    which is zero if $A,B$ are chosen among the on-site operators $\{O,Z,Z^2\}$, since they are diagonal in the $\{\ket\downarrow,\ket 0,\ket\uparrow\}$ basis.
    Note that these are the only operators that can appear in Eq.~(\ref{eq:tjzwords}). Therefore these generators will not contribute to the co-dimension formula Eq.~\eqref{eq:dimformula}.
    \item  $\obraket{Z^2_j}{W_n(s)}\neq 0$ only for the strings $s=(2,\cdots,2)$ for $n=1,\cdots,L$, since $Z^2\cdot O=0$ and $\tr(Z)=\tr(Z^3)=0$.
    Since the overlaps $\obraket{Z^2_j}{W_n((2,...,2))}=\binom{L-1}{n-1}2^n$ are independent of the position $j$, the projection of the gates $\{Z^2_j\}_{j=1,...,L}$ onto the center is one-dimensional, since the overlap matrix of Eq.~\eqref{eq:dimformulark} will have rank $1$.
    \item For the remaining gates we can restrict to the subspace of words of length $n=L$, to show that they all have a distinct projection onto the center. Indeed, since the number of word operators of length $n=L$ is greater than the number of $Z_j$ and $Z_jZ_{j+1}$ generators, we can show that the overlap matrix of Eq.~\eqref{eq:dimformulark} has maximum rank (i.e. $2L-1$, given the number of generators under consideration) by only considering the submatrix associated to this set of word operators.
    $\obraket{Z_j}{W_L(s)}\neq 0$ only for the word with $s_j=1$ and $s_{j'\neq j}=2$, while $\obraket{Z_jZ_{j+1}}{W_L(s)}\neq 0$  only for the word with $s_j=s_{j+1}=1$ and $s_{j'\notin\{j,j+1\}}=2$.
\end{itemize}
From this we can conclude that $\mrm{dim}(\dlie\cap\cent)\geq 0+1+(2L-1)=2L$.
\subsubsection{Translation Invariant Gates}
We consider the set of generators Eq.~\eqref{eq:gates-transinv} and its commutant $\comm=\mrm{span}(\{\oket\1,\oket T,\oket{T^2},...,\oket{T^{L-1}}\})$.
In order to apply the formula of Eq.~\eqref{eq:dimformulark} we calculate the overlaps between the generators and the given basis for the (Abelian) commutant:
\begin{itemize}
    \item Since all generators are traceless, the overlap with the identity operator $\oket \1$ is zero.
    \item By performing traces using the computational basis (i.e. simultaneous eigenstates of all the $Z_j$) for the qubit chain, one can see that $\oket{\sum_{j=1}^L X_j}$ does not overlap with any of the basis elements for the commutant: \begin{equation}
        \tr(\sum_{j=1}^L X_j T^n)=\sum_{j=1}^L\sum_{\{\sigma_j\}} \bra{\{\sigma_j\}}X_j T^n\ket{\{\sigma_j\}}=0
    \end{equation}
    since $X_j T^n\ket{\{\sigma_j\}}$ is orthogonal to $\ket{\{\sigma_j\}}$ (the two states have different spin parity under $P = \prod_j{Z_j}$).
    The same argument also works for $\oket{\sum_{j=1}^L X_j Z_{j+1}}$, and by analogy for all $\oket{\sum_{j=1}^L S^\alpha_j}$ and all $\oket{\sum_{j=1}^L S^\alpha_jS^\beta_{j+1}}$ with $\alpha\neq\beta$ (where we can evaluate them in different choices of the computational basis).
    \item 
    The operator $\oket{\sum_{j=1}^L Z_j Z_{j+1}}$, on the other hand, has a non-zero overlap with the translation operators $\oket T$ and $\oket{T^{L-1}}=\oket{T\+}$, and is orthogonal to the rest, as we show now.
    By performing the trace using the computational basis, the only states that contribute to $\tr(\sum_{j=1}^L Z_j Z_{j+1}T^{n})$ are $n$-periodic spin configurations. 
    For $n=1$ and $n=L-1$, these are the two fully polarized states $\ket{\uparrow...\uparrow}$ and  $\ket{\downarrow...\downarrow}$, which give:
    \begin{equation}
        \tr(\sum_{j=1}^L Z_jZ_{j+1}T^{n})_{|n\in\{1,L-1\}}=
        \sum_{j=1}^L(\bra{\uparrow...\uparrow}Z_jZ_{j+1}\ket{\uparrow...\uparrow}+\bra{\downarrow...\downarrow}Z_jZ_{j+1}\ket{\downarrow...\downarrow})=2L.
    \end{equation}
    For other values of $n$, the trace is zero since the number of $n$-periodic states for which each $Z_j Z_{j+1}$ is $+1$ is equal to the number of $n$-periodic states for which it is $-1$ (obtained by performing a global spin flip in the computational basis).
    The same reasoning also applies to $\oket{\sum_{j=1}^L X_jX_{j+1}}$ and $\oket{\sum_{j=1}^L Y_jY_{j+1}}$ by selecting an appropriate choice of computational spin basis.
    Therefore these three generators all share the same projection onto the center $\cent$.
\end{itemize}
From all this we can conclude that $\mrm{dim}(\dlie\cap\cent)=1$.
\subsubsection{\texorpdfstring{$\mathbb Z_2$}{} Matchgate Circuits}

\paragraph{Asymptotic value of OTOCs} 
Let us calculate Eq.~\eqref{eq:otoc-projector} for the case of matchgate circuits \eqref{eq:matchgate-gens} and $A=B=Z_j=-i\gamma_{2j-1}\gamma_{2j}$ for any $j$ on the lattice.
By following the conventions of Eq.~\eqref{eq:dimer-trivial-comm} we see that the projectors onto the subspaces spanned by Majorana strings $\oket a$ of fixed length $|a|=0,...,2L$ can be written in normalized form as:
\begin{equation}
    \ket{\mc P_n}_4 = \frac{1}{2^L\cdot\sqrt{\binom{2L}{n}}}\sum_{|a|=n} \tket{\varcdimercr{0.8}{0}{$a$}{white}\varcdimerc{0.4}{1.2}{$a\+$}{white}}_{\!4}.
\end{equation}
These are the only superoperators that contribute to  Eq.~\eqref{eq:otoc-projector}, since the rest\footnote{ These correspond to two types of superoperators. (i) The off-diagonal superoperators responsible for the degeneracy between the subspaces $\hend_n$ and $\hend_{2L-n}$ containing Majorana strings of length $n$ and $2L-n$ respectively, which have the form $(P\ot\1)\mc P_n$. (ii) The superoperator $\mc P_{L+}-\mc P_{L-}$, which is orthogonal to the projector $\mc P_L=\mc P_{L+}+\mc P_{L-}$, where $\mc P_{L\pm}$ are the projectors onto the parity $\pm 1$ sectors of the space spanned by all Majorana strings of length $L$.} are orthogonal to $\Big|\tp{\varcdimercr{0.4}{0}{$Z_j$}{white}\varcdimerc{0.8}{1.2}{$Z_j$}{white}}\Big\rangle_{\!4}$.
Note that $\oket{a\+}=(-)^{\lfloor n/2\rfloor}\oket a$ for $|a|=n$.
We therefore obtain the expression:
\begin{multline}
    \overline{C_{AB}}(\infty)= \sum_{n=0}^{2L}\prescript{}{4\!}{\tbra{\varcdimercr{0.4}{0}{$Z_j$}{white}\varcdimerc{0.8}{1.2}{$Z_j$}{white}}}\frac{\ket{\mc P_n}_4\!\bra{\mc P_n}}{2^L}
    \tket{\varcdimercr{0.8}{0}{$Z_j$}{white}\varcdimerc{0.4}{1.2}{$Z_j$}{white}}_{\!4}=\\=\frac{1}{4^L}\sum_{n=0}^{2L}\frac{1}{\binom{2L}{n}}\cdot\left(\sum_{|a|=n}\tr\left(\left(\gamma_{2j-1}\gamma_{2j}a\right)^2\right)\right)\cdot\left(\sum_{|a|=n}\left(\tr\left(\gamma_{2j-1}\gamma_{2j}a\right)\right)^2\right).
\end{multline}
In the sum, $\tr\left(\gamma_{2j-1}\gamma_{2j}a\right)=-2^L$ when $a=\gamma_{2j-1}\gamma_{2j}$ and zero otherwise; this implies that the only contribution to the sum comes from the superoperator $\ket{\mc P_2}_4$. 
For $n=2$ then, the term $\tr\left(\left(\gamma_{2j-1}\gamma_{2j}a\right)^2\right)$ is equal to either $+2^L$ or $-2^L$ for all strings $a$ of length 2; the result is $+2^L$ if the string $a$ contains either both or none of the operators $\gamma_{2j-1}$ and $\gamma_{2j}$, while it is $-2^L$ otherwise. 
Since the number of strings that strings that contain either $\gamma_{2j-1}$ or $\gamma_{2j}$ (but not both) is $4(L-1)$, we obtain the result of Eq.~\eqref{eq:how-to-otoc-mg}. The calculation to perform is very similar when $A$ and $B$ are both Majorana strings of any length.

\paragraph{Free-fermion Page curve} 
We now calculate Eq.~\eqref{eq:renyisuperoperatorformula} for the case of matchgate circuits \eqref{eq:matchgate-gens} and a Gaussian initial state. 
We can choose $\ket\psi$ to be the vacuum of the chosen Majorana basis, i.e. the polarized spin-down state $\ket{\downarrow\downarrow...\downarrow}$. 
We start by noting that in the $U\ot U^*\ot U\ot U^*$ convention that we used for the domain walls of Eq.~\eqref{eq:domainwalls}, for two Majorana strings $a$ and $b$:
\begin{equation}
    \oketbra{a}{b} = \tket{\varcdimercr{1.2}{0}{$a$}{white}\varcdimerc{0.4}{0.8}{$b\+$}{white}}_{\!4}.
\end{equation}
In the calculation, the only conserved superoperators that will contribute are the (partial transposed versions of) projectors onto the subspaces spanned by Majorana strings $\oket a$ of fixed length $|a|=0,...,2L$; these can be othonormalized as follows:
\begin{equation}
    \sket{\widetilde{\mc P}_n}_4 = \frac{1}{2^L\cdot\sqrt{\binom{2L}{n}}}\sum_{|a|=n} \tket{\varcdimercr{1.2}{0}{$a$}{white}\varcdimerc{0.4}{0.8}{$a\+$}{white}}_{\!4};
\end{equation}
the rest are orthogonal to $\ket{A\!:\!\bar A}_4$.
Note that $\oket{a\+}=(-)^{\lfloor n/2\rfloor}\oket a$ for $|a|=n$. We therefore obtain the expression:
\begin{equation}
    \overline{\langle\tr(\rho_A^2(\infty))}=\sum_{n=0}^{2L}\prescript{}{4\!}{\sbraket{A\!:\!\bar A}{\widetilde{\mc P}_n}_4}\!\sbraket{\widetilde{\mc P}_n}{\psi}^{\ot 4}=\frac{1}{4^L}\sum_{n=0}^{2L}\frac{1}{\binom{2L}{n}}\cdot\left(\sum_{|a|=n}\tr_{\bar A}\left((\tr_Aa)^2\right)\right)\cdot\left(\sum_{|a|=n}\langle\psi|a|\psi\rangle^2\right).
\end{equation}
In the sum, $\langle\psi|a|\psi\rangle^2$ is non-zero only when $n$ is even and the Majorana string corresponds to a product of $Z_j$ operators in the spin language; since $\gamma_{2j-1}\gamma_{2j}=iZ_j$, we have $\langle\psi|a|\psi\rangle^2=(-)^k$ if $n=2k$ and $a$ is composed of pairs of consecutive Majorana operators of the form $\gamma_{2j-1}\gamma_{2j}$ (there are $\binom{L}{k}$ strings of this form). 
Furthermore $\tr_Aa$ is non-zero only for the Majorana strings that have support on $\bar A$; there are $\binom{2(L-\ell)}{2k}$ such operators (when $k\leq L-\ell$), for whom $\tr_{\bar A}\left((\tr_Aa)^2\right)=(-)^{k}2^{L+\ell}$. 
By putting everything together we find the result of Eq.~\eqref{eq:how-to-page-mg}.
\subsubsection{\texorpdfstring{$U(1)$}{} Matchgate Circuits}
In the main text we have analyzed the block decomposition of the set of $U(1)$-symmetric operators under the action of $U(1)$-preserving matchgate circuits. We noted how this family of circuits preserves two separate non-trivial $U(1)$ charges, namely the number of creation and of annihilation operators $\mc N_{c/c\+}$, whereas $\mb Z_2$-symmetric circuits only preserve their sum $\mc N_\gamma$, defined in Eq.~\eqref{eq:majoranaN}.
Here we describe the operator Hilbert space in more detail.
Due to $\mc N_{c/c\+}$ conservation, we can split the operator Hilbert space into invariant subspaces $\hend_{n,\bar n}$ with $n,\bar n\in\{0,...,L\}$, composed of operators of the form $:\!(c\+)^{\bar n}(c)^n\!:$, where $:\!\bullet\!:$ denotes the normal ordering. 
When $n=\bar n$, the operators $\hend_{n,\bar n}$ are symmetric and belong to $\sbond$.
Subspaces with $n\neq\bar n$ contain operators that do not preserve particle number, and due to the symmetry between creation and annihilation operators, the structure of each $\hend_{n,\bar n}$ sector will be identical to that of its counterpart $\hend_{\bar n,n}$.
Due to number conservation, these sectors get further split in subspaces labeled by $\Gamma\in\{0,...,\mrm{min}\{n,\bar n,L-n,L-\bar n\}\}$. These subspaces contain operators of the form\footnote{Operators of the form \eqref{eq:basismgU1subspaces} for some value of $\Gamma$ might in some cases be obtained as linear combinations of operators with higher values of $\Gamma$. The precise constructions of the $(n,\bar n,\Gamma)$ proceeds iteratively, increasing the value $\Gamma$ starting from $0$ and always orthogonalizing new subspace with respect to the previous ones, in order to prevent overlap.}
\begin{equation}
    :\!N_{(\mrm{min}\{n,\bar n\}-\Gamma)} c\+_{j_1}...c\+_{j_{\bar n-\mrm{min}\{n,\bar n\}+\Gamma}}c_{j_1'}...c_{j_{n-\mrm{min}\{n,\bar n\}+\Gamma}'}\!:\qquad \text{where}\quad N_X=\sum_{j_1<...<j_X} :\!c\+_{j_1}c_{j_1}...c\+_{j_X}c_{j_X}\!:\label{eq:basismgU1subspaces}
\end{equation}
where $N_X$ are orthogonalized elements of the commutant $\commx{U(1)}$ (see Eq.~\eqref{eq:paulibasisu1}).
If we call $\Delta=n-\bar n$, then all sectors with the same value of $\Delta$ and $\Gamma$ will be degenerate in the operator Hilbert space decomposition of Eq.~\eqref{eq:hilbdecend}.
Notice the presence of two one-dimensional invariant subspaces, namely $\hend_{L,0}$ and $\hend_{0,L}$, containing $\prod_{j=1}^L c_j\+$ and $\prod_{j=1}^L c_j$ respectively, which do not correspond to elements of the commutant, since they are not annihilated by the action of $\ad{c_j\+c_j+h.c.}$.
These subspaces are a feature that also appears in general weakly non-universal $U(1)$-symmteric circuits, and in spin notation they correspond to the operators $\ketbra{\uparrow...\uparrow}{\downarrow...\downarrow}$ and $\ketbra{\downarrow...\downarrow}{\uparrow...\uparrow}$; these are the $\hend_{0,L}$ and $\hend_{L,0}$ subspaces in the operator Hilbert space decomposition described in Eq.~\eqref{eq:nmdec}.
\section{Mathematical Details on the Super-commutant algebras}
In this appendix, we collect various mathematical details of the super-bond and super-commutant algebras that are necessary for discussions in the main text.
\subsection{Full Decomposition of the Operator Hilbert Space for $\scommt$}\label{sec:app-scommt}
In Sec.~\ref{subsec:semiuniversal} we have discussed the structure of the operator Hilbert space decomposition from Eq.~\eqref{eq:hilbdecend}, in the case when the super-commutant is minimal $\scomm=\scommt$.
For the convenience for the remaining results in this section, we can schematically describe this structure as follows.
{Note that in the following, we will start with the physical Hilbert space decomposition of Eq.~(\ref{eq:fund-th}), and derive the operator Hilbert space decomposition from that. 
Hence the $\{\lambda, D_\lambda, d_\lambda\}$ in the following correspond to those in Eq.~(\ref{eq:fund-th}), whereas the $\widehat{\lambda}$'s in Eq.~(\ref{eq:hilbdecend}) can sometimes be labelled by ordered pairs of the former $\lambda$'s.}
\begin{itemize}
    \item For each ordered pair $(\lambda,\lambda')$ with $\lambda\neq\lambda'$ there is an irrep {$\widehat{\lambda} = (\lambda, \lambda')$} composed of $d_\lambda d_{\lambda'}$ degenerate $D_{\lambda}D_{\lambda'}$-dimensional invariant subspaces. These lie outside of the bond algebra $\bond$.
    \item For each $\lambda$ with $D_{\lambda}>1$ there is an irrep {$\widehat{\lambda} = (\lambda, \lambda)$} containing all operators of the form $\oket{M_\lambda\ot N_\lambda}$ and their linear combinations. 
    The {subspaces corresponding to different linearly independent operators $N_\lambda$ are all degenerate, and hence each irrep is} composed of $d_\lambda^2$ degenerate $(D_{\lambda}^2-1)$-dimensional invariant subspaces. 
    For each $\lambda$, only one such subspace belongs to the bond algebra $\bond$, the one associated to $N_\lambda=\1_{d_\lambda}$.
    \item There is one additional irrep {$\widehat{\lambda}$} containing all operators in the commutant $\oket{\1_{D_\lambda}\ot N_\lambda}\in\comm$ (this has been called $\hend_\mrm{scar}$ in the text).
    This irrep is composed of one-dimensional Krylov subspaces, and its intersection with the bond algebra $\bond$ is the center $\cent$.
\end{itemize}
This structure can be seen to mirror the structure conserved superoperators in Eq.~\eqref{eq:forsure} belonging to $\scommt$, where the $Q_1\ot Q_2^T$ superoperators are responsible for the ${\widehat{\lambda} =} (\lambda_1,\lambda_2)$ irreps, and the $\oketbra{Q_1}{Q_2}$ superoperators are responsible for the scar irrep.
\subsection{Lemmas on the Super-Commutant}\label{sec:app-math}

Here we collect a few simple lemmas proving facts about the super-commutant.
Although most of these facts are probably known from earlier literature, for convenience we state them and prove them within the language and framework of this paper.
%
%
\begin{lemma}[Independence from the particular choice of generators] 
If there are two sets of generators $\mc G$ and $\mc G'$ such that $\dlie=\dliex{\gen'}$, then $\scomm=\scommx{\gen'}$.
In particular the block-decomposition of the operator Hilbert space used in Sec.~\ref{sec:superoperatoralgebra}, is independent on the choice of generators for a given Lie algebra.\label{lem1} 
\end{lemma}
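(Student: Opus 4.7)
The plan is to show that $\widehat{\mathcal{A}}$ can be characterized intrinsically as the associative von Neumann algebra generated by $\{\mathrm{ad}_H\}_{H\in\mathfrak{g}}$, which only references the Lie algebra $\mathfrak{g}$ itself and not the particular generating set $\mathcal{G}$. Once this is established, the independence of $\widehat{\mathcal{A}}$ (and hence of $\widehat{\mathcal{C}} = \mathrm{comm}(\widehat{\mathcal{A}})$) from the choice of $\mathcal{G}$ follows immediately.

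First I would prove the containment $\widehat{\mathcal{A}}\subseteq\llangle\{\mathrm{ad}_H\}_{H\in\mathfrak{g}}\rrangle$, which is essentially trivial: since each generator $h_\alpha\in\mathcal{G}$ lies in $\mathfrak{g}$ by construction, every $\mathrm{ad}_{h_\alpha}$ already appears in the right-hand side, and the right-hand side is closed under sums, products, and adjoints. For the converse containment $\llangle\{\mathrm{ad}_H\}_{H\in\mathfrak{g}}\rrangle\subseteq\widehat{\mathcal{A}}$, the key observation is the Lie-algebra homomorphism property $\mathrm{ad}_{[A,B]}=[\mathrm{ad}_A,\mathrm{ad}_B]=\mathrm{ad}_A\mathrm{ad}_B-\mathrm{ad}_B\mathrm{ad}_A$, together with the linearity $\mathrm{ad}_{\alpha A+\beta B}=\alpha\,\mathrm{ad}_A+\beta\,\mathrm{ad}_B$. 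Using these two facts, I would proceed by induction on the depth of nested commutators: every element of $\mathfrak{g}$ is a linear combination of iterated commutators of the $h_\alpha$'s, and each such iterated commutator is mapped under $\mathrm{ad}$ to a nested commutator of the $\mathrm{ad}_{h_\alpha}$'s, which lies in $\widehat{\mathcal{A}}$ by closure under sums and products. Hence $\mathrm{ad}_H\in\widehat{\mathcal{A}}$ for every $H\in\mathfrak{g}$, so the whole von Neumann algebra they generate is contained in $\widehat{\mathcal{A}}$.

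Combining the two inclusions gives the intrinsic characterization
\begin{equation}
\widehat{\mathcal{A}}=\llangle\{\mathrm{ad}_H\}_{H\in\mathfrak{g}}\rrangle,
\end{equation}
whose right-hand side depends only on $\mathfrak{g}$. Applying the same identity to $\mathcal{G}'$ and using the hypothesis $\mathfrak{g}=\mathfrak{g}_{\mathcal{G}'}$ yields $\widehat{\mathcal{A}}=\widehat{\mathcal{A}}_{\mathcal{G}'}$, and taking commutants gives $\widehat{\mathcal{C}}=\widehat{\mathcal{C}}_{\mathcal{G}'}$. Finally, since the block decomposition of $\widehat{\mathcal{H}}$ in Sec.~\ref{sec:superoperatoralgebra} is determined by the pair $(\widehat{\mathcal{A}},\widehat{\mathcal{C}})$ through the fundamental theorem \eqref{eq:fund-th}, it too is independent of the choice of generators.

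The only mild subtlety, and the point I would be most careful about, is the inductive step: one must verify that taking linear combinations and products of $\mathrm{ad}_{h_\alpha}$'s really does reach $\mathrm{ad}_H$ for \emph{every} $H$ in the Lie algebra generated by $\mathcal{G}$, including those obtained by arbitrarily deep nested commutators. This is handled cleanly by induction on the commutator depth using $\mathrm{ad}_{[A,B]}=[\mathrm{ad}_A,\mathrm{ad}_B]$ at each step; no further analytic input is needed because all algebras involved are finite-dimensional, so the von Neumann closure adds nothing beyond ordinary algebraic closure.
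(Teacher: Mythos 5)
Your proof is correct and follows essentially the same route as the paper: both rest on the homomorphism property $\mathrm{ad}_{[A,B]}=[\mathrm{ad}_A,\mathrm{ad}_B]$ to show that $\mathrm{ad}_H\in\widehat{\mathcal{A}}_{\mathcal{G}}$ for every $H$ in the DLA, so that $\widehat{\mathcal{A}}_{\mathcal{G}}$ depends only on $\mathfrak{g}_{\mathcal{G}}$ and the super-commutants coincide upon taking commutants. Your version is slightly more explicit in packaging this as the two-inclusion identity $\widehat{\mathcal{A}}_{\mathcal{G}}=\llangle\{\mathrm{ad}_H\}_{H\in\mathfrak{g}_{\mathcal{G}}}\rrangle$, which is the same observation the paper records by noting that $\widehat{\mathcal{A}}_{\mathcal{G}}$ is the adjoint representation of the universal enveloping algebra of $\mathfrak{g}_{\mathcal{G}}$.
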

\begin{proof}
This is basically due to the fact that $\ad{(\bullet)}$ is a representation of Lie algebras, and in particular
\begin{equation}
    \ad{[h_{\alpha_1},h_{\alpha_2}]}=[\ad{h_{\alpha_1}},\ad{h_{\alpha_2}}]\defeq \ad{h_{\alpha_1}}\ad{h_{\alpha_2}}-\ad{h_{\alpha_2}}\ad{h_{\alpha_1}}.
\end{equation}
Since the expression on the RHS of the equation only contains addition and multiplication of operators, $\ad{[h_{\alpha_1},h_{\alpha_2}]}$ belongs to $\sbond$ whenever $h_{\alpha_{1}},h_{\alpha_{2}}\in \gen$. 
Since the DLA is generated through repeated commutators and linear combinations, we find that $\ad H\in\sbond$ for each $H\in\dlie$ (and similarly for $\dliex{\gen'}$).
{Since we know that $\dlie = \dliex{\gen'}$,} it follows immediately that  $\sbondx{\gen}=\sbondx{\gen'}$ and  hence $\scommx{\gen}=\scommx{\gen'}$, concluding the proof. 
Indeed this also shows that $\sbond$ is the adjoint representation of the ``universal enveloping algebra'' of $\dlie$.
\end{proof}


\begin{lemma}[Minimal super-commutant and maximal super-bond algebra] 
Here we prove the second equality in Eq.~\eqref{eq:sbont}, i.e. that $\sbondt=\mrm{comm}(\scommt)${, which shows that the maximal super-bond algebra is the commutant of the minimal super-commutant defined in Eq.~(\ref{eq:supercommtrivial}).}\label{lem2}
\end{lemma}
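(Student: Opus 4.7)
The strategy is to apply the Double Commutant Theorem of Eq.~\eqref{eq:doublecomm}: since $\sbondt$ is manifestly a von Neumann algebra (closed under products, linear combinations and hermitian conjugation, and containing $\1$), one has $\sbondt = \mrm{comm}(\mrm{comm}(\sbondt))$, so it suffices to establish the equivalent statement $\mrm{comm}(\sbondt) = \scommt$ and take commutants on both sides. The inclusion $\scommt \subseteq \mrm{comm}(\sbondt)$ is immediate on the generators of $\scommt$: for $Q_1, Q_2 \in \comm$ and $K \in \bond$, $[K, Q_i] = 0$ yields $[\ad{K}, Q_1 \ot Q_2^T] = 0$, and Eq.~\eqref{eq:liouvannihilate} already gives $[\ad{K}, \oketbra{\1}{\1}] = 0$.

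For the reverse inclusion I would exploit the block decomposition $\mc H = \bigoplus_\lambda \mc H_\lambda^{\mc A} \ot \mc H_\lambda^{\mc C}$ (with $\mc A = \bond$, $\mc C = \comm$), which induces $\hend = \bigoplus_{\mu, \mu'} \big(\mc H_\mu^{\mc A} \ot (\mc H_{\mu'}^{\mc A})^*\big) \ot \big(\mc H_\mu^{\mc C} \ot (\mc H_{\mu'}^{\mc C})^*\big)$. Every $K \in \bond$ takes the block form $K = \bigoplus_\lambda M_\lambda(K) \ot \1$ with $M_\lambda(K)$ ranging over all of $\mrm{End}(\mc H_\lambda^{\mc A})$, and in particular the central projectors $\Pi_\lambda \in \bond$ give $\ad{\Pi_\lambda} \in \sbondt$ acting as the scalar $(\delta_{\mu\lambda} - \delta_{\mu'\lambda})$ on the $(\mu, \mu')$-block. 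Suitable polynomials in $\{\ad{\Pi_\lambda}\}$ therefore produce the projector onto each individual off-diagonal block, so any $\mc Q \in \mrm{comm}(\sbondt)$ must preserve the block structure. On an off-diagonal block ($\mu \neq \mu'$), supporting $K$ on only the $\mu$- or only the $\mu'$-sector shows that $\sbondt|_{(\mu, \mu')}$ generates the full $\mrm{End}(\mc H_\mu^{\mc A} \ot (\mc H_{\mu'}^{\mc A})^*)$ acting trivially on the $\comm$-factors, so commutativity forces $\mc Q|_{(\mu, \mu')} = \1 \ot R$ with $R \in \mrm{End}(\mc H_\mu^{\mc C} \ot (\mc H_{\mu'}^{\mc C})^*)$ -- precisely what $(Q_1 \ot Q_2^T)$ with $Q_i \in \comm$ produces in this block.

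The substantive case is the diagonal sector. On $(\mu, \mu)$ the adjoint representation of $\mrm{End}(\mc H_\mu^{\mc A})$ on itself is reducible, splitting as the scalar line $\mb C \oket{\1_{\mc H_\mu^{\mc A}}}$ plus a traceless irreducible complement $\mc M_0^{(\mu)}$; by Burnside's theorem $\sbondt|_{(\mu, \mu)}$ acts as the identity on the first summand and as all of $\mrm{End}(\mc M_0^{(\mu)})$ on the second, tensored with identity on the $\comm$-factors. Its commutant on the sum of diagonal blocks is therefore spanned by two types of operators: (a) scalars on each $\mc M_0^{(\mu)}$ tensored with arbitrary endomorphisms of $\mrm{End}(\mc H_\mu^{\mc C})$, and (b) arbitrary linear maps among the one-dimensional scalar lines $\mb C \oket{\1_{\mc H_\mu^{\mc A}}}$ tensored with arbitrary endomorphisms on the corresponding $\comm$-factors. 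The main obstacle is verifying that type (b) -- the ``scar-mixing'' operators -- is supplied by the second generator of $\scommt$: by closure under products, $\oketbra{A}{B} = (A \ot \1)\,\oketbra{\1}{\1}\,(B^\dagger \ot \1) \in \scommt$ for all $A, B \in \comm$, and as $A, B$ vary these sweep out exactly $\oketbra{\1_{\mc H_\lambda^{\mc A}}}{\1_{\mc H_\mu^{\mc A}}} \ot \mrm{Hom}(\mrm{End}(\mc H_\mu^{\mc C}), \mrm{End}(\mc H_\lambda^{\mc C}))$ for every pair $(\lambda, \mu)$. Disentangling the two contributions within each diagonal block using $\1_{\mrm{End}(\mc H_\mu^{\mc A})} = \mrm{proj}_{\mb C \oket{\1_{\mc H_\mu^{\mc A}}}} + \1_{\mc M_0^{(\mu)}}$ then confirms $\mrm{comm}(\sbondt) = \scommt$ blockwise, completing the proof.
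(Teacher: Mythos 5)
Your proposal is correct and follows essentially the same route as the paper: reduce via the double commutant theorem to showing $\mrm{comm}(\sbondt)=\scommt$, then verify irreducibility block by block — off-diagonal $(\lambda_1\neq\lambda_2)$ sectors via operators supported on single symmetry sectors, and diagonal sectors via the decomposition of the adjoint action of $\mf{gl}(D_\lambda)$ into the scalar line plus the irreducible traceless part. Your treatment is, if anything, slightly more explicit than the paper's about where the generator $\oketbra{\1}{\1}$ is needed (to supply the scar-mixing maps between the one-dimensional trivial subrepresentations), but the underlying argument is the same.
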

\begin{proof}
The fact that $\sbondt\subseteq\mrm{comm}(\scommt)$ is easy to show.
Note that any $K\in\bond=\gent$ commutes with {all} elements of the commutant $\comm$.
{Hence any such $\ad K$ commutes with all the generators of $\scommt$ shown in Eq.~(\ref{eq:supercommtrivial}).
In particular it commutes with all elements in $\comm \otimes \comm^T$, as well as $\oketbra{\mathds{1}}{\mathds{1}}$.}
Showing the converse $\mrm{comm}(\scommt)\subseteq\sbondt$ is equivalent to {showing} $\mrm{comm}(\sbondt)\subseteq\scommt$, i.e. $\sbondt$ has no symmetries other than the minimal ones {shown in Eq.~(\ref{eq:supercommtrivial}).}
We can prove this by showing that $\sbondt$ acts irreducibly on the invariant subspaces identified by $\scommt$ through Eq.~\eqref{eq:hilbdecend}.
This follows from the fact that the adjoint action of the algebra of traceless matrices $\mf{sl}_{\mb C}(d)$ on itself is irreducible but to show this explicitly, we use the matrix representation of Eqs.~\eqref{eq:matbasis} and \eqref{eq:matrixrep}; in particular we will use the following basis for {the space of} operators
\begin{equation}
    \oket{\lambda_1,\lambda_2,\alpha_1,\alpha_2,\gamma_1,\gamma_2}\defeq \left(\ket{\alpha_1}_{\lambda_1}\ot\ket{\gamma_1}_{\lambda_1}\right)\left(\prescript{}{\lambda_2}{\bra{\alpha_2}}\ot\prescript{}{\lambda_2}{\bra{\gamma_2}}\right).
\label{eq:opsubspace}
\end{equation}
According to the decomposition given by $\scommt$ in App.~\ref{sec:app-scommt}:
\begin{itemize}
    \item 
    {We first focus on invariant subspaces of the operator Hilbert space labelled by $\widehat{\lambda} = (\lambda_1, \lambda_2)$ for} $\lambda_1\neq\lambda_2$.
    Then {we show that} $\oket{\lambda_1,\lambda_2,\alpha_1,\alpha_2,\gamma_1,\gamma_2}$ should be connected to all other  $\oket{\lambda_1,\lambda_2,\alpha_1',\alpha_2',\gamma_1,\gamma_2}$.
    This can be achieved by applying the {super-}operator $\ad{K_2}\cdot\ad{K_1}\in\sbondt$, where we choose $K_1,K_2\in\bond$ such that in this tensored basis, $K_1$ and $K_2$ have the forms $K_1 = \ket{\alpha_1'}\prescript{}{\lambda_1}{\bra{\alpha_1}} \otimes \mathds{1}_{d_{\lambda_1}}$ and $K_2 = \ket{\alpha_2}\prescript{}{\lambda_2}{\bra{\alpha_2'}} \otimes \mathds{1}_{d_{\lambda_2}}$.
    {It is then easy to verify that for $K_1$ and $K_2$ of these forms, we obtain $\mL_{K_2} \cdot \mL_{K_1} \oket{\lambda_1, \lambda_2, \alpha_1, \alpha_2, \gamma_1, \gamma_2} = -\oket{\lambda_1, \lambda_2, \alpha_1', \alpha_2', \gamma_1, \gamma_2}$.}
    Hence for a given pair ${\widehat{\lambda} =} (\lambda_1,\lambda_2)$, the space spanned by the operators {of the form shown in Eq.~(\ref{eq:opsubspace}) for fixed $\lambda_1, \lambda_2, \gamma_1, \gamma_2$} is an irreducible representation.
    {For a given $(\lambda_1, \lambda_2)$, the corresponding operators of the form of Eq.~(\ref{eq:opsubspace}) [which we refer to as $\oket{O}_{\lambda_1, \lambda_2}$ for brevity] are} characterized by the relations $\ad{\Pi_{\lambda_1}}\oket O_{\lambda_1,\lambda_2}=\oket O_{\lambda_1,\lambda_2}$, $\ad{\Pi_{\lambda_2}}\oket O_{\lambda_1,\lambda_2}=-\oket O_{\lambda_1,\lambda_2}$, and $\ad{\Pi_{\lambda_3}}\oket O_{\lambda_1,\lambda_2}=0$ for $\lambda_3\notin\{\lambda_1,\lambda_2\}$.
    This shows that two irreps {$\widehat{\lambda} = (\lambda_1, \lambda_2)$ and $\widehat{\lambda'} = (\lambda_1', \lambda_2')$ with $\widehat{\lambda} \neq \widehat{\lambda'}$} are never degenerate {under the action of $\sbond$ and hence are distinct irreps}.
    With this, we have shown that the block decomposition of the operator Hilbert space according to $\scommt$ (as discussed in Sec.~\ref{sec:app-scommt}) is finer or equal to that of $\sbond$ within the $\lambda_1\neq\lambda_2$ sectors.
    \item 
    When $\lambda_1=\lambda_2=\lambda$, then the action of $\sbond$ on the $(\alpha_1,\alpha_2)$ indices of a given $(\gamma_1,\gamma_2)$ sector is exactly the adjoint action of the algebra of all complex matrices $M_\lambda\in\mf{gl}_{\mb C}(D_\lambda)$; this decomposes the algebra into two irreducible representations, which are the set of traceless matrices $\mf{sl}_{\mb C}(D_\lambda)$ and the identity component $\mrm{span}(\1_{D_\lambda})$ \cite{fulton2013representation}.
    This is exactly the decomposition obtained from the super-commutant $\scommt$, with one scar emerging from each degenerate subspace of the $(\lambda,\lambda)$ sector.
\end{itemize}
{In all, this completes the proof that the $\mrm{comm}(\sbondt)\subseteq\scommt$, and hence also that of $\mrm{comm}(\sbondt) = \scommt$.}
\end{proof}

\begin{definition}
We define $\mc P_{\bond}\in\scommt\subseteq\scomm$ to be the superoperator that projects the operator Hilbert space $\hend$ onto the bond algebra $\bond$:
\begin{equation}
    \mc P_{\bond}\oket{O} = \begin{cases}
        \oket O,\qquad&\mathrm{if\ }\oket O\in\bond,\\
        0,\qquad&\mathrm{if\ }\oket O\in\bond^\perp.\\
    \end{cases}
\end{equation}
\end{definition}
As mentioned in Sec.~\ref{sec:methodology}, the DLA $\dlie$ can be determined by only analyzing the block decomposition of the space of symmetric operators $\bond$, since $\dlie\subseteq\bond$.
The decomposition of this space is simply dictated by the superoperator conserved quantities $\mc Q$ that map $\bond$ onto itself; the set of such conserved quantities can be written as $\mc P_{\bond}\scomm\mc P_{\bond}\subseteq\scomm$. 
This subset of superoperator conserved quantities is the one responsible for the presence or absence of semi-universality, as explained in the following theorem.
\begin{theorem}\label{theorem1}
    A set of generators $\gen$ is semi-universal as defined in Ref.~\cite{marvian2024abelian}, or in other words it satisfies any of Eqs.~(\ref{eq:semiunivformula}-\ref{eq:dimformulark}) (see Sec.~\ref{subsec:semiuniversal}), if and only if $\mc P_{\bond}\scomm\mc P_{\bond}=\mc P_{\bond}\scommt\mc P_{\bond}$.
\end{theorem}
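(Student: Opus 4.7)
The plan is to reduce the statement to the equivalent identity $\sbond|_\bond = \sbondt|_\bond$ at the level of super-bond algebras restricted to $\bond$, and then to verify each direction using the decomposition of $\bond$ under $\sbondt|_\bond$ described in App.~\ref{sec:app-scommt}. Since $\mc P_\bond$ lies in $\scommt \subseteq \scomm$, it commutes with both $\sbond$ and $\sbondt$, so $\bond$ is invariant under each algebra and both $\sbond|_\bond \defeq \mc P_\bond \sbond \mc P_\bond$ and $\sbondt|_\bond$ are finite-dimensional $*$-subalgebras of $\mrm{End}(\bond)$ containing the identity. A standard extension-by-zero argument will identify their commutants inside $\mrm{End}(\bond)$ with $\mc P_\bond \scomm \mc P_\bond$ and $\mc P_\bond \scommt \mc P_\bond$ respectively, so by the double commutant theorem the equivalence to be proven reduces to $\sbond|_\bond = \sbondt|_\bond$.

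For the direction ``equality of restricted super-commutants $\Rightarrow$ semi-universality'', I would invoke App.~\ref{sec:app-scommt} and Lemma~\ref{lem2}: the $\sbondt|_\bond$-decomposition of $\bond$ has, for each $\lambda$ with $D_\lambda>1$, a single irreducible traceless subspace $V_\lambda^\mrm{traceless}$ of dimension $D_\lambda^2-1$ (pairwise non-isomorphic across $\lambda$), together with $\dim(\cent)$ one-dimensional trivial irreps spanning $\cent$; by hypothesis $\sbond|_\bond$ has the same decomposition. For each such $\lambda$, if no generator had a non-zero traceless projection on $\mc H_\lambda$, every $h_\alpha$ would act as a scalar on $\mc H_\lambda$ and hence every $\ad{h_\alpha}$ (and every product thereof) would annihilate $V_\lambda^\mrm{traceless}$, contradicting the non-trivial action of $\sbond|_\bond = \sbondt|_\bond$ on this irreducible subspace. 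The projector onto $V_\lambda^\mrm{traceless}$ belongs to $\scommt$, hence commutes with $\sbond$ and preserves the isotypic decomposition of $\dlie$, so this non-zero traceless projection of some $h_\alpha$ lies in $\dlie \cap V_\lambda^\mrm{traceless}$; irreducibility under $\sbond$ then promotes it to the full subspace $V_\lambda^\mrm{traceless} \subseteq \dlie$. Summing over $\lambda$ gives $\dlie \supseteq \bigoplus_\lambda V_\lambda^\mrm{traceless}$, which is precisely Eq.~\eqref{eq:semiunivformula}.

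For the reverse direction ``semi-universality $\Rightarrow$ equality'', I would invoke Lemma~\ref{lem1} to assert that $\sbond$ coincides with the image of $U(\dlie)$ under the adjoint representation, so $\ad{K} \in \sbond$ for every $K \in \dlie$, and in particular for every $K$ in the traceless subspace $\bigoplus_\lambda V_\lambda^\mrm{traceless}$. Then for an arbitrary $K \in \bond$, decomposing $K = K_t + K_c$ with $K_c \in \cent$, the restriction $\ad{K}|_\bond$ equals $\ad{K_t}|_\bond$ (since $\ad{K_c}$ vanishes identically on $\bond$), and this lies in $\sbond|_\bond$. Such restrictions generate $\sbondt|_\bond$, giving $\sbondt|_\bond \subseteq \sbond|_\bond$; the reverse inclusion is immediate from $\gen \subseteq \bond$, closing the loop.

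The hardest part will be the preparatory double-commutant step: I need to verify that the commutant of $\sbond|_\bond$ inside $\mrm{End}(\bond)$ really equals the restricted algebra $\mc P_\bond \scomm \mc P_\bond$, which relies on the invariance of both $\bond$ and $\bond^\perp$ under $\sbond$ (so that any endomorphism of $\bond$ commuting with $\sbond|_\bond$ can be extended by zero on $\bond^\perp$ to an element of $\scomm$). The $L=2$ matchgate example of footnote~\ref{ft:matchgate2} is a useful sanity check on the statement: there $\scomm \supsetneq \scommt$ globally, yet $\mc P_\bond \scomm \mc P_\bond = \mc P_\bond \scommt \mc P_\bond$, the extra superoperators being supported entirely on $\bond^\perp$, consistently with the system being semi-universal.
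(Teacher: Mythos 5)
Your proposal is correct, and it reaches the theorem by a route that differs in a meaningful way from the paper's. The paper proves the theorem by simply combining Lemma~\ref{lem3} (equality of the restricted super-commutants implies semi-universality, via an operator-level contradiction: if every $\Pi_\lambda h_\alpha \Pi_\lambda \propto \Pi_\lambda$ then all of $\bond$ acts as scalars on $\mc H_\lambda$, contradicting $D_\lambda>1$) with Lemma~\ref{lem4} (a strict inclusion produces a non-central $\oket{H}\notin\dlie$ by a direct computation with $[\mc Q,\ad{H'}]\neq 0$ and stripping off the central part of $H'$). You instead first pass to the dual picture via the double-commutant theorem, reducing everything to the single identity $\sbond|_{\bond}=\sbondt|_{\bond}$; your forward direction then derives the same non-vanishing-traceless-projection fact as Lemma~\ref{lem3}, but by contradicting irreducibility of $\sbondt|_\bond$ on $V_\lambda^{\mrm{traceless}}$ rather than contradicting $D_\lambda>1$, and your reverse direction replaces Lemma~\ref{lem4} entirely by the observation that semi-universality puts all traceless symmetric operators in $\dlie$, hence (by Lemma~\ref{lem1}) their adjoints in $\sbond$, hence $\sbondt|_\bond\subseteq\sbond|_\bond$. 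What your approach buys is a cleaner structural statement (the equivalence with $\sbond|_\bond=\sbondt|_\bond$ and the multiplicity-one isotypic argument for why $\dlie$ is a direct sum of whole components); what it costs is the preparatory extension-by-zero verification you correctly flag as the delicate step, which does go through because $\mc P_\bond\in\scommt\subseteq\scomm$ commutes with both super-bond algebras. The only small gap is at the very end of your forward direction: $\dlie\supseteq\bigoplus_\lambda V_\lambda^{\mrm{traceless}}$ gives $\dlie^\perp\cap\bond\subseteq\cent$, but to recover the precise form of Eq.~\eqref{eq:semiunivformula} (that the complement is exactly the central elements orthogonal to $\gen$) you still need $\dlie\cap\cent=\mrm{span}(\Pi_{\cent}(\gen))$, which is Lemma~\ref{lem6}; this should be cited rather than asserted.
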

\begin{proof}
This is a direct consequence of the two Lemmas \ref{lem3} and \ref{lem4}, which we show below.
\end{proof}

\begin{lemma}[The generators overlap with all symmetry sectors up to central elements] Given a gate set $\gen$ with commutant $\comm$, for every symmetry sector $\mc H_\lambda=\mc H_\lambda^{\bond}\ot\mc H_\lambda^{\comm}$ with $D_\lambda>1$ in the decomposition of Eq.~\eqref{eq:fund-th} the generators overlap with at least one operator of the form $\oket{M_\lambda\ot\1_{d_\lambda}}$ with $\tr(M_\lambda)=0$ (following the matrix notation of Eq.~\eqref{eq:matrixrep}).
In particular $\gen$ is semi-universal as defined in Ref.~\cite{marvian2024abelian} (see Sec.~\ref{subsec:semiuniversal}) if it is weakly non-universal (i.e. $\scomm=\scommt$), or more generally if $\mc P_{\bond}\scomm\mc P_{\bond}=\mc P_{\bond}\scommt\mc P_{\bond}$.\label{lem3}
\end{lemma}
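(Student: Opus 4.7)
The plan is to establish the lemma in two steps: first prove the overlap assertion by contradiction with the irreducibility of the matrix representation $M_\lambda(\cdot)$ granted by the fundamental theorem (Eq.~\eqref{eq:fund-th}), and then read off semi-universality from the block structure of $\bond$ summarized in App.~\ref{sec:app-scommt}.

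First I would fix a sector with $D_\lambda>1$ and suppose for contradiction that every $\oket{h_\alpha}\in\gen$ is orthogonal to every $\oket{M_\lambda\ot\1_{d_\lambda}}$ with $\mrm{tr}(M_\lambda)=0$. Writing $h_\alpha=\bigoplus_{\lambda'}M_{\lambda'}(h_\alpha)\ot\1_{d_{\lambda'}}$ as in Eq.~\eqref{eq:matrixrep} and computing
\begin{equation}
    \obraket{h_\alpha}{M_\lambda\ot\1_{d_\lambda}}=d_\lambda\,\mrm{tr}\!\left(M_\lambda(h_\alpha)\+ M_\lambda\right),
\end{equation}
the assumption forces $M_\lambda(h_\alpha)$ to be a scalar multiple of $\1_{D_\lambda}$ for every $\alpha$. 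Since $M_\lambda(\cdot)$ is an algebra homomorphism, the image $M_\lambda(\bond)$ is then contained in $\mb C\cdot\1_{D_\lambda}$. But the fundamental theorem asserts that $M_\lambda(\bond)$ equals the full $D_\lambda^2$-dimensional matrix algebra on $\mc H^{\mc A}_\lambda$, contradicting $D_\lambda>1$. Hence some generator has a nonzero traceless projection $M_\lambda(h_\alpha)$, establishing the first claim.

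Second, under the hypothesis $\mc P_{\bond}\scomm\mc P_{\bond}=\mc P_{\bond}\scommt\mc P_{\bond}$ (which includes the weakly non-universal case $\scomm=\scommt$), the $\sbond$-block decomposition of $\bond$ coincides with the one induced by $\scommt$ and described in App.~\ref{sec:app-scommt}. Concretely, for each $\lambda$ with $D_\lambda>1$ the bond algebra contains a single non-central irreducible invariant subspace of dimension $D_\lambda^2-1$, spanned by the traceless operators $\oket{M_\lambda\ot\1_{d_\lambda}}$, while the remaining invariant subspaces inside $\bond$ are the one-dimensional scars $\oket{\Pi_\lambda}\in\cent$. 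By the first part, each non-central subspace admits at least one vector non-orthogonal to $\gen$, so the construction of Sec.~\ref{sec:methodology} places every such subspace inside $\dlie$ in its entirety. The only operators in $\bond$ that can fail to lie in $\dlie$ must therefore belong to $\cent$ and be Hilbert--Schmidt orthogonal to all $\oket{h_\alpha}$, which is exactly the condition of Eq.~\eqref{eq:semiunivformula}.

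The main obstacle is the bookkeeping around what ``overlap with a traceless $M_\lambda$'' means: one must keep in mind that the relevant inner product lives in the full operator Hilbert space $\hend$ restricted through the tensor-product factorization of Eq.~\eqref{eq:matrixrep}, and that the ``identity'' of each block of $\bond$ is $\Pi_\lambda=\1_{D_\lambda}\ot\1_{d_\lambda}$ rather than the global identity. Once the map $h_\alpha\mapsto M_\lambda(h_\alpha)$ and its trace are identified correctly, the first step reduces to a standard double-commutant / Burnside-type argument, and the second step needs no further input beyond the block decomposition already derived in App.~\ref{sec:app-scommt}.
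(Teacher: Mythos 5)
Your proposal is correct and follows essentially the same route as the paper: both argue by contradiction that if every generator's projection onto the $\lambda$ block were a multiple of $\Pi_\lambda$, then multiplicativity of the block map would force all of $\bond$ to act as scalars on $\mc H_\lambda$, contradicting $D_\lambda>1$ via the irreducibility granted by the fundamental theorem (the paper phrases this as the appearance of one-dimensional Krylov subspaces, you phrase it as $M_\lambda(\bond)\subsetneq\mrm{End}(\mc H_\lambda^{\bond})$ — the same fact). The deduction of semi-universality from the block decomposition of $\bond$ under $\mc P_{\bond}\scomm\mc P_{\bond}=\mc P_{\bond}\scommt\mc P_{\bond}$ also matches the paper's argument.
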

\begin{proof}
We start by noticing that the superoperator that projects onto the $\comm\ot\comm^T$-symmetric subspace composed of all operators of the form $\oket{M_\lambda\ot\1_{d_\lambda}}$, takes the simple form of $\Pi_\lambda\ot\Pi_\lambda^T$, where $\{\Pi_\lambda\}$ are the projectors onto the irreps of the physical Hilbert space $\mc H$ of Eq.~\eqref{eq:fund-th}.
We then prove the statement of the lemma by contradiction. Let us assume that for all generators $h_\alpha$ lies in the one-dimensional subspace generated by the central element $\oket{\Pi_\lambda}=\oket{\1_{D_\lambda}\ot\1_{d_\lambda}}$:
\begin{equation}
    \oket{\Pi_\lambda h_\alpha\Pi_\lambda}=\Pi_\lambda\ot\Pi_\lambda^T\oket{h_\alpha}\propto\oket{\Pi_\lambda}.
\end{equation}
Since $\Pi_\lambda\in\cent=\comm\cap\bond$, $h_\alpha$ commutes with $\Pi_\lambda$, and therefore
\begin{equation}
    \Pi_\lambda\ot\Pi_\lambda^T\oket{h_\alpha h_\beta}=\oket{\Pi_\lambda h_\alpha h_\beta\Pi_\lambda}=\oket{(\Pi_\lambda h_\alpha \Pi_\lambda)(\Pi_\lambda h_\beta\Pi_\lambda)}\propto\oket{\Pi_\lambda},
\end{equation}
hence for all $K\in\bond$ we get $\Pi_\lambda\ot\Pi_\lambda^T\oket{K}\propto\oket{\Pi_\lambda}$. But if this was the case, then for every $K\in\bond$ and for every state $\ket{\psi}\in\mc H_\lambda$, one would have
\begin{equation}
    K\ket\psi\propto\ket\psi,
\end{equation}
hence generating one-dimensional Krylov subspaces, which violate the hypothesis that $D_\lambda>1$.

As discussed in Sec.~\ref{subsec:semiuniversal}, when $\scomm=\scommt$ the bond algebra $\bond$ is decomposed into subspaces labeled by $\{\lambda\}$ containing all operators $\oket{M_\lambda\ot\1_{d_\lambda}}$ such that $\tr(M_\lambda)=0$, plus a subspace (the center $\cent$) containing all projectors $\oket{\Pi_\lambda}$. Therefore, through the techniques of Sec.~\ref{sec:methodology}, this result implies that the orthogonal complement of $\dlie$ in $\bond$ is a subset of the center $\cent$, which corresponds to semi-universality. 
Since the block decomposition of $\bond$ is exclusively determined by the conserved quantities in $\mc P_{\bond}\scomm\mc P_{\bond}\subseteq\scomm$, this statement remains also true if we relax the assumption $\scomm = \scommt$ to $\mc P_{\bond}\scomm\mc P_{\bond}=\mc P_{\bond}\scommt\mc P_{\bond}$.
\end{proof}


\newcommand{\scommtp}{\scommx{{\langle\!\langle}\gen'{\rangle\!\rangle}}}
\newcommand{\scommp}{\scommx{\gen'}}

\begin{lemma}[Conserved superoperators and strong non-universality] If $\mc P_{\bond}\scomm\mc P_{\bond}\supsetneq\mc P_{\bond}\scommt\mc P_{\bond}$, then at least one symmetric operator $\oket H\in\bond$ exists that is neither contained in the center $\cent$ nor in the DLA $\dlie$. In particular if $\gen$ is semi-universal as defined in Ref.~\cite{marvian2024abelian} (see Sec.~\ref{subsec:semiuniversal}) then $\mc P_{\bond}\scomm\mc P_{\bond}=\mc P_{\bond}\scommt\mc P_{\bond}$.\label{lem4}
\end{lemma}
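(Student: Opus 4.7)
The plan is to establish the contrapositive of the second assertion: if $\gen$ is semi-universal, equivalently $\bond = \dlie + \cent$ as characterized by Eq.~\eqref{eq:semiunivformula}, then $\mc P_\bond \scomm \mc P_\bond = \mc P_\bond \scommt \mc P_\bond$. Once this is in hand, the first assertion follows by contraposition: whenever $\mc P_\bond \scomm \mc P_\bond \supsetneq \mc P_\bond \scommt \mc P_\bond$, semi-universality must fail, so one can pick $\oket H \in \bond$ with $\oket H \notin \dlie + \cent$, and such an $H$ is automatically neither in $\dlie$ nor in $\cent$.

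The heart of the argument will be to show the dual equality $\mc P_\bond \sbondt \mc P_\bond = \mc P_\bond \sbond \mc P_\bond$, from which the claim about super-commutants follows by double commutant. First, since $\bond = \dlie + \cent$, every $K \in \bond$ admits (not necessarily uniquely) a decomposition $K = K_\dlie + K_\cent$ with $K_\dlie \in \dlie$ and $K_\cent \in \cent \subseteq \comm$. The central piece commutes with all symmetric operators by definition of $\comm$, so $\ad{K_\cent}$ annihilates $\bond$, giving $\ad{K}|_\bond = \ad{K_\dlie}|_\bond$. Second, the adjoint representation is a Lie-algebra homomorphism, $\ad{[A,B]} = [\ad{A}, \ad{B}]$ (which is just the Jacobi identity), so the adjoint of any nested commutator of elements of $\gen$ is a corresponding nested commutator of the $\ad{h_\alpha}$, and thus lies in the associative algebra $\sbond$. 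Since $\dlie$ is spanned by such nested commutators together with the $h_\alpha$ themselves, $\ad{K_\dlie} \in \sbond$ for every $K_\dlie \in \dlie$. Combining the two observations, $\ad{K}|_\bond \in \mc P_\bond \sbond \mc P_\bond$ for every $K \in \bond$. Since $\sbondt$ is generated by $\{\ad{K} : K \in \bond\}$, this yields $\mc P_\bond \sbondt \mc P_\bond \subseteq \mc P_\bond \sbond \mc P_\bond$, and combined with the evident inclusion $\sbond \subseteq \sbondt$ gives the desired equality.

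To close the loop I will invoke double commutant at the level of the reduced algebras. Because $\bond$ is invariant under both $\sbond$ and $\sbondt$, the projector $\mc P_\bond$ lies in both $\scomm$ and $\scommt$, so $\mc P_\bond \sbond \mc P_\bond$ and $\mc P_\bond \sbondt \mc P_\bond$ are bona fide von Neumann algebras on $\bond$ whose commutants within $\mrm{End}(\bond)$ are precisely $\mc P_\bond \scomm \mc P_\bond$ and $\mc P_\bond \scommt \mc P_\bond$ respectively -- a direct application of the reduction theorem, or equivalently of Lemma~\ref{lem2} run in the reduced setting. The equality of the restricted super-bond algebras established above therefore forces the equality of the restricted super-commutants, closing the argument.

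The main subtlety, rather than an outright obstacle, lies in the second step of the middle paragraph: the assertion that the adjoint of an arbitrary element of $\dlie$ belongs to the associative algebra $\sbond$, and not merely to the Lie subalgebra of $\sbond$ generated by $\{\ad{h_\alpha}\}$ (which is a priori smaller). This is ultimately a short induction based on $\ad{[A,B]} = [\ad{A}, \ad{B}]$, but it is the only place where the distinct definitions of $\dlie$ as a Lie algebra and of $\sbond$ as an associative algebra actually intervene; every other step is routine bookkeeping with projections and commutants.
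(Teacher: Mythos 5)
Your proof is correct, but it runs in the opposite direction from the paper's. The paper argues the first assertion directly: it picks a witness $\mc Q\in\mc P_{\bond}\scomm\mc P_{\bond}$ with $\mc Q\notin\mc P_{\bond}\scommt\mc P_{\bond}$, extracts some $H'\in\bond$ with $[\mc Q,\ad{H'}]\neq 0$ (which already forces $\oket{H'}\notin\dlie$, since $\ad{K}\in\sbond$ for all $K\in\dlie$ by Lemma~\ref{lem1}), and then strips off the central part of $H'$ using the two facts that $\ad{Z}$ annihilates $\bond$ for $Z\in\cent$ and that $\mc Q=\mc P_{\bond}\mc Q\mc P_{\bond}$; the remaining piece $H\in\bond\cap\cent^{\perp}$ still fails to commute with $\mc Q$ under the adjoint action, so it is the desired non-central operator outside $\dlie$, and the ``in particular'' is read off as a corollary. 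You instead prove the ``in particular'' first — semi-universality in the form $\bond=\dlie+\cent$ gives $\mc P_{\bond}\ad{K}\mc P_{\bond}=\mc P_{\bond}\ad{K_{\dlie}}\mc P_{\bond}\in\mc P_{\bond}\sbond\mc P_{\bond}$, hence $\mc P_{\bond}\sbondt\mc P_{\bond}=\mc P_{\bond}\sbond\mc P_{\bond}$ — and then pass to super-commutants via the induction theorem $(\mc M_{E})'=(\mc M')_{E}$ for the projection $E=\mc P_{\bond}\in\scommt\subseteq\scomm$, obtaining the first assertion by contraposition (where failure of Eq.~\eqref{eq:semiunivformula} is equivalent, via Lemma~\ref{lem6}, to the existence of $\oket{H}\in\bond$ with $\oket{H}\notin\dlie+\cent$). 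Both routes are sound; the paper's is shorter and purely elementary (commutator bookkeeping only), while yours needs the reduced-algebra commutant theorem but buys the structurally cleaner intermediate statement that the reduced super-bond algebras coincide under semi-universality, making the duality with Lemma~\ref{lem2} explicit. The step you flag as the main subtlety — that $\ad{K}$ lies in the associative algebra $\sbond$ for every $K\in\dlie$ — is exactly the content of the proof of Lemma~\ref{lem1}, so it is already available and your argument closes.
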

\begin{proof}
Let us consider a superoperator $\mc Q$ that belongs to $\mc P_{\bond}\scomm\mc P_{\bond}$ but not to $\mc P_{\bond}\scommt\mc P_{\bond}$; then by definition of the super-commutants, for some $H'\in\bond$ one must have
\begin{equation}
    [\mc Q,\ad{H'}]\neq 0,\label{eq:Hcondition}
\end{equation}
from which it follows that $\oket{H'}\notin\dlie$ (according to Lemma \ref{lem1}). Let us now decompose $H'=H+Z$, with $Z\in\cent$ and $H\in\bondx{\gen}\cap\centx{\gen}^\perp$. Since $\mc Q\in \mc P_{\bond}\scomm\mc P_{\bond}$, we can write $\mc Q=\mc P_{\bond}\mc Q\mc P_{\bond}$, and therefore
\begin{equation}
    [\mc Q,\ad{Z}]=(\mc P_{\bond}\mc Q\mc P_{\bond})\ad{Z}-\ad{Z}(\mc P_{\bond}\mc Q\mc P_{\bond})=0,
\end{equation}
because $\ad Z$ annihilates all operators in $\bond$. Therefore Eq.~\eqref{eq:Hcondition} implies
\begin{equation}
    0\neq [\mc Q,\ad{H'}]=[\mc Q,\ad{H}]+[\mc Q,\ad{Z}]=[\mc Q,\ad{H}],
\end{equation}
which means that $H\notin\dlie$, thus proving the statement of the lemma.
The absence from $\dlie$ of a non-central operator $\oket H$ by definition indicates the failure of semi-universality, as described in Sec.~\ref{subsec:semiuniversal}, since Eq.~\eqref{eq:semiunivformula} is not satisfied.
\end{proof}

\begin{lemma}
    The projection of the DLA $\dlie$ onto the center $\cent$ is linearly generated by the projection of its generators $\gen$:
    \begin{equation}
        \Pi_{\cent}(\dlie)=\mrm{span}(\Pi_{\cent}(\gen)).
    \end{equation}\label{lem6}
    Note that as a general feature of the decomposition of Eq.~\eqref{eq:hilbdecend}, $\Pi_{\cent}(\dlie)=\dlie\cap\cent$.
\end{lemma}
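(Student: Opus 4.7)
My plan is to use the natural decomposition $\dlie=\mrm{span}(\gen)+[\dlie,\dlie]$, which follows directly from the inductive construction of the Lie algebra generated by $\gen$ (start with $\mrm{span}(\gen)$, then iteratively add commutators). The inclusion $\mrm{span}(\Pi_{\cent}(\gen))\subseteq\Pi_{\cent}(\dlie)$ is immediate since $\gen\subseteq\dlie$ and $\Pi_{\cent}$ is linear, so the substance of the lemma lies entirely in the reverse inclusion. This in turn reduces to showing that the ``commutator part'' of $\dlie$ is orthogonal to the center, i.e.\ $[\dlie,\dlie]\perp\cent$ with respect to the Hilbert-Schmidt inner product $\obraket{A}{B}\defeq\tr(A^\dagger B)$.

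First I would note that $\dlie\subseteq\bond$: since $\bond=\llangle\gen\rrangle$ is an associative algebra it is automatically closed under matrix commutators, and it contains $\gen$, so it contains every nested commutator. Next, for any $Z\in\cent$ and any $A,B\in\dlie\subseteq\bond$, I would use cyclicity of the trace to write
\begin{equation}
\obraket{Z}{[A,B]}=\tr\bigl(Z^\dagger[A,B]\bigr)=\tr\bigl([Z^\dagger,A]\,B\bigr)=0,
\end{equation}
where the last step uses that $\cent$ is a von Neumann subalgebra of $\comm$, hence closed under hermitian conjugation, so that $Z^\dagger\in\comm$ commutes with the symmetric operator $A\in\bond$. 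This proves $[\dlie,\dlie]\subseteq\cent^\perp$, and combining it with the decomposition above yields $\Pi_{\cent}(\dlie)=\Pi_{\cent}(\mrm{span}(\gen))=\mrm{span}(\Pi_{\cent}(\gen))$, as claimed.

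I do not anticipate a serious obstacle: the only nontrivial ingredient is recognizing that any length-$\ell$ nested commutator ($\ell\geq 2$) can be viewed as a single outer bracket $[A,B]$ with $A,B\in\dlie$, which is immediate from the inductive definition of $\dlie$. The side remark $\Pi_{\cent}(\dlie)=\dlie\cap\cent$ does not require a separate argument: it follows from the orthogonal direct-sum decomposition of $\bond$ into $\sbond$-invariant subspaces (cf.\ Eq.~\eqref{eq:hilbdecend}), in which $\cent$ appears as the direct sum of the one-dimensional scar subspaces while $\dlie$ is itself a direct sum of such $\sbond$-invariant subspaces, as discussed in Sec.~\ref{sec:methodology}.
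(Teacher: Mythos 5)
Your proof is correct and follows essentially the same route as the paper's: your decomposition $\dlie=\mrm{span}(\gen)+[\dlie,\dlie]$ together with the trace-cyclicity identity $\tr(Z^\dagger[A,B])=\tr([Z^\dagger,A]B)=0$ is exactly the paper's observation that only the identity component of $\sbond=\llangle\{\ad{h_\alpha}\}\rrangle$ contributes to $\Pi_{\cent}$, since each $\ad{h_\alpha}$ is self-adjoint in the Hilbert--Schmidt inner product and annihilates $\cent$. The only cosmetic difference is that you phrase the orthogonality at the operator level via cyclicity rather than via self-adjointness of the Liouvillian superoperators, and your justification of the side remark $\Pi_{\cent}(\dlie)=\dlie\cap\cent$ via the Krylov-subspace decomposition matches the paper's intent.
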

\begin{proof}
    According to equation \eqref{eq:DLAgeneration} we obtain $\dlie$ by acting with $\ad{h_\alpha}$ on $\gen=\{\oket{h_{\alpha'}}\}$ and performing linear combinations; but the superoperators $\ad{h_\alpha}$ by definition annihilate all elements in the commutant, and therefore the center. Therefore they produce operators that do not overlap with the center: $\forall \oket Z\in\cent:\obra{Z}\ad{h_\alpha}\oket{K}=\left[\obra{Z}\ad{h_\alpha}\+\right]\oket{K}=0$. Hence, only the identity component $\1\in\sbond$ will be able to contribute to the projection onto $\cent$ when generating the DLA:
    \begin{equation}
        \Pi_{\cent}(\dlie)=\Pi_{\cent}(\sbond\cdot \mrm{span}(\gen))=\Pi_{\cent}(\mrm{span}(\gen))=\mrm{span}(\Pi_{\cent}(\gen)),
    \end{equation}
    where the last equality is a general property of linear maps.
\end{proof}

\subsection{Algebraic structure of Many-copy Conserved Quantities}\label{sec:app-manycopy}

According to the definitions in Sec.~\ref{sec:superoperatoralgebra}, it is clear that $\scomm$ and $\sbond$ are {finite-dimensional} von Neumann algebras, i.e. vector spaces containing the identity $\1$, closed under matrix multiplication, and under hermitian adjoint.
While this is the end of the story for general commutants, the form of the $\ad{h_\alpha}$ superoperators endows the super-commutants $\scomm$ with additional structure.
To understand how, it is convenient to use the many-copy notation from Sec.~\ref{sec:manycopy}.
First of all, $\scomm$ can be seen to be invariant under the exchange of the indices $\tp{\vdotc{0}{$1$}}\leftrightarrow\tp{\vdotc{0}{$4$}}$ or $\tp{\vdotc{0}{$2$}}\leftrightarrow\tp{\vdotc{0}{$3$}}$.
Indeed the condition that $\ket{\mc Q}_4$ is in $\scomm$ is equivalent to satisfying for any $U\in\unit$
\begin{equation}\label{eq:conservedsuperop}
    (U\ot U^*\ot U^*\ot U)\ket{\mc Q}_4=\ket{\mc Q}_4,
\end{equation}
and this condition will also be satisfied by their swapped versions (the superoperators in Eq.~\eqref{eq:dimer-trivial-comm} are for example related in this way).
Furthermore, the closure of $\scomm$ under matrix multiplication can be understood as follows: given two superoperators $\ket{\mc Q_1}_4$ and $\ket{\mc Q_2}_4$ that satisfy Eq.~\eqref{eq:conservedsuperop}, then contracting indices on which $U$ acts with indices on which $U^*$ acts gives superoperators that satisfy the same condition (since $\sum_j u_{ij}u_{kj}^*=\delta_{ik}$).
Graphically the contractions associated to superoperator products can be represented as:
\begin{equation}
\mc Q_1\cdot \mc Q_2=\ \ \tp{
\draw[thick,midarrow,dotted] (+.8,+.3) -- (+.4,-.3);
\draw[thick,midarrow,dotted] (-.4,-.3) -- (+.0,+.3);
\filldraw[thick,fill=white] (-.4,+.3) circle (0.06);
\filldraw[thick,fill=white] (-.4,-.3) circle (0.06);
\filldraw[thick,fill=white] (+.0,+.3) circle (0.06) node[black] at (+.0,+.5) {\footnotesize $*$};
\filldraw[thick,fill=white] (+.0,-.3) circle (0.06) node[black] at (+.0,-.6) {\footnotesize $*$};
\filldraw[thick,fill=white] (+.4,+.3) circle (0.06) node[black] at (+.4,+.5) {\footnotesize $*$};
\filldraw[thick,fill=white] (+.4,-.3) circle (0.06) node[black] at (+.4,-.6) {\footnotesize $*$};
\filldraw[thick,fill=white] (+.8,+.3) circle (0.06);
\filldraw[thick,fill=white] (+.8,-.3) circle (0.06);
\node[black] at (+1.2,+.25) {\footnotesize $\mc Q_1$};
\node[black] at (+1.2,-.35) {\footnotesize $\mc Q_2$};
}
\end{equation}
where the dotted lines indicate contractions, and the asterisks are used to remind which indices correspond to $U$ or to $U^*$.
Through the permutational symmetry observed above Eq.~(\ref{eq:conservedsuperop}), one can apply permutations on one of the operators and obtain from this product many other superoperators, which correspond to contracting \textit{any} $\tp{\vdotc{0}{}\vdotc{0.4}{$*$}}$ pair from the first superoperator with \textit{any} similar pair form the second.
But more can be done beyond this. One can for example contract the $\tp{\vdotc{0}{$*$}\vdotc{0.4}{$*$}}$ pair of one operator to the $\tp{\vdotc{0}{}\vdotc{0.4}{}}$ pair of the other; these types of products are the ones naturally performed in the commutant of a two-copy system (cf. Eqs.~\eqref{eq:twocopy} and \eqref{eq:unitary-patterns}), which was shown to be in a one-to-one correspondence with the super-commutant algebra.
Finally contractions may also be performed within a given conserved superoperator, thus obtaining an element of the operator-level commutant $\comm$, since the resulting object will be invariant under $U\ot U^*$.
More generally, by contracting along $q$ indices a $k$-copy conserved quantity with a $k'$-copy conserved quantity, one gets an object in the $(k+k'-q)$-copy commutant, with the structure of the higher copy commutants being consistent with the structure of the lower copy ones.
This gives rise to more structure in the super-commutant algebra -- not only should it be closed under operator multiplication, but it should also be closed under these more generalized operations.
This structure can be exploited while solving for the super-commutant of a set of gates.
For example, consider the conserved superoperator $\mc Q_1=\sum_{n=1}^{2L}\oketbra{\gamma_n}{\gamma_n}$ for the matchgate system of Eq.~\eqref{eq:matchgate-gens}, whose presence in the super-commutant $\scomm$ indicates that Majorana fermion $\gamma_n$ operators are mapped to other Majorana fermion operators when evolved through matchgates; since $\mc Q_1$ is a projector, under a normal product $\mc Q_1\cdot \mc Q_1=\mc Q_1$, but through the correspondence with the two-copy commutant discussed above, we get $\widetilde{\mc Q}_1\cdot \widetilde{\mc Q}_1=\widetilde{\mc Q}_2$ where
\begin{equation}
    \mc Q_2 = \sum_{n,m=1}^{2L}\oketbra{\gamma_n\gamma_m}{\gamma_m\gamma_n} = 2L\oketbra{\1}{\1} -2\sum_{n<m}\oketbra{\gamma_n\gamma_m}{\gamma_n\gamma_m}.
\end{equation}
Since $\oketbra{\1}{\1}$ is always present in $\scomm$, the presence of $\mc Q_2$ in the super-commutant implies that Majorana strings of length $2$ also preserve their length; by raising $\widetilde{\mc Q}_1$ to higher powers, one can ultimately show that the length of Majorana strings is preserved for all lengths, using no assumptions beyond the fact that Majorana strings of length one preserve their length.


\subsection{Proof of weak non-universality in the $U(1)$-symmetric case}\label{sec:app-nplusoneproof}
In this section we provide a proof of Eq.~\eqref{eq:U1supercomm}, which states that the set of $2$-local $U(1)$-symmetric gates of Eq.~\eqref{eq:U1generators} is weakly non-universal.
To show that a given set of gates is weakly non-universal, one must prove that all superoperator symmetries are linear combinations of superoperators of type (i) or (ii) in Eq.~\eqref{eq:forsure}.
These two sets of superoperators are generally non-orthogonal, and can actually have a non-zero intersection,\footnote{We will associate the letter $e$ to superoperators {that can be naturally interpreted to be} of type (i), the letter $\eta$ to superoperators {that can be naturally interpreted to be} of type (ii), and the letter $s$ to superoperators {that can be naturally interpreted to be either} of {the} types.} i.e. there can be superoperators that can be interpreted to be of both types.
We start the section by proving Lemma~\ref{lemmaU1proof} in order to deal with this issue. 
Otherwise, the main ingredient in the proof of weak non-universality is showing that in a system of length $L$, a superoperator symmetry cannot be composed by ``gluing together'' superoperators of different types on smaller sub-chains.
In particular, in the $U(1)$-symmetric case we show this by considering the $Z_jZ_{j+1}$ generator and by exploiting permutational invariance of the super-commutant.
\begin{lemma}\label{lemmaU1proof}
    {Consider} a finite-dimensional Hilbert space $V$ and three orthogonal subspaces $E,N,S\subseteq V$,  and a unitary group action $g\in G$ on $V$ such that:
    \begin{equation}
        \forall g: g\cdot E\perp N.
    \end{equation}
    Then if $\ket v\in E\oplus N\oplus S$ is invariant under the group action (i.e. $g\ket v=\ket v$), then it can be decomposed as
    \begin{equation}\label{eq:thesislemmaD7}
        \ket v = \ket{\mc E'} + \ket{\mc N'}, \quad\mathrm{where}\quad \ket{\mc E'}\in E\oplus S, \quad \ket{\mc N'}\in N\oplus S,
    \end{equation}
    and where both $\ket{\mc E'}$ and $\ket{\mc N'}$ are invariant {under $g$, i.e., $g\ket{\mc E'} = \ket{\mc E'}$ and $g\ket{\mc N'} = \ket{\mc N'}$}.
\end{lemma}
\begin{proof}
Let us define the orbit spaces
\begin{equation}
    E'=\mrm{span}\{g\ket{\mc E}\ \mrm{s.t.}\ g\in G, \ket{\mc E}\in E\},\;\;N'=\mrm{span}\{g\ket{\mc N}\ \mrm{s.t.}\ g\in G, \ket{\mc N}\in N\}.
\end{equation}
Since $\1\in G$, we have that $E\subseteq E'$ and $N\subseteq N'$.
$E'$ and $N'$ are orthogonal, because:
\begin{equation}
    \bigg(c'_1\bra{\mc N_1}{{g'_1}\+}+c'_2\bra{\mc N_2}{{g'_2}\+}\bigg)\bigg(c_1g_1\ket{\mc E_1}+c_2 g_2\ket{\mc E_2}\bigg) = \sum_{i,j=1,2} c'_ic_j \bra{\mc N_i}({{g'_i}\+}g_j)\ket{\mc E_j} = 0,
\end{equation}
{where we have used that the action of the element ${g'_i}\+ g_j$ (an element of $G$) on any vector in $E$ is orthogonal to $N$.}
We define $S'$ to be the orthogonal complement of $E'\oplus N'$ in $V$, so that $V=E'\oplus N'\oplus S'$.
Since these three subspaces in this decomposition are invariant under the group action of $G$, any invariant vector $\ket v\in V$ can be decomposed into:
\begin{equation}
    \ket v = \ket{\mc E'} + \ket{\mc N'}+ \ket{\mc S'}, \quad\mathrm{where}\quad \ket{\mc E'}\in E', \quad \ket{\mc N'}\in N',\quad \ket{\mc S'}\in S',
\end{equation}
and where $\ket{\mc E'}$, $\ket{\mc N'}$, and $\ket{\mc S'}$ are all invariant under the action of $G$.
If we suppose now that $\ket v\in E\oplus N\oplus S$, we can verify that in the decomposition above
\begin{equation}
    \ket{\mc E'}\in E\oplus S,\quad \ket{\mc N'}\in N\oplus S,\quad \ket{\mc S'}\in S,
\end{equation}
since $E'\perp N$, $N'\perp E$, and $S'\perp E\oplus N$. This proves the expression in Eq.~\eqref{eq:thesislemmaD7}, since the vector $\ket{\mc S'}$ can be absorbed into either $\ket{\mc E'}$ or $\ket{\mc N'}$.
\end{proof}
\begin{theorem}
    The set of generators in Eq.~\eqref{eq:U1generators} is weakly non-universal, i.e. its super-commutant is given by Eq.~\eqref{eq:U1supercomm}.
\end{theorem}
\begin{proof}
    We consider a spin-$\frac{1}{2}$ chain of length $L$, and its 4-copy superoperator Hilbert space $V=\mc H\ot\mc H^*\ot\mc H^*\ot\mc H$.
    On each site of the 4-copy chain, we can build a 16-dimensional spin basis $\{\ket{\sigma_1\sigma_2\sigma_3\sigma_4}_4\}_{\sigma_i=\uparrow,\downarrow}$. The super-commutant corresponds to the set of states which are invariant under the action of $U\ot U^*\ot U^*\ot U$ for any $U=\exp(i\theta H)$ where $H\in\dliex{\gen_{U(1)}}$ (see Lemma~\ref{lem1}).
    To show that the super-commutant is minimal we start by considering some particular choices of $U$ to constrain the space of superoperator symmetries:
    \begin{itemize}
        \item The local basis can be restricted by considering the generator $Z_j$ for a given site $j$.
        The set of one-site superoperators which is invariant under the action generated by choosing $U=U_{Z_j}(\theta)=e^{i\theta Z_j}$ is spanned by the orthonormal basis
        \begin{equation}\label{eq:u1scommbasis}
            \ket{s_0}_4=\ket{\downarrow\downarrow\downarrow\downarrow}_4\quad
            \ket{s_1}_4=\ket{\uparrow\uparrow\uparrow\uparrow}_4\quad
            \ket{e_0}_4=\ket{\downarrow\downarrow\uparrow\uparrow}_4\quad
            \ket{e_1}_4=\ket{\uparrow\uparrow\downarrow\downarrow}_4\quad
            \ket{\eta_0}_4=\ket{\downarrow\uparrow\downarrow\uparrow}_4\quad
            \ket{\eta_1}_4=\ket{\uparrow\downarrow\uparrow\downarrow}_4.
        \end{equation}
        Here $e$-states are invariant under $U_{Z_j}(\theta_1)\ot U_{Z_j}(\theta_1)^*\ot U_{Z_j}(\theta_2)^*\ot U_{Z_j}(\theta_2)$ for any choice of angles, $\eta$-states are invariant under $U_{Z_j}(\theta_1)\ot U_{Z_j}(\theta_2)^*\ot U_{Z_j}(\theta_1)^*\ot U_{Z_j}(\theta_2)$, and $s$-states are invariant under both.
        We will decompose superoperators in strings of these basis elements:
        \begin{equation}\label{eq:perminvbasis}
            \ket{\mc Q}_4=\sum_{\alpha^{(j)}\in\{e,s,\eta\}}\sum_{n^{(j)}\in\{0,1\}}c_{\alpha^{(1)}_{n^{(1)}}...\alpha^{(L)}_{n^{(L)}}}\ket{\alpha^{(1)}_{n^{(1)}}...\alpha^{(L)}_{n^{(L)}}}_4
        \end{equation}
        \item By choosing $U=\exp(i\frac{\pi}{4}(\1_j\1_{j+1}-X_jX_{j+1}-Y_jY_{j+1}-Z_jZ_{j+1}))$, which is the swap operator {between states on sites $j$ and $j+1$}, we find that the super-commutant must be permutationally invariant.
        Therefore if a given string of basis elements appears in the decomposition of a superoperator conserved quantity {such as in Eq.~\eqref{eq:perminvbasis}}, then all its permutations also appear with the same coefficient.
        \item 
        Finally we use the generator $Z_jZ_{j+1}$ to prove that in the decomposition of a superoperator symmetry, no strings can contain \textit{both} $e$-states and $\eta$-states.
        {We first observe that} due to permutational invariance, if a conserved superoperator has a string containing an $e$-state as well as an $\eta$-state in its decomposition, then it also contains a string where the $e$-state and the $\eta$-state are on neighboring sites $j$ and $j+1$.
        {Then} if we act on this string by {$U \otimes U^\ast \otimes U^\ast \otimes U$} with $U=\exp(i\frac{\pi}{4} Z_jZ_{j+1})$ we will obtain an overall minus sign (for any choice of $e$-state and $\eta$-state), {which} implies that the original superoperator could not be a symmetry, and therefore shows that superoperator symmetries only contain strings composed of combinations of $e/s$-states or $\eta/s$-states.
    \end{itemize}
    With these considerations in mind, we have shown that $\scommx{U(1)}\subseteq E\oplus N\oplus S$, where $E$ (resp. $N$) is spanned by all states of the form Eq.~\eqref{eq:perminvbasis} where $\alpha^{(j)}$ is chosen from $\{s,e\}$ (resp. $\{s,\eta\}$) such that not all $\alpha^{(j)}$ are equal to $s$, and where $S$ is spanned by all states of the form Eq.~\eqref{eq:perminvbasis} where $\alpha^{(j)}\equiv s$.
    
    Another characterization of the super-commutant $\scomm$ is given by considering the projectors $\Pi_{j,j+1}$, which act on two sites $j,j+1$ of the 4-copy superoperator Hilbert space by projecting onto the two-site super-commutant (i.e. the super-commutant for a chain of length $L=2$), in particular we have that\begin{equation}\label{eq:characterizationofthesupercommutant}
        \ket{\mc Q}_4 \in \scommx{U(1)} \;\;\iff\;\; \Pi_{j,j+1}\ket{\mc Q}_4=\ket{\mc Q}_4\;\;\forall\;\;j\in\{1,...,L-1\}, 
    \end{equation}
    since the generators $\gen_{U(1)}$ are 2-local (an equivalent fact is also exploited in the MPS method described in App.~\ref{app:mpsmethod}).
    But since all the projectors $\Pi_{j,j+1}$ are all translations of each other, and since all superoperators $\ket{\mc Q}_4\in E\oplus N\oplus S$ are invariant under permutations:
    \begin{equation}
        \ket{\mc Q}_4\in\scommx{U(1)} \iff \Pi_{1,2}\ket{\mc Q}_4=\ket{\mc Q}_4.
    \end{equation}
    Having reduced the problem to a single projector, we can now consider the group $G=\{\1,\1-2\Pi_{1,2}\}$ and apply Lemma~\ref{lemmaU1proof} to these subspaces $E$, $N$, $S$.
    {First, we show that the action of the group on $E$ is orthogonal to $N$, which can be seen by writing} a (non-orthogonal) basis for the two-site super-commutant in terms of the states in Eq.~\eqref{eq:u1scommbasis} as:
    \begin{equation}
    \left\{\begin{gathered}\label{eq:2sitestatesU1}
            \frac{\ket{s_0s_1}+\ket{s_1s_0}+\ket{\alpha_0\alpha_1}+\ket{\alpha_1\alpha_0}}{2},\\
            \frac{\ket{\alpha_0s_0}+\ket{s_0\alpha_0}}{\sqrt 2}, \qquad \frac{\ket{\alpha_1s_0}+\ket{s_0\alpha_1}}{\sqrt 2}, \qquad \frac{\ket{\alpha_0s_1}+\ket{s_1\alpha_0}}{\sqrt 2}, \qquad \frac{\ket{\alpha_1s_1}+\ket{s_1\alpha_1}}{\sqrt 2},\\
            \ket{s_0s_0},\qquad\ket{s_1s_1},\qquad \ket{\alpha_0\alpha_0},\qquad \ket{\alpha_1\alpha_1}
        \end{gathered}\right\}_{\alpha=e,\eta}
    \end{equation}
    Given a state $\ket{\mc E}_4\in E$, we can show that projection $\Pi_{1,2}\ket{\mc E}_4\perp N$ by performing a decomposition of $\ket{\mc E}_4$ between sites $j\in\{1,2\}$ and $j\in\{3,...,L\}$:
    \begin{equation}
        \ket{\mc E}_4 = \sum_i c_i \ket{i}_4^{(1,2)}\ot\ket{i}_4^{(3,...,L)}\;\;\implies\;\; \Pi_{1,2}\ket{\mc E}_4 = \sum_i c_i \left(\Pi_{1,2}\ket{i}_4^{(1,2)}\right)\ot\ket{i}_4^{(3,...,L)}.
    \end{equation}
    Here we choose the basis $\{\ket{i}_4^{(1,2)}\}$ to be $\{|\alpha_{n^{(1)}}^{(1)}\alpha_{n^{(2)}}^{(2)}\rangle\}$ where $\alpha^{(1)},\alpha^{(2)}\in\{s,e\}$ and $n^{(1)},n^{(2)}\in\{0,1\}$.
    If both $\alpha^{(1)}$ and $\alpha^{(2)}$ are $s$, then every string in the decomposition of $\ket{i}_4^{(3,...,L)}$ will contain at least one $e$-state, since $\ket{\mc E}_4\in E$ is orthogonal to $S$.
    If instead at least one of the two $\alpha^{(j)}$ are $e$, then $\ket{i}_4^{(1,2)}$ will be orthogonal to all $(\alpha=\eta)$-states in Eq.~\eqref{eq:2sitestatesU1}, and therefore $\Pi_{1,2}\ket{i}_4^{(1,2)}$ will be in the $(\alpha=e)$ subspace of Eq.~\eqref{eq:2sitestatesU1}, and will contain no $\eta$-states.
    In both cases, the state $\left(\Pi_{1,2}\ket{i}_4^{(1,2)}\right)\ot\ket{i}_4^{(3,...,L)}$ is orthogonal to $N$, and so we have shown that $\ket{\mc E}_4\in E$.
    
    Then Lemma~\ref{lemmaU1proof} then implies that
    \begin{equation}
        \begin{gathered}
            \mrm{If}\ \ \ket{\mc Q}_4\in E\oplus N\oplus S\ \ \mrm{is\ s.t.\ \ }\Pi_{1,2}\ket{\mc Q}_4=\ket{\mc Q}_4,\ \mrm{then}\\
            \ket{\mc Q}_4 = \ket{\mc E'}_4+\ket{\mc N'}_4\ \ \mrm{where}\ \ 
            \begin{cases}
                \Pi_{1,2}\ket{\mc E'}_4=\ket{\mc E'}_4,\quad \ket{\mc E'}_4\in E\oplus S,\\
                \Pi_{1,2}\ket{\mc N'}_4=\ket{\mc N'}_4,\quad \ket{\mc N'}_4\in N\oplus S.
            \end{cases}
        \end{gathered}
    \end{equation}
    Let us now call $\Pi_{1,2}^{(e)}$ and $\Pi_{1,2}^{(\eta)}$ the two-site projector on the states of Eq.~\eqref{eq:2sitestatesU1} with $\alpha=e$ and $\alpha=\eta$ respectively.\footnote{Note that $\Pi_{1,2}\neq \Pi_{1,2}^{(e)}+\Pi_{1,2}^{(\eta)}$, since the states $\ket{s_0s_0}$ and $\ket{s_1s_1}$ are in the image of both projectors, and since the states $\frac{\ket{s_0s_1}+\ket{s_1s_0}+\ket{\alpha_0\alpha_1}+\ket{\alpha_1\alpha_0}}{2}$ for $\alpha=e$ and $\alpha=\eta$ are not orthogonal.}
    Since strings containing $\eta$-states are orthogonal to $E\oplus S$ and strings containing $e$-states are orthogonal to $N\oplus S$, in the above decomposition:
    \begin{equation}
        \Pi_{1,2}\ket{\mc E'}_4=\ket{\mc E'}_4 \iff \Pi_{1,2}^{(e)}\ket{\mc E'}_4=\ket{\mc E'}_4,\qquad \Pi_{1,2}\ket{\mc N'}_4=\ket{\mc N'}_4 \iff \Pi_{1,2}^{(\eta)}\ket{\mc N'}_4=\ket{\mc N'}_4.
    \end{equation}
    Due to permutational symmetry, we also have that $\ket{\mc E'/\mc N'}_4$ are invariant under all $\Pi_{j,j+1}^{(e/\eta)}$.
    But with reference to Eq.~\eqref{eq:forsure}, the two-site projector $\Pi_{j,j+1}^{(e)}$ is exactly the projector on superoperators of type (i), while the two-site projector $\Pi_{j,j+1}^{(\eta)}$ is exactly the projector on superoperators of type (ii).
    This can be shown explicitly by choosing $Q_1$ and $Q_2$ in Eq.~\eqref{eq:forsure} to be elements of the $L=2$ projector basis $\{\Pi_0,\Pi_1,\Pi_2\}$ (defined above Eq.~\eqref{eq:U1basis}).
    In other words, $\Pi_{j,j+1}^{(e)}$ projects on $\{Q_1 \ot Q_2^T\}_{Q_1,Q_2\in\commx{U(1)}}$ and $\Pi_{j,j+1}^{(\eta)}$ projects on $\{\oketbra{Q_2}{Q_1}\}_{Q_1,Q_2\in\commx{U(1)}}$. Therefore $\Pi_{j,j+1}^{(e)}$ can be factorized into the product of two commuting projectors:\begin{equation}\label{eq:projdecprod}
    \Pi_{j,j+1}^{(e)} = \bigg(\Pi_{j,j+1}^{\commx{U(1)}}\ot\1\bigg)\bigg(\1\ot\left(\Pi_{j,j+1}^{\commx{U(1)}}\right)^T\bigg)
    \end{equation}
    where $\Pi_{j,j+1}^{\commx{U(1)}}$ is the superoperator which projects onto the two-site commutant $\commx{U(1)}$ (which is a subspace of $\hend=\mc H\ot\mc H^*$). In the decomposition of the superoprator Hilbert space $V=\mc H\ot\mc H^*\ot\mc H^*\ot\mc H$, the first factor in the product of Eq.~\eqref{eq:projdecprod} acts non-trivially only on Hilbert spaces 1 and 2, while the second factor acts non-trivially only on Hilbert spaces 3 and 4.  A similar factorization is also possible for $\Pi_{j,j+1}^{(\eta)}$, but with the first factor acting on Hilbert spaces 1 and 3, and the second factor acting on Hilbert spaces 2 and 4.
    
    Finally, since $\Pi_{j,j+1}^{(e)}\ket{\mc E'}_4=\ket{\mc E'}_4$ and $\Pi_{j,j+1}^{(\eta)}\ket{\mc N'}_4=\ket{\mc N'}_4$ for all $j\in\{1,...,L-1\}$, and since these projectors can be decomposed into two projectors onto the commutant $\commx{U(1)}$ as described above, and since similarly to Eq.~\eqref{eq:characterizationofthesupercommutant}
    \begin{equation}
        \oket{Q}\in\commx{U(1)}\;\;\iff\;\; \Pi_{j,j+1}^{\commx{U(1)}}\oket{ Q}=\oket{ Q}\;\;\forall\;\;j\in\{1,...,L-1\}, 
    \end{equation}
    we have that $\ket{\mc E'}_4$ is globally of the form Eq.~\eqref{eq:forsure}-(i) and  $\ket{\mc N'}_4$ is globally of the form Eq.~\eqref{eq:forsure}-(ii), thus showing that the super-commutant $\scommx{U(1)}$ is as described in Eq.~\eqref{eq:U1supercomm}.
\end{proof}


\subsection{Remark on the Compactness of the Generated Set of Unitaries}\label{app:compact}
In Sec.~\ref{sec:superoperatoralgebra} we have briefly discussed the possibility that for a given set of gates $\gen$, the set of generated unitaries $\unit$ is not compact.
We have noted that if one chooses each generator $h_\alpha$ so that it has rational spectrum, then the subgroup generated by exponentiating the DLA will be compact, and therefore all statements about non-universality will also hold when discussing \textit{approximate} non-universality.
This result follows from Theorem 2 in Sec.~3.3 of Ref.~\cite{onishchik1990lie}.

The set of all unitaries that can be either generated exactly or approximately (with arbitrary precision) is the topological closure of the set $\unit$ within the group of all unitaries.
The closure is a compact Lie group, and its algebra contains $\dlie$ as a subalgebra and can be generated through a set of generators $\gen'=\{h_\alpha'\}$ with rational spectrum.
Therefore if the gate set of interest $\gen$ contains some generators which produce non-compact one-parameter subgroups $u_\alpha(\theta)$, its closure may be studied by identifying a minimal gate set $\gen'$ such that all generators $h_\alpha\in\gen$ can be obtained as linear combinations of $h_\alpha'\in\gen'$, where all the generators $h_\alpha'$ have rational spectrum.
As a simple example of this, consider $\gen=\{h=\smqty(1&\\&\varphi)\}$ where $\varphi$ is irrational. The set of generated unitaries $\unit$ is one-dimensional, but can approximate with arbitrary precision any diagonal unitary matrix; its closure is obtained by exponentiating the Lie algebra of diagonal matrices, generated by the set $\gen'=\{h_1=\smqty(1&\\&0),h_2=\smqty(0&\\&1)\}$, which can be obtained as a way of writing $h=h_1+\varphi h_2$ as a linear combination of generators with rational spectrum.
However, note that rational spectra are not absolutely necessary for compactness, and in the presence of generators $h_\alpha\in\gen$ that produce non-compact one-parameter subgroups, $\gen$ can still generate a group $\unit$ that is compact.
\section{Brownian circuits}\label{sec:brown}
We now review Brownian circuits which we use in Sec.~\ref{sec:physicalimplications}.
These have been studied in many earlier works~\cite{lashkari2013towards, bauer2017stochastic,  sunderhauf2019quantum, xu2019locality, ogunnaike2023unifying, moudgalya2023symmetries, vardhan2024entanglement}, and are defined as follows.
Starting from a given set of generators $\gen=\{h_\alpha\}$, we define a Brownian circuit as a time-dependent Hamiltonian
\begin{equation}
    H(t)=\sum_{\alpha} J_\alpha(t) h_\alpha\label{eq:brownhamdef}
\end{equation}
where $\{J_\alpha(t)\}$ are i.i.d. Brownian random variables with $\langle J_\alpha(t)J_{\alpha'}(t')\rangle = 2\kappa\,\delta_{\alpha\alpha'}\,\delta(t-t')$.
By performing the statistical average, the averaged time-evolution operator takes the simple form of imaginary-time evolution of an effective positive semidefinite Hamiltonian
\begin{equation}
        \overline{U(t)} = \overline{ \mrm{T}\{e^{-i\int_0^t H(t')\,\dd t'}\}}=e^{-\kappa Pt},\;\;\;
        P\defeq\sum_\alpha h_\alpha^2\geq 0.
\end{equation}
At long times the average evolution operator $e^{-\kappa Pt}$ will approach the projector onto the ground states of the effective Hamiltonian $P$.

As discussed in Sec.~\ref{sec:physicalimplications}, many physical quantities can be understood in terms of multiple copies of the system.
The associated many-copy Hamiltonians are linear functions of the original single-copy Hamiltonians, hence the many-copy evolution of a Brownian system will also be Brownian.
For example the Heisenberg evolution of operators -- associated to the two-copy space $\mc H\ot\mc H$ -- is given by the Hamiltonian
\begin{equation}
    \ad{H(t)} = \sum_\alpha J_\alpha(t) \ad{h_\alpha},
\end{equation}
which leads to the averaged behavior of the two copies of the system to be of the form
\begin{equation}
    \overline{U(t) \otimes U^\ast(t)} = e^{-\kappa P_2 t},\;\;P_2=\sum_\alpha \ad{h_\alpha}^2.
\end{equation}
The ground state space of $P_2$ is the set of operators that are annihilated by each of the $\ad{h_\alpha}$, i.e. the commutant algebra $\comm$~\cite{moudgalya2023symmetries}, and one can show that symmetries predict the asymptotic value of two-point correlation functions.
In a similar fashion, at the level of four copies of the system, the average evolution of a superoperator is governed by the effective Hamiltonian defined as
\begin{equation}\label{eq:supercommutant-hamiltonian}
    \overline{U(t) \otimes U^\ast(t) \otimes U^\ast(t) \otimes U(t)} = e^{-\kappa P_4 t}\;\;\;P_4=\sum_\alpha \left(\ad{h_\alpha}\ot\1-\1\ot\ad{h_\alpha}^T\right)^2.
\end{equation}
The ground states of $P_4$ are exactly the superoperators in the super-commutant $\scomm$, and Eq.~\eqref{eq:otoc-projector} immediately follows.
Finally, for completeness, we note that the time evolution of a generic $k$-copy observables is given by:
\begin{equation}
    \begin{gathered}
        \overline{(U(t)\ot U^*(t))^{\ot k}} = \overline{\mrm{T}\{e^{-i\int_0^t H_{2k}(t')\,\dd t'}\}} = e^{-\kappa P_{2k}t},\qquad H_{2k}(t)\defeq\sum_\alpha J_\alpha(t)\mc L_{h_\alpha}^{(k)},\qquad
        P_{2k}\defeq\sum_\alpha \left(\mc L_{h_\alpha}^{(k)}\right)^2,\\
        \mc L_{h_\alpha}^{(k)}=\sum_{l=1}^k \1^{\ot 2(l-1)}\ot (h_\alpha\ot\1-\1\ot h_\alpha^T)\ot \1^{\ot 2(k-l)}.
    \end{gathered}
\end{equation}

\end{document}